\pgfplotsset{compat=1.14}
\definecolor{NavyBlue}{rgb}{0.0, 0.0, 0.5}
\definecolor{OliveGreen}{rgb}{0.33, 0.42, 0.18}
\definecolor{niceRed}{rgb}{0.825, 0.248, 0.248}
\definecolor{niceBlue}{rgb}{0.248, 0.248, 0.825}
\definecolor{niceGreen}{rgb}{0.125, 0.6, 0.125}
\definecolor{def_color_frame}{RGB}{220,230,242}
\colorlet{def_color_back}{def_color_frame!30}
\definecolor{def_color_text}{RGB}{37,64,97}
\newtcolorbox[auto counter]{pabox}[2][]{%
	colback=def_color_back,colframe=def_color_frame,fonttitle=\bfseries,coltitle=def_color_text,breakable,title=#2,#1}
\def\H{ {\cal H} }
\def\L{ {\cal L} }
\def\>{\rangle}
\def\<{\langle}
\newcommand{\bra}[1]{\langle {#1} |}
\newcommand{\ket}[1]{| {#1} \rangle}
\newcommand{\ketbra}[2]{\ensuremath{\left|#1\right\rangle\!\!\left\langle#2\right|}}
\newcommand{\tr}[1]{\mathrm{Tr}\left( #1 \right)}
\renewcommand{\v}[1]{\ensuremath{\boldsymbol #1}}
\definecolor{ppblue}{RGB}{46,117,182}
\definecolor{ppred}{RGB}{197, 90, 17}
\theoremstyle{plain}
\newtheorem{thm}{Theorem}
\newtheorem{lem}[thm]{Lemma}
\newtheorem{cor}[thm]{Corollary}
\theoremstyle{definition}
\newtheorem{defn}{Definition}
\definecolor{tikzBlue}{rgb}{0.6941176470588235,0.7568627450980392,0.8588235294117647}
\definecolor{tikzOrange}{rgb}{0.9294117647058824,0.7647058823529411,0.49019607843137253}
\definecolor{tikzBlue2}{rgb}{0.462745098,0.504575163,0.57254902}
\definecolor{tikzOrange2}{rgb}{0.619607843,0.509803922,0.326797386}
\definecolor{tikzGray}{rgb}{0.7529411764705882,0.7529411764705882,0.7529411764705882}
\DeclareFontFamily{U}{mathb}{\hyphenchar\font45}
\DeclareFontShape{U}{mathb}{m}{n}{
	<-6> mathb5 <6-7> mathb6 <7-8> mathb7
	<8-9> mathb8 <9-10> mathb9
	<10-12> mathb10 <12-> mathb12
}{}
\DeclareSymbolFont{mathb}{U}{mathb}{m}{n}
\DeclareMathSymbol{\llcurly}{\mathrel}{mathb}{"CE}
\DeclareMathSymbol{\ggcurly}{\mathrel}{mathb}{"CF}
\renewcommand*{\thefootnote}{\fnsymbol{footnote}}
\begin{document}

\title{Continuous thermomajorization and a complete set of laws \\for Markovian thermal processes}

\author{Matteo Lostaglio$^{*,}$}
\affiliation{Korteweg-de Vries Institute for Mathematics and QuSoft, University of Amsterdam, The Netherlands}
\affiliation{QuTech, Delft University of Technology, P.O. Box 5046, 2600 GA Delft, The Netherlands}

\author{Kamil Korzekwa$^{*,}$}
\affiliation{Faculty of Physics, Astronomy and Applied Computer Science, Jagiellonian University, 30-348 Krak\'{o}w, Poland}

\footnotetext{These authors contributed equally to this work. \\ Emails: lostaglio@protonmail.com, korzekwa.kamil@gmail.com}

\begin{abstract}
	The standard dynamical approach to quantum thermodynamics is based on Markovian master equations describing the thermalization of a system weakly coupled to a large environment, and on tools such as entropy production relations. Here we develop a new framework overcoming the limitations that the current dynamical and information theory approaches encounter when applied to this setting. More precisely, we introduce the notion of continuous thermomajorization, and employ it to obtain necessary and sufficient conditions for the existence of a Markovian thermal process transforming between given initial and final energy distributions of the system. These lead to a complete set of generalized entropy production inequalities including the standard one as a special case. Importantly, these conditions can be reduced to a finitely verifiable set of constraints governing non-equilibrium transformations under master equations. What is more, the framework is also constructive, i.e., it returns explicit protocols realizing any allowed transformation. These protocols use as building blocks elementary thermalizations, which we prove to be universal controls. Finally, we also present an algorithm constructing the full set of energy distributions achievable from a given initial state via Markovian thermal processes and provide a \texttt{Mathematica} implementation solving $d=6$ on a laptop computer in minutes.		 
\end{abstract}

\maketitle
\renewcommand{\thefootnote}{\arabic{footnote}}
\setcounter{footnote}{0}


\section{Introduction}
\label{sec:intro}

It is well-known that thermodynamics can be formulated in the resource theory language of information theory~\cite{lostaglio2019introductory, janzing2000thermodynamic, horodecki2013fundamental, brandao2013second, ng2018resource, vinjanampathy2016quantum}. Since it focuses only on input-output relations under a class of quantum operations, the resource theory is unable to discuss how the process is realized in time, nor does it involve the notions of entropy production or master equations, even though the latter are commonplace in standard approaches. As such, the standard and the resource-theoretic frameworks are for the most part disconnected. Somewhat surprisingly, despite all its powerful theorems, the resource theory has not affected the analysis of a thermodynamics practitioner following the more explicit dynamical approaches based on the master equation formalism~\cite{kosloff2013quantum}. 

The aim of this work is to overcome this situation by showing how to unify the master equations and information theory tools. We first summarize the basic notions of both approaches to thermodynamics in Sec.~\ref{sec:setting}, and demonstrate the insufficiency of each of them, when taken separately, to capture relevant thermodynamic constraints. Then, in Sec.~\ref{sec:approach}, we propose a hybrid approach that can overcome these limitations. In constructing our solution we highlight the necessity of finding a \emph{finitely verifiable} set of thermodynamic laws specifying when one state can be thermodynamically transformed into another one. Furthermore, we wish to go beyond the question of \emph{whether} a thermodynamic transformation exists, and ask \emph{how} it should be realized. The resource theory approaches are for the most part silent about the latter, which is arguably a central obstacle to their application to concrete problems. Our framework, on the contrary, will be \emph{constructive}. 

In Sec.~\ref{sec:thermomajo} we introduce our main novel technical tool, \emph{continuous thermomajorization}, which extends the concept of thermomajorization introduced in Refs.~\cite{ruch1978mixing,horodecki2013fundamental}. We prove that this partial order of energy distributions provides necessary and sufficient conditions for the existence of a thermalization process generated by a Markovian master equation. More precisely, we show that an out of equilibrium energy distribution can be transformed into another one by a Markovian thermal process if and only if the former continuously thermomajorizes the latter. We then connect with the notion of entropy production~\cite{landi2021irreversible} in Sec.~\ref{sec:complete_set}, where we show that continuous thermomajorization allows one to identify a \emph{complete} set of generalized entropy production relations, including the standard one as a special case.
	
Our ultimate goal, however, is to find a finitely-verifiable set of such conditions. To achieve this, we first prove in Sec.~\ref{sec:controls} that \emph{elementary thermalizations}, generated by simple reset master equations on two-level submanifolds, form universal controls in the Markovian regime. In other words, every state transformation that can be achieved by a Markovian thermal process can also be achieved by a sequence of elementary thermalizations. Employing this simplified set of controls, in Sec.~\ref{sec:second_laws} we show that continuous thermomajorization indeed can be checked in a finite number of steps. Remarkably, we can also return the exact sequence of elementary controls required to realize any allowed transformation. As a final result of our paper, in Sec.~\ref{sec:algorithm} we provide an explicit algorithm verifying the continuous thermomajorization relation between any two vectors, and offer a Mathematica implementation~\cite{korzekwa2021continuous} that checks the conditions for systems of dimension up to $7$ in a matter of hours on a standard laptop computer. 

While this work focuses on developing the technical machinery, in the accompanying paper~\cite{korzekwa2022optimizing} we illustrate how one can employ it to design provably optimal thermodynamic protocols. There, we apply our results to study the effects of memory on work fluctuations, to explicitly construct optimal cooling protocols, and to showcase the important role played by catalysts in practical scenarios. We also report on a recent work which used our results to show that non-Markovianity boosts the efficiency of thermal bio-molecular switches~\cite{spaventa2021non}. All these build up encouraging evidence that the framework is suitable both for providing model-independent bounds, as well as for algorithmically constructing new thermodynamics protocols when a complete analysis is unattainable by either analytic or numerical methods. 


\section{Setting the scene}
\label{sec:setting}


\subsection{Thermodynamic frameworks}
\label{sec:frameworks}


\subsubsection{The traditional approach}
\label{sec:setting_standard}

A general open dynamics of a $d$-dimensional quantum system is described by a quantum channel, i.e., a completely positive trace-preserving map acting on the quantum state $\rho$~\cite{nielsen2010quantum}. However, in a thermodynamic setting, we are often interested in the evolution of such a quantum system interacting with a large thermal bath at inverse temperature~$\beta = 1/(k_B T)$, where $k_B$ is Boltzmann constant and $T$ is the temperature of the bath. Typical microscopic derivations employing the weak coupling limit~\cite{breuer2002open,kosloff2013quantum, alicki2018introduction} then lead to a master equation with the general form~\cite{gorini1976completely,lindblad1976generators}:
\begin{equation}
	\label{eq:master}
	\frac{d \rho(t)}{dt}  = \H(\rho(t)) + \mathcal{L}_t(\rho(t)).
\end{equation}
The first term, $\H$, is the generator of a closed (reversible) quantum dynamics, 
\begin{equation}
	\H(\rho)=- i [H, \rho],
\end{equation}
with $[\cdot,\cdot]$ denoting the commutator, $[A,B] = AB - BA$, and $H$ being the (dressed) Hamiltonian of the system, which we assume to be time-independent. The second term, $\mathcal{L}_t$, is known as the \emph{Lindbladian} or \emph{dissipator} and generates an open (irreversible) quantum dynamics. It has the following general form
\begin{equation}
	\label{eq:lindbladian}
	\!\!\!\mathcal{L}_t(\rho) = \sum_{i} r_i(t) \left(\! L_i(t) \rho L_i(t)^\dag - \frac{1}{2}\{L_i(t)^\dag L_i(t), \rho\} \!\right)\!,\!
\end{equation} 
with $\{\cdot,\cdot\}$ denoting the anticommutator, \mbox{$\{A,B\} = AB + BA$}, $L_i(t)$ being time-dependent \emph{jump operators}, and $r_i(t)$ being time-dependent non-negative \emph{jump rates}.

While a general Lindbladian only requires the rates $r_i$ to be non-negative, Lindbladians arising from the interaction of a quantum system with a large heat bath have two standard properties~\cite{breuer2002open,kosloff2013quantum, alicki2018introduction}:
\begin{enumerate}[label=(P\arabic*)]
	\item\label{p1}\textbf{Stationary thermal state.} The Gibbs thermal state of the system, 
	\begin{equation}
		\label{eq:thermalstate}
		\gamma = \frac{e^{-\beta H}}{\tr{e^{-\beta H}}},
	\end{equation}
	is a stationary solution of the dynamics, i.e.,
	\begin{equation}
		\forall t: \quad \mathcal{L}_t\gamma = 0. 
	\end{equation} 
	\item\label{p2}\textbf{Covariance.} The Lindbladian $\L_t$ commutes with the generator of the Hamiltonian dynamics $\H$ at all times $t$, i.e.,
	\begin{equation}
		\forall \rho:\quad\mathcal{L}_t(\mathcal{H}(\rho)) = \mathcal{H}(\mathcal{L}_t(\rho)).
	\end{equation}
\end{enumerate}

For brevity, we will refer to the quantum dynamics generated by master equations in the form of Eq.~\eqref{eq:master} and satisfying properties \ref{p1}-\ref{p2} as \emph{Markovian thermal processes}:
\begin{defn}
	A channel $\mathcal{T}$ is a \emph{Markovian thermal process} (MTP) if it results from integrating a Markovian master equation, Eq.~\eqref{eq:master}, between time $0$ and $t_f \in [0, +\infty]$, where the Lindbladian $\mathcal{L}_t$ satisfies properties~\ref{p1}-\ref{p2}.
\end{defn}
These form a standard description of thermalization in the field of quantum thermodynamics and beyond (see, e.g., Sec. 3.1 of Ref.~\cite{alicki2018introduction}), and will be the main focus of this work. The most well-known and important constraint on the allowed thermodynamic transitions under MTP takes the form of a sort of $H$-theorem~\cite{alicki2018introduction}:
\begin{align}
	\frac{d \Sigma(t)}{dt} :=& -\frac{d}{dt} S(\rho(t)\| \gamma)\geq  0, \label{eq:entropy_prod} 
\end{align}
where 
\begin{equation}
	S(\rho\| \gamma) = \tr{\rho (\log \rho - \log \gamma)}
\end{equation}
is the quantum relative entropy and $d \Sigma/dt$ is known as the \emph{entropy production} (relative to $\gamma$)~\cite{spohn1978entropy}. One can recognize in this equation the standard second law of thermodynamics:
\begin{equation}
	\label{eq:standard_entropy_prod}
	\frac{d \Sigma(t)}{dt} =	\frac{dS(t)}{dt} - \beta J(t) \geq 0,  
\end{equation}
with
\begin{subequations}
	\begin{align}
		S(t) &:= - \tr{\rho(t) \log \rho(t)},\\
		J &:= \tr{H \frac{d\rho(t)}{dt}},
	\end{align}
\end{subequations}
being the von Neumann entropy and the heat current flowing to the system, respectively.  

A typical example of a Markovian thermal process is the quantum optical master equation~\cite{breuer2002open}, which describes the evolution of a two-dimensional quantum system with \mbox{$H = \frac{\hbar \omega}{2}(\ketbra{1}{1}-\ketbra{0}{0})$} interacting weakly with a thermal radiation field. The quantum optical master equation is of the form given in Eq.~\eqref{eq:master}, with the jump operators and the corresponding rates given by
\begin{subequations}
	\begin{align}
		L_1&=\ketbra{0}{1},\quad r_1(t)=r(N+1),&\\
		L_2&=\ketbra{1}{0},\quad r_2(t)=rN,&		
	\end{align}
\end{subequations}
where $N$ is the average number of photons at the resonant frequency $\omega$ and $r$ is the spontaneous emission rate. 

While it is not straightforward to characterize \emph{all} physical setups modelled via MTPs, they are commonplace. A crucial observation to be kept in mind is the following: MTPs constitute an \emph{effective model} that emerges after common approximations (such as the secular approximation, ignoring the Lamb shift) and in an appropriate frame (in general, a rotating frame). For a concrete example of such an emergence after relevant approximations, consider a typical setup of discrete quantum heat engines in the weak coupling regime. There, one assumes that the dissipators only operate for a given amount of time, and they are then suddenly switched to new ones according to some schedule. This is a special case of Eq.~\eqref{eq:lindbladian} where $L_i(t) \equiv L_i$, $r_i(t)$ are taken to be appropriate step functions and $H$ is (approximatively) constant after ignoring a small Lamb shift.\footnote{Just to make an example among many, Ref.~\cite{uzdin2015equivalence} involves this kind of ``implicit'' time dependence on the dissipators only (after rotating wave approximation). Often these dissipators are associated to baths at different temperatures coupled to different energy submanifolds, but nothing prevents us to consider the special case where the baths are at the same temperature.} These are not only a nice class of examples: as we show in Theorem~\ref{thm:universality}, these controls are sufficient to realize every transformation that can be achieved by a generic MTP.

Thus, any model that can be formally written as Eq.~\eqref{eq:master} with properties \ref{p1}-\ref{p2} (with or without time-dependence on $\mathcal{L}_t$) falls within the scope of this work. These include incoherent noise in quantum computers~\cite{campbell2017roads} and effective models describing fluorescence and other non-radiative decay channels in atoms, molecules and nanostructures. The formal equivalence between thermodynamic and other models of dissipation is well-known and it is in fact leveraged as a standard technique to realize effective heat baths~\cite{klatzow2019experimental}. In quantum information terms, depolarization and amplitude damping can be seen as limiting cases of Markovian thermal processes when \mbox{$\beta \rightarrow 0$} and \mbox{$\beta \rightarrow \infty$}, respectively. This means that, in principle, our results apply well-beyond the obvious thermodynamic scenarios. Here, we study Markovian thermal processes independently of what application one has in mind. In the accompanying paper~\cite{korzekwa2022optimizing}, we present examples of how the formalism can be applied in practice.

Having said that, there are of course several relevant scenarios that fall beyond the scope of MTPs. We will discuss the potential and need for further generalizations of the MTP framework after discussing the resource-theory approach.


\subsubsection{The resource-theoretic approach}
\label{sec:setting_resource}

The notion of a Markovian thermal process should be contrasted with the notion of a thermal operation~\cite{horodecki2013fundamental} or the closely related notion of a thermal process~\cite{gour2018quantum}, used in the resource theory of quantum thermodynamics: 

\begin{defn}
	\label{defn:tp}
Thermal processes (TP) are all channels $\mathcal{E}$ that satisfy the two conditions analogous to~\ref{p1}-\ref{p2}: 
\begin{subequations}
	\begin{align}
		\mathcal{E}(\gamma)& = \gamma,\\
		\forall \rho,t: \quad\mathcal{E}(e^{-iHt}\rho e^{iHt}) &= e^{-iHt} \mathcal{E}(\rho) e^{iHt},
	\end{align}	
\end{subequations}
with $\gamma$ defined in Eq.~\eqref{eq:thermalstate}.
\end{defn}
 Examples of thermal processes include all \emph{thermal operations}, i.e. dynamics induced by generic energy-preserving unitary interactions between the system and an environment~$E$ described by an \emph{arbitrary} Hamiltonian $H_E$ and prepared in a thermal Gibbs state $\gamma_E$ at fixed inverse temperature~$\beta$~\cite{brandao2011resource, brandao2013second, lostaglio2019introductory}. When such a transformation exists between the initial and final states, $\rho(0)$ and $\rho(t_f)$, we will write
\begin{equation}
\label{eq:tp}
\rho(0) \stackrel{ \textrm{TP}}{\longmapsto} \rho(t_f).
\end{equation}

The resource theory is concerned with giving necessary and sufficient conditions for Eq.~\eqref{eq:tp}. Because of the symmetries inherent in the thermodynamic processes under consideration, it is convenient to represent the state of the system $\rho(t)$ at time $t$ by a vector $\v{p}(t)$ of \emph{populations} (energy distributions) and a matrix $C(t)$ of \emph{coherences}, defined as
\begin{subequations}
	\begin{align}
	\label{eq:population_coherence1}
	p_i(t) &= \bra{E_i} \rho(t) \ket{E_i},\\
	C_{ij}(t) &= \bra{E_i} \rho(t)\ket{E_j} \quad\mathrm{for~} i \neq j,\label{eq:population_coherence2}
	\end{align} 
\end{subequations}
where $\ket{E_i}$ is the eigenstate of $H$ corresponding to energy $E_i$ (for simplicity we consider non-degenerate $H$). When $\rho(0)$ and $\rho(t_f)$ are diagonal (i.e., $C(0)=C(t_f)=0$), the conditions for Eq.~\eqref{eq:tp} are the well-known thermomajorization constraints of Ref.~\cite{horodecki2013fundamental}.\footnote{In fact, the case $C(0)= 0$, $C(t_f) \neq0$ is trivial, since these transformations are always forbidden.} The problem for general states was formally solved by the remarkable work in Ref.~\cite{gour2018quantum}, where an extremely complex but complete set of entropy conditions was derived. 

One can readily see, e.g. by taking a thermal operation with $E$ being a small environment, that the resource theory and the standard framework apply to different regimes. For a concrete and relevant example, consider a single qubit system and take the environment defining the thermal operation to be given by a single bosonic oscillator in resonance with it. Take the energy-preserving unitary interaction \mbox{$U(t) = \exp(-it H_{\rm int})$} with the interaction Hamiltonian from the Jaynes-Cummings model:
\begin{equation}
	H_{\rm int} = g(\ketbra{1}{0}\otimes a + \ketbra{0}{1}\otimes a^\dag),
\end{equation}
with $a^\dag$ and $a$ denoting the bosonic creation and annihilation operators, and $g$ being a coupling constant. No Markovian master equation like Eq.~\eqref{eq:master} can be derived. In fact, the dynamics is fully solvable, so one can compute $\rho(t)$ and verify that the entropy production relation from Eq.~\eqref{eq:entropy_prod} is violated due to non-Markovian effects. The resource theory approach, in allowing fine control over memory effects and system-bath correlations, does not incorporate the Markovianity condition of standard thermalization processes. 


\subsubsection{Potential and need for generalizations of the MTP model}

The scenarios that fall beyond the scope of MTPs are most prominently those involving an explicit time-dependence. Even then, however, our framework can still be useful, e.g., $4$-stroke engines have separate thermalization and unitary strokes, and our framework can be used to study the former steps (see the accompanying paper~\cite{korzekwa2022optimizing} where this strategy is explicitly used to study a cooling problem). Importantly, such engines with separate thermalization and unitary steps are thermodynamically universal in the weak coupling regime~\cite{uzdin2015equivalence}. Thus, in this regime, our framework allows one to explore the full potential of quantum heat engines.

However, if one wants to explicitly study thermalization dynamics with an external driving field, then using the MTP model may become problematic. Suppose that one simply allows in Eq.~\eqref{eq:master} an arbitrary time-dependence in $H$, even adding the extra constraint that zero net work is done to the system in (almost) any run. Then, the results of Ref.~\cite{perry2018sufficient} imply that these extended controls simulate arbitrary energy-preserving interactions with a thermal environment of any dimensionality, including generic non-Markovian effects (see also Refs.~\cite{vom2019reachability,schulte2020exploring} for related results). In other words, extending the controls in this way, we recover the standard resource-theoretic approach based on thermal processes discussed above, which \emph{does not} include the relevant constraints of the Markovian master equation approach to thermodynamics. 

Hence, one needs to be very careful in how time-dependence is introduced in the framework. If we are striving to relate the resource theory framework to the standard theory of open quantum systems via Markovian master equations, as we are doing here, we need a more restricted setting. We leave open the question of how to formally generalize MTP further (say, introducing slow driving) without dropping the core Markovianity constraint that we want to impose on the dynamics. The dialogue between resource theories and open quantum system dynamics initiated by this work is far from over.
 

\subsection{Insufficiency of current approaches}
\label{sec:insufficiency}

Quantum thermodynamics aims at deriving laws holding independently of the particular dynamics. In other words, based on minimal assumptions (such as assuming that the environment the system interacts with is thermal), one wants to constrain possible state transformations of the system. In what follows we discuss how both the resource-theoretic and the standard approaches fare in this regard, and highlight some important limitations that we wish to overcome with the present contribution. We will consider the stereotypical situation of a system put into a weak thermal contact with a thermal bath at inverse temperature $\beta$. The crucial question is: given a quantum system initially described by a known energy distribution $\v{p}(0)$, what general conditions determine the possible  $\v{p}(t)$ achievable at some later time $t$?


\subsubsection{Insufficiency of the traditional approach}

\label{sec:insufficiency_standard}

The second law in the form of Eq.~\eqref{eq:entropy_prod} provides a set of conditions that can help us answer the question posed. In fact, the second law provides a functional on the set of states that must be monotonically non-decreasing along the dynamics. To explain its usefulness and limitations, let us focus on a simple example of an incoherent three-level system, i.e., with $d=3$ and $C(0) = 0$. Due to the covariance of Markovian thermal processes (property~\ref{p2}), we necessarily have $C(t) = 0$ for all $t> 0$.\footnote{This is because $C(0)= 0$ is equivalent to $\mathcal{H}(\rho(0)) = 0$, and~\ref{p2} then implies $\mathcal{H}(\mathcal{L}_t(\rho(0))) = 0$. Thus, the state at time $\delta t$ is stationary, $C(\delta t) = 0$, and the argument extends to all $t>0$.} Thus, the entropy $\Sigma$ reads 
\begin{equation}
	\label{eq:entropyproduction1diagonal}
	\Sigma(t) = -\sum_{i=1}^3 p_i(t) (\log p_i(t) - \log \gamma_i),
\end{equation}
where $\v{\gamma}$ denotes the vector of thermal populations. We plot the values of $\Sigma$ in the simplex of all 3-dimensional probability distributions in Fig.~\hyperref[fig:f1levelsets]{1a}, for one particular choice of $\v{\gamma}$. Then, we can verify whether there exists a continuous path $\v{p}(t)$ connecting a given initial $\v{p}(0)$ with the final $\v{p}(t_f)$, with constantly non-decreasing $\Sigma(t)$ along the path (as required by the second law). In particular, for two states with equal entropy $\Sigma$ there is a unique isoentropic path connecting them, and the second law in the form of positive entropy production does not forbid such a transition. 

\begin{figure}
	\begin{minipage}{.5\columnwidth}
			\hspace{-0.5cm}
		\includegraphics[width=1.06\columnwidth]{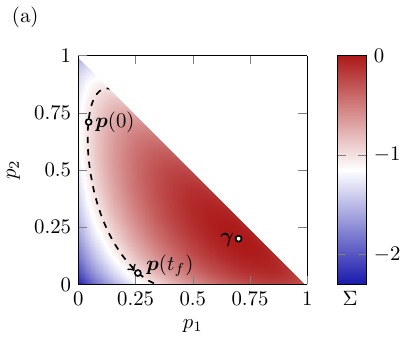}
	\end{minipage}%
	\begin{minipage}{.5\columnwidth}
		\includegraphics[width=1.06\columnwidth]{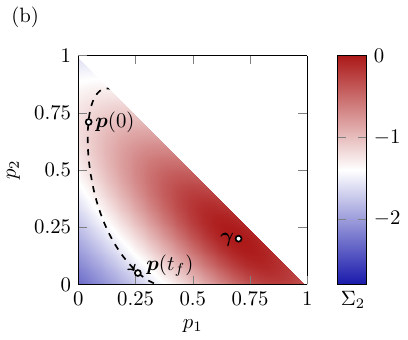}
	\end{minipage}
\caption{\textbf{Thermodynamic entropy landscape for a three-level incoherent system.} The state of an incoherent system is represented by a vector of populations \mbox{$\v{p}=(p_1,p_2,1-p_1-p_2)$}, and the chosen thermal state is \mbox{$\v{\gamma}=(0.7,0.2,0.1)$}. (a)~Entropy functional $\Sigma$. The black dashed trajectory is the path of constant $\Sigma$. The constraint of positive entropy production cannot exclude that the thermodynamic transition from $\v{p}(0)$ to $\v{p}(t)$ is allowed. (b)~Entropy functional $\Sigma_2$. The black dashed trajectory is the path of constant $\Sigma$. The constraint of positive collisional entropy production implies that one \emph{cannot} transform $\v{p}(0)$ into $\v{p}(t)$.}
\label{fig:f1levelsets}
\end{figure}

However, one can prove that for most pairs of isoentropic states there is no Markovian thermal process that can map between them, independently of the chosen controls and details of the environment. What the standard second law is missing is that there exist \emph{generalized (non-equilibrium) entropy production} relations that must be satisfied for a thermodynamic transformation to be allowed. As we will discuss later in Sec.~\ref{sec:complete_set}, one of them is the non-negativity of the \emph{collisional entropy production} (with respect to $\v{\gamma}$):
\begin{equation}
	\label{eq:entropy2_prod}
	\frac{d \Sigma_2(t)}{dt} := -\frac{d}{dt} S_2(\v{p}(t)\|\v{\gamma}) \geq 0, 
\end{equation}  
where 
\begin{equation}
	S_2(\v{p}\|\v{\gamma}) := \log \left(\sum_{i=1}^d \frac{p_i^2}{\gamma_i}\right),
\end{equation}
is the relative R\'{e}nyi entropy of order two~\cite{renyi1961measures}. Eq.~\eqref{eq:entropy2_prod} provides an \emph{independent} monotonically non-decreasing functional for the thermodynamic system. As before, we plot the values of $\Sigma_2$ in Fig.~\hyperref[fig:f1levelsets]{1b} for the same choice of $\v{\gamma}$, so that we can clearly see that the path which kept $\Sigma$ constant is decreasing~$\Sigma_2$. Decreasing the collisional entropy (relative to $\v{\gamma}$) is forbidden, hence the transition is impossible under \emph{any} Markovian thermal process. This rules out, by means of an explicit counterexample, that the standard entropy production functionals (often referred to as the second law of thermodynamics in the literature) faithfully characterizes thermalization out of equilibrium.


\subsubsection{Insufficiency of the resource-theoretic approach}
\label{sec:insufficiency_resource}

If the standard entropy production constraint is insufficient, the information theory-minded reader could wonder whether the results of the resource theory approach can come to the rescue. Here we will argue why these tools, in their present form at least, are too weak to capture the relevant limitations of quantum thermodynamics in the Markovian regime. 

For an explicit example, consider a two-level incoherent system, i.e., with $d=2$ and $C(0)= 0$. Again, due to covariance property~\ref{p2}, we have $C(t)=0$. Let us choose
\begin{equation}
	\label{eq:qbeta}
	\v{p}(t_f) = \left[\left(1-\frac{\gamma_2}{\gamma_1}\right)p_1 + p_2, \frac{\gamma_2}{\gamma_1} p_1\right].
\end{equation} 
The resource theory then imposes the following family of constraints, known as the \emph{second laws of thermodynamics}~\cite{brandao2013second}:
\begin{equation}
	\label{eq:secondlaws}
	S_\alpha(\v{p}(0)\|\v{\gamma}) \geq S_\alpha(\v{p}(t_f)\|\v{\gamma}), \quad \forall \alpha \in \mathbb{R},  
\end{equation}
where 
\begin{equation}
	S_\alpha(\v{p}\| \v{\gamma}) := \frac{{\rm sgn(\alpha)}}{\alpha-1} \log \left(\sum_{i=1}^d p_i^\alpha \gamma_i^{1-\alpha}\right)
\end{equation}
is the R\'{e}nyi relative entropy of order $\alpha$~\cite{renyi1961measures}. One can show that, for $\v{p}(t_f)$ from Eq.~\eqref{eq:qbeta}, \emph{all} these constraints are satisfied.\footnote{This follows immediately from the fact that $\v{p}(t_f)$ can be obtained from $\v{p}(0)$ by applying the stochastic matrix $G$ with $G_{12} = 1$, $G_{21} = \gamma_2/\gamma_1$. Since $G \v{\gamma} = \v{\gamma}$, Eq.~\eqref{eq:secondlaws} follows, see e.g. Sec.~II.C.2 of Ref.~\cite{lostaglio2019introductory}.} 

Nevertheless, one can prove that $\v{p}(t_f)$ cannot be thermodynamically accessed from $\v{p}(0)$ by a Markovian thermal process. As we shall see in Sec.~\ref{sec:complete_set}, for every Markovian thermal process the following generalized entropy production equations must hold:
\begin{equation}  
	\label{eq:markovian_secondlaws}
	\frac{d \Sigma_\alpha(t)}{dt} := -\frac{d}{dt} S_\alpha(\v{p}(t)\| \v{\gamma})\geq 0, \quad \forall \alpha \in \mathbb{R},  
\end{equation}
which, for $\alpha=1$, recover the standard entropy production inequality from Eq.~\eqref{eq:entropy_prod}. Now, for $d=2$, every dynamical trajectory connecting $\v{p}(0)$ with $\v{p}(t_f)$ can be simply parametrized as 
\begin{equation}
	\label{eq:p_parametrization}
	\v{p}(\lambda) = (1-\lambda)\v{p}(0)+ \lambda \v{p}(t_f).
\end{equation}
In Fig.~\ref{fig:falphalevelsets} we plot $\Sigma_\alpha$ as a function of $\alpha$ and the trajectory parameter $\lambda$ for a particular choice of $\v{p}(0)$ and $\v{\gamma}$. One can clearly see in the plotted range of $\alpha$ that $\Sigma_\alpha$ at $\lambda = 0$ is smaller than at $\lambda =1$, but for each $\alpha$ there exists an intermediate point $\lambda_{*}(\alpha)\in(0,1)$ at which $\Sigma_\alpha$ starts to decrease. Hence, the $\alpha$-entropy production inequalities from Eq.~\eqref{eq:markovian_secondlaws} are violated at some intermediate time, even if the second laws of Eq.~\eqref{eq:secondlaws} are all satisfied. This is why no Markovian thermal process mapping $\v{p}(0)$ to $\v{p}(t_f)$ exists, a fact that cannot be captured by the end-points condition of the resource theory approach. The conceptual issue is clear: the resource theory approach only considers discrete transformations and does not involve notions such as that of a continuously generated process constantly producing entropy along its path.

Furthermore, it should be clear that even Eq.~\eqref{eq:markovian_secondlaws} (a strengthening of the second laws of Ref.~\cite{brandao2013second}) is not entirely satisfactory. In fact, beyond the simplest cases, one cannot exhaustively check an infinite set of inequalities along arbitrary trajectories with fixed end-points.

\begin{figure}[t]
	\centering
	\includegraphics[width=0.6\columnwidth]{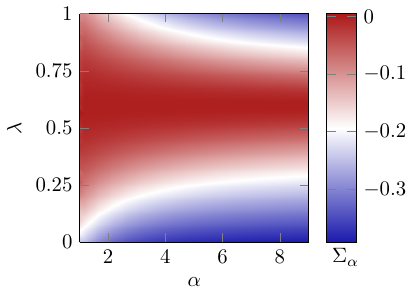}
	\caption{\textbf{Thermodynamic $\alpha$-entropy for a two-level system.} 		
		Value of the relative entropy functional $\Sigma_\alpha$ along the trajectory specified in Eq.~\eqref{eq:p_parametrization} connecting $\v{p}(0)= (0.1,0.9)$ with $\v{p}(t_f)$ given by Eq.~\eqref{eq:qbeta}. The chosen thermal state is $\v{\gamma} =(0.6,0.4)$. Note that, for every $\alpha$, the path connecting $\lambda = 0$ to $\lambda =1$, necessarily passes through a region in which entropy $\Sigma_\alpha$ decreases. Hence, no Markovian thermal process can transform $\v{p}(0)$ into $\v{p}(t_f)$.}
	\label{fig:falphalevelsets}
\end{figure}


\section{Proposed hybrid approach}
\label{sec:approach}

To address the issues highlighted in the previous section, we propose to study the set of Markovian thermal processes using information theory tools, incorporating from the get-go constraints that are commonplace in most quantum thermodynamic settings, such as the presence of a large heat bath, weak coupling and Markovianity. Our purpose is to leverage the information theory tools to complement the toolkit of the master equation formalism. 

The central question investigated in this paper is: what final states $\rho(t_f)$  are accessible from an initial state $\rho(0)$ by means of Markovian thermal processes? When such a process exists transforming $\rho(0)$ into $\rho(t_f)$ we will write
\begin{equation}
	\label{eq:mtp}
	\rho(0) \stackrel{ \textrm{MTP}}{\longmapsto} \rho(t_f).
\end{equation}
Our main contribution is to find a complete set of conditions to answer this question when $\rho(t_f)$ is block-diagonal in the energy basis. Due to property~\ref{p2}, the problem is reduced to the one involving energy distributions (`populations')
\begin{equation}
	\label{eq:mtpdiagonal}
	\v{p}(0) \stackrel{ \textrm{MTP}}{\longmapsto} 	\v{p}(t_f). 
\end{equation}
 
When approached directly, this may look like an extremely complex control problem. As we have seen, it is not enough to find a trajectory $\v{p}(t)$ connecting $\v{p}(0)$ to $\v{p}(t_f)$ involving irreversible entropy production (which in itself is a hard task), because such a trajectory does not guarantee that a master equation achieving the desired transformation exists. A numerical brute force approach is also unfeasible, since it involves the exploration of a very high dimensional space of control parameters. Ultimately, this is related to the fact that characterizing what dynamics can be realized by a Markovian master equation is an extremely challenging problem even classically: this is known as the \emph{embeddability problem}~\cite{elfving1937theorie,davies2010embeddable}. Despite having been studied for decades in the mathematics literature, general analytic solutions are not known beyond the simplest $d=2$ and $d=3$ case~\cite{kingman1962imbedding,runnenburg1962elfving,goodman1970intrinsic,carette1995characterizations}. 

Lacking explicit characterizations, we will follow a different strategy. It is crucial to highlight that the solution will satisfy two desiderata:
\begin{enumerate}[label=(D\arabic*)]
	\item\label{d1}\textbf{Finite verifiability.} One should be able to verify in a finite number of steps whether \mbox{$\v{p}(0) \stackrel{ \textrm{MTP}}{\longmapsto} \v{p}(t_f)$} holds for any given initial and final states.
	\item\label{d2} \textbf{Constructability.} Whenever \mbox{$\v{p}(0) \stackrel{ \textrm{MTP}}{\longmapsto} \v{p}(t_f)$} holds, one should be able to explicitly construct a Markovian thermal process realizing this transition through a sequence of elementary controls.  
\end{enumerate}
These are central requirements for the applicability of the framework and in typical resource theory approaches these are not both satisfied. 


\section{Continuous thermomajorization}
\label{sec:thermomajo}

Here we introduce the main technical tool to solve the problem at hand: a generalization of \emph{thermomajorization}. We start with a summary of well-known results.

\subsection{Recap: majorization and thermomajorization}

\emph{Majorization} is an ubiquitous relation between pairs of vectors that finds applications in fields ranging from mathematics and economy to information theory and quantum physics~\cite{marshall2010inequalities, nielsen2002introduction}. Given two probability distributions, $\v{p}$ and $\v{q}$, we say that $\v{p}$ \emph{majorizes} $\v{q}$, denoted $\v{p} \succ \v{q}$, if 
\begin{align}
	\sum_{i=1}^j p_i^\downarrow\geq \sum_{i=1}^j q_i^\downarrow \quad \textrm{for} \; j=1,\dots, d,
\end{align}
where $\v{x}^{\downarrow}$ denotes the vector $\v{x}$ sorted in a non-increasing order. The partial ordering of probability vectors induced by majorization can be seen as formalizing the measure of disorder relative to the uniform distribution~\mbox{$\v{\eta}:=(1/d,\dots,1/d)$}: for a fixed dimension $d$, sharp distributions majorize all other distributions, and all distributions majorize the uniform distribution~$\v{\eta}$. Furthermore, if $\v{p} \succ \v{q}$, then the Shannon entropy of $\v{p}$ is smaller than that of~$\v{q}$. The same holds for a whole class of entropy functionals known as Schur-concave functions~\cite{marshall2010inequalities} (including all R\'enyi entropies~\cite{renyi1961measures}).

However, just like with the entropy production in Eq.~\eqref{eq:entropy_prod}, it is convenient to extend the notion of majorization to thermomajorization~\cite{ruch1978mixing, horodecki2013fundamental} (or majorization relative to $\v{\gamma}$, or $\v{\gamma}$-majorization), so that disorder is measured relative to a generic non-uniform equilibrium distribution $\v{\gamma}$. To do so, first denote by $\v{\pi}(\v{p})$ the reordering of $\{1,\dots, d\}$ that sorts $p_i/\gamma_i$ in a non-increasing order,
\begin{equation}
	\label{eq:betaorder}
	\frac{p_{\pi_i(\v{p})}}{\gamma_{\pi_i(\v{p})}} \geq \frac{p_{\pi_{i+1}(\v{p})}}{\gamma_{\pi_{i+1}(\v{p})}} \quad \textrm{for} \; i=1,\dots, d-1.
\end{equation}
This is called the $\v{\gamma}$-\emph{ordering} or \emph{thermomajorization ordering} of $\v{p}$.
Next, we need to introduce the notion of \emph{Lorenz curve}. The Lorenz curve of $\v{p}$ relative to $\v{\gamma}$ (also called thermomajorization curve in Ref.~\cite{horodecki2013fundamental}) is a piecewise linear, concave curve on a plane that connects the points $\v{l}^{(j)}$ given by
\begin{align}
	\v{l}^{(j)} = \left(\sum_{i=1}^j \gamma_{\pi_i(\v{p})}, \sum_{i=1}^j p_{\pi_i(\v{p})}\right)
\end{align}
for $j\in\{1,\dots,d\}$, where $\v{l}^{(0)}:= (0,0)$. Then, $\v{p}$ is said to thermomajorize $\v{q}$ (relative to $\v{\gamma}$), denoted \mbox{$\v{p} \succ_{\v{\gamma}} \v{q}$}, when the Lorenz curve of $\v{p}$ is never below that of $\v{q}$.\footnote{If we denote the height of the Lorenz curve at point $x \in [0,1]$ by $L_x(\v{p}\|\v{\gamma})$, this can also be written as  $L_{x_j}(\v{p}\|\v{\gamma}) \geq L_{x_j}(\v{q}\|\v{\gamma})$ for $j\in\{1,\dots d\}$, where $x_j= \sum_{i=1}^j \v{\gamma}_{\pi_i(\v{q})}$~\cite{alhambra2016fluctuating}.} Importantly, in the case of uniform equilibrium distributions, $\v{\gamma} = \v{\eta}$, thermomajorization  reduces to majorization. For a fixed dimension, the sharp distribution with largest energy, $(0, \dots , 0, 1)$, thermomajorizes every other distribution, and every distribution thermomajorizes~$\v{\gamma}$. Furthermore, if $\v{p} \succ_{\v{\gamma}} \v{q}$, then the relative entropy $S(\v{p}\| \v{\gamma})$ is larger than $S(\v{q}\| \v{\gamma})$. The same holds for a general class of relative entropy functionals called thermodynamic Schur-concave functions in Ref.~\cite{lostaglio2019introductory} (including all $\alpha$-relative entropies $S_\alpha(\cdot\|\v{\gamma})$ in Eq.~\eqref{eq:secondlaws}). 
 
 \begin{figure}[t]
 	\centering
 	\includegraphics[width=0.6\columnwidth]{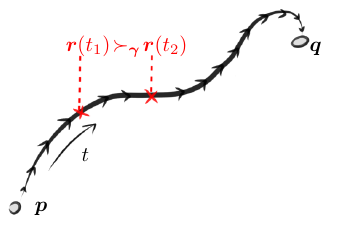}
 	\caption{\label{fig:defcontinuousmajorization}\textbf{Continuous thermomajorization}. The continuous thermomajorization relation $\v{p}\ggcurly_{\v{\gamma}} \v{q}$ holds if and only if there is a continuous path of probability distributions $\v{r}(t)$ connecting $\v{p}$ and $\v{q}$ such that $\v{r}(t_1)\succ_{\v{\gamma}}\v{r}(t_2)$ whenever $t_1\leq t_2$.   }
\end{figure}
The crucial property of thermomajorization as a partial ordering of probability vectors is that it characterizes transformations under thermal processes~\cite{horodecki2013fundamental, lostaglio2019introductory}:
\begin{equation}
	\label{eq:thermomajorization}
\v{p}(0) \stackrel{ \textrm{TP}}{\longmapsto} \v{p}(t_f)\quad \Leftrightarrow\quad 	\v{p}(0) \succ_{\v{\gamma}} \v{p}(t_f).
\end{equation}
Note that Eq.~\eqref{eq:thermomajorization} is satisfied for every Markovian thermal process, since these are a subset of thermal processes. However, the problem we pointed out in Sec.~\ref{sec:insufficiency} remains: these end-point conditions do not capture the existence of a continuous process generated by a Markovian master equation. 
 
 \subsection{Continuous thermomajorization}
 
 We introduce the following strengthening of thermomajorization that we illustrate in Fig.~\ref{fig:defcontinuousmajorization}. 
 
\begin{defn}[Continuous thermomajorization]
	\label{def:markov_majo}
 	A distribution $\v{p}$ \emph{continuously thermomajorizes} $\v{q}$ (or continuously majorizes $\v{q}$ relative to $\v{\gamma}$), denoted $\v{p} \ggcurly_{\v{\gamma}} \v{q}$, if there exists a continuous path of probability distributions $\v{r}(t)$ for $t\in[0,t_f)$ such that
 	\begin{enumerate}
 		\item $\v{r}(0)=\v{p}$,
 		\item $\forall~ t_1,t_2\in[0,t_f):\quad t_1 \leq t_2 \Rightarrow\v{r}(t_1)\succ_{\v{\gamma}} \v{r}(t_2)$,
 		\item $\v{r}(t_f)=\v{q}$.
 	\end{enumerate}
We call $\v{r}(t)$ a \emph{thermomajorizing trajectory} from $\v{p}$ to $\v{q}$.
\end{defn}
 
Note that in the particular case of a uniform fixed point, $\v{\gamma}=\v{\eta}$, the above definition corresponds to a continuous version of standard majorization, denoted by $\ggcurly$ in Ref.~\cite{zylka1985note}. In fact, the notion of continuous majorization has a decades-long history and appears in a variety of research fields from thermodynamics and order theory~\cite{zylka1990accessibility,alberti2008order}, through plasma physics~\cite{hay2015maximal,hay2017extreme}, to social sciences~\cite{thon2004dalton}. Moreover, this notion was employed and studied in more detail in Ref.~\cite{zylka1985note}, where it was inspired by a model of heat transport along ideal conducting wires between $d$ objects with different temperatures. Here, we extend these technical considerations to continuous thermomajorization, which is necessary to capture finite temperature thermalizations. We will also highlight the significance of this notion as the right generalization of the concept of entropy production.
  
Our first main result is to show that the notion of continuous thermomajorization correctly encapsulates \emph{all} the relevant constraints of Markovian thermal processes on population dynamics.
 
\begin{thm}[Second law on populations]
	\label{thm:main}
 	\mbox{$\v{p}(0) \stackrel{ \textrm{MTP}}{\longmapsto} 	\v{p}(t_f)$} if and only if
 	\begin{equation}
 		\label{eq:continuousmajorization}
 		\v{p}(0) \ggcurly_{\v{\gamma}} \v{p}(t_f).
 	\end{equation}
\end{thm}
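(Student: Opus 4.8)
The plan is to establish the two implications separately: the ``only if'' part will follow readily from the known characterization of thermal processes, Eq.~\eqref{eq:thermomajorization}, while the ``if'' part requires building an explicit master equation.

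For the ``only if'' direction, suppose $\v{p}(0)\stackrel{\textrm{MTP}}{\longmapsto}\v{p}(t_f)$, realized by integrating Eq.~\eqref{eq:master} with a Lindbladian $\mathcal{L}_t$ obeying \ref{p1}--\ref{p2}, and let $\v{p}(t)$ be the induced population trajectory. Since populations are unaffected by $\mathcal{H}$ and, by covariance~\ref{p2}, decouple from the off-diagonal sector of $\rho(t)$, the vector $\v{p}(t)$ solves a closed Markovian equation on the simplex and is therefore continuous in $t$ (converging as $t\to t_f$ when $t_f=+\infty$). For any $t_1\le t_2$, the time shift $s\mapsto t_1+s$ turns the restriction of the dynamics to $[t_1,t_2]$ into a Markovian thermal process again, hence in particular a thermal process, so Eq.~\eqref{eq:thermomajorization} gives $\v{p}(t_1)\succ_{\v{\gamma}}\v{p}(t_2)$. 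Thus $\v{r}:=\v{p}$ is a thermomajorizing trajectory from $\v{p}(0)$ to $\v{p}(t_f)$, i.e.\ $\v{p}(0)\ggcurly_{\v{\gamma}}\v{p}(t_f)$.

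The ``if'' direction is the heart of the matter. Given a thermomajorizing trajectory $\v{r}(t)$ from $\v{p}(0)$ to $\v{p}(t_f)$, I would construct a time-dependent Gibbs-preserving master equation whose solution is $\v{r}$. The key local fact is a characterization of achievable velocities: at a point $\v{s}$ with a fixed $\v{\gamma}$-ordering $\v{\pi}$, the velocities $L\v{s}$ produced by generators $L$ with $L\v{\gamma}=0$ and nonnegative off-diagonal entries are exactly the zero-sum vectors $\v{v}$ whose partial sums $\sum_{i=1}^{j} v_{\pi_i}$ are nonpositive for all $j$ --- precisely the directions that do not raise the Lorenz curve of $\v{s}$. (One inclusion is immediate; the other follows because the elementary detailed-balance generators on pairs $(i,j)$ produce velocities along $\v{e}_j-\v{e}_i$ with $i$ preceding $j$ in the $\v{\gamma}$-order, and their conic hull is exactly this cone.) Since every sub-arc of $\v{r}$ is monotone in $\succ_{\v{\gamma}}$, its one-sided velocity lies in this cone at each time, so one can select a locally bounded, measurable family $\mathcal{L}_t$ of such generators with $\mathcal{L}_t\v{r}(t)=\dot{\v{r}}(t)$; then $\v{r}$ solves $\dot{\v{p}}=\mathcal{L}_t\v{p}$, and by uniqueness it is the solution, exhibiting the desired MTP. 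To organize the regularity, one may instead discretize $\v{r}$ into close consecutive points $\v{s}_k\succ_{\v{\gamma}}\v{s}_{k+1}$, realize each small step by such a construction (or by a short sequence of partial thermalizations of level pairs), and concatenate --- a composition of Markovian master equations being again one.

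The main obstacle is turning this sketch into a rigorous construction: a general thermomajorizing trajectory need not be differentiable, and it may touch the boundary of the simplex where the generators realizing a given velocity have unbounded entries, so the measurable selection of $\mathcal{L}_t$ must be done with care (smoothing $\v{r}$ while preserving monotonicity in $\succ_{\v{\gamma}}$, perturbing away from the boundary, or pushing troublesome stretches into the $t_f=+\infty$ limit). A cleaner route for the small steps --- which additionally delivers the constructive desideratum~\ref{d2} --- is to realize each step using \emph{elementary thermalizations}, which are manifestly Markovian thermal processes; showing these suffice to track any thermomajorizing trajectory is the universality result developed later in Sec.~\ref{sec:controls}.
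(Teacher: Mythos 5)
Your ``only if'' direction matches the paper's: restricting the master equation to any window $[t_1,t_2]$ yields a (Markovian, hence ordinary) thermal process, so Eq.~\eqref{eq:thermomajorization} gives $\v{p}(t_1)\succ_{\v{\gamma}}\v{p}(t_2)$, and the continuous population trajectory is itself the thermomajorizing path. That half is fine.

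The ``if'' direction has a genuine gap. Your primary route --- a measurable selection of generators with $\mathcal{L}_t\v{r}(t)=\dot{\v{r}}(t)$ from a cone of achievable velocities --- breaks down precisely where you flag it: at the times when $\v{r}(t)$ crosses from one $\v{\gamma}$-ordering to another. There the elementary generator on the pair of levels whose ratios $r_i/\gamma_i$ have just equalized produces \emph{zero} velocity (its drift is proportional to $\gamma_i r_j-\gamma_j r_i$), so the achievable cone degenerates and your ``exactly this cone'' characterization fails on the boundary between ordering cells; you name this obstacle but do not resolve it. Your fallback --- discretize into nearby points $\v{s}_k\succ_{\v{\gamma}}\v{s}_{k+1}$ and realize each step by ``a short sequence of partial thermalizations'' --- is not justified as stated: the point of Sec.~\ref{sec:insufficiency_resource} is that $\v{s}_k\succ_{\v{\gamma}}\v{s}_{k+1}$ alone does \emph{not} imply Markovian realizability, and closeness by itself does not repair this. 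The two ingredients that actually make the argument work, and that the paper uses in Lemma~\ref{lem:plt}, are missing from your proposal: (i) the trajectory can be cut into at most $d!$ maximal segments on each of which the $\v{\gamma}$-ordering is constant (finiteness coming from the finiteness of the set of orderings, not from a mesh size); and (ii) Theorem~12 of Ref.~\cite{perry2018sufficient} (the paper's Theorem~\ref{thm:perry}), which says that when $\v{\pi}(\v{p})=\v{\pi}(\v{q})$, the relation $\v{p}\succ_{\v{\gamma}}\v{q}$ \emph{is} realized by at most $d-1$ elementary thermalizations. Deferring instead to ``the universality result developed later'' is circular here, since Theorem~\ref{thm:universality} is proved jointly with Theorem~\ref{thm:main} by exactly this decomposition. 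With (i) and (ii) supplied --- plus the (easy, but needed) observation that each elementary thermalization is generated by a reset Lindbladian satisfying \ref{p1}--\ref{p2}, so the concatenation is an MTP --- your outline closes.
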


The proof of the above theorem can be found in Appendix~\ref{app:generaltheorems}. As a consequence, continuous thermomajorization gives a complete (exhaustive) set of constraints for the evolution of populations in the standard (Markovian) master equations approach to quantum thermodynamics. 

It is also worth highlighting that continuous thermomajorization -- which characterizes Markovian processes -- and thermomajorization -- which characterize general non-Markovian processes -- coincide when initial and final states have the same $\v{\gamma}$-ordering (see Corollary~\ref{corol:orderingscoincide} in Appendix~\ref{app:generaltheorems} for details):
\begin{equation}
	\textrm{If} \; \; \v{\pi}(\v{p}) = \v{\pi}(\v{q}), \quad \v{p} \succ_{\v{\gamma}} \v{q} \Leftrightarrow \v{p} \ggcurly_{\v{\gamma}} \v{q}.
\end{equation}
In other words, \emph{all the complications with Markovianity (or advantages from non-Markovianity) arise from crossing the boundary between one $\v{\gamma}$-ordering and another}. This observation will play a crucial role later, but more broadly it is noteworthy for the study of the role of memory effects in stochastic processes with a given fixed point.


\section{A complete set of entropy production relations}
\label{sec:complete_set}
  
We now show how the continuous thermomajorization condition of Theorem~\ref{thm:main} subsumes (and greatly strengthens) the standard positive entropy production condition from Eq.~\eqref{eq:entropy_prod}. Employing Theorem~\ref{thm:main}, one can translate known results from the theory of majorization into entropic inequalities. In other words, one can construct families of functionals that must be monotonically non-decreasing during the Markovian evolution of the system along the path $\rho(t)$ with populations $\v{p}(t)$. For example, for any well-behaved convex function $h: \mathbb{R} \rightarrow \mathbb{R}$, the $h$-divergence defined by
\begin{equation}
	\label{eq:h_divergence}
	\Sigma_h(t) = - \sum_{i=1}^d \gamma_i h\left( \frac{p_i(t)}{\gamma_i} \right),
\end{equation}
must be monotonically non-decreasing
\begin{equation}
	\label{eq:entro_prod}
	\frac{d\Sigma_h(t)}{dt} \geq 0.
\end{equation}
\begin{proof}
To see that Eq.~\eqref{eq:entro_prod} holds, note that, by Theorem~\ref{thm:main}, \mbox{$\v{p}(0) \ggcurly_{\v{\gamma}} \v{p}(t_f)$}. Hence, for any \mbox{$t \in [0, t_f)$} and $\delta >0$, $\v{p}(t) \succ_{\v{\gamma}} \v{p}(t+ \delta)$. The known results on thermomajorization (see Theorem~\ref{thm:thermomajorizationprevious} in Appendix~\ref{app:generaltheorems}) then tell us that there exists a stochastic matrix $T$ such that 
	\begin{equation}
		T \v{p}(t) = \v{p}(t+\delta), \quad T \v{\gamma} = \v{\gamma}.
	\end{equation}
	Thus,
	\begin{align}
		\Sigma_h(\v{p}(t+\delta)) & =  - \sum_{i=1}^d \gamma_i h\left(\sum_{j=1}^d T_{ij} \frac{p_j(t)}{\gamma_i} \right)  \nonumber\\
		& = - \sum_{i=1}^d \gamma_i h \left( \sum_{j=1}^d \left[ T_{ij} \frac{\gamma_j}{\gamma_i} \right] \frac{p_j(t)}{\gamma_j} \right)   \nonumber\\
		& \geq -\sum_{i,j=1}^d \gamma_i \left[ T_{ij} \frac{\gamma_j}{\gamma_i} \right] h\left(\frac{p_j(t)}{\gamma_j} \right)\nonumber\\		
		&\geq - \sum_j \gamma_j h\left( \frac{p_j(t)}{\gamma_j} \right) = \Sigma_h (\v{p}(t)),
		\label{eq:monotonicityh}
	\end{align}  
	where we used the convexity of $h$ and then the stochasticity of $T$ (i.e., $\sum_i T_{ij} =1$). We note Eq.~\eqref{eq:monotonicityh} could have also be inferred from Ref.~\cite{marshall2010inequalities}, Proposition 14.B.3. Since the above holds for every $\delta >0$, the result follows.
\end{proof}

For each choice of $h$, the above qualifies as a valid \emph{generalized entropy production} inequality. Restrictions on the thermodynamically admissible paths can be obtained by studying their level sets within the $d$-dimensional probability simplex (recall Figs.~\hyperref[fig:f1levelsets]{1a}-\hyperref[fig:f1levelsets]{1b}) and constructing the corresponding ``thermodynamic trees'', as detailed in Ref.~\cite{gorban2013thermodynamic} for a special choice of $\Sigma_h$. In the accompanying paper~\cite{korzekwa2022optimizing}, we detail how these encompass and strengthen several well-known relations in the literature, including the standard entropy production relation of Eq.~\eqref{eq:standard_entropy_prod}, the diagonal entropy production~\cite{santos2019role}, the second laws of Ref.~\cite{brandao2013second}, the Tsallis entropies well-known in non-extensive statistical mechanics and information theory~\cite{tsallis1988possible, abe2000axioms, tsallis2009introduction, mariz1992irreversible} and the `vacancy'~\cite{wilming2017third}, which was found to play a crucial role in low temperature thermodynamics.

As we can see, one can easily generate a huge variety of entropic inequalities, which helps to see different results as part of a unified framework. At the same time, a natural question arises: Is there a family of entropic conditions that implies \emph{all others}? Our second main result, which follows from Theorem~\ref{thm:main}, answers this question in the affirmative and can be interpreted as a sort of exhaustive $H$-theorem.
\begin{cor}[Exhaustive $H$-type theorem]
	\label{thm:gen_entropy_prod}
	\mbox{$\v{p}(0) \stackrel{ \textrm{MTP}}{\longmapsto} \v{p}(t_f)$} if and only if there exists a continuous path $\v{p}(t)$ for $t\in[0,t_f]$ such that $\Sigma_a(t)$ is monotonically not decreasing in $t$ for all $ a \in [0, 1]$, where
 \begin{equation}
	\label{eq:generalizedentropicrelations}
	\Sigma_a(t) := -\sum_{i=1}^d \left|p_i(t) - a \frac{\gamma_i}{\gamma_d}\right|.
\end{equation}
\end{cor}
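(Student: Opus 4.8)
The plan is to reduce the statement to Theorem \ref{thm:main} by showing that, for a fixed pair of probability vectors $\v{x}, \v{y}$, the condition ``$\v{x} \succ_{\v{\gamma}} \v{y}$'' is equivalent to ``$\Sigma_a(\v{x}) \leq \Sigma_a(\v{y})$'' being violated — i.e. equivalent to $\Sigma_a(\v{x}) \geq \Sigma_a(\v{y})$ for all $a \in [0,1]$ — where $\Sigma_a(\v{x}) := -\sum_i |x_i - a\gamma_i/\gamma_d|$. Wait: more carefully, I want $\v{x} \succ_{\v{\gamma}} \v{y} \Leftrightarrow \Sigma_a(\v{x}) \geq \Sigma_a(\v{y})\ \forall a\in[0,1]$. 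Granting this equivalence, the corollary follows immediately: by Theorem \ref{thm:main}, $\v{p}(0) \xmapsto{\textrm{MTP}} \v{p}(t_f)$ iff there is a continuous path $\v{r}(t)$ with $\v{r}(t_1) \succ_{\v{\gamma}} \v{r}(t_2)$ for $t_1 \le t_2$; applying the pointwise equivalence along the path turns the nesting of Lorenz curves into the monotone non-decrease of each $\Sigma_a$, and conversely. So the whole content is the static characterization of a single thermomajorization relation by the family $\{\Sigma_a\}_{a\in[0,1]}$.

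First I would establish the static equivalence. The key observation is that $-|x - a y| = \min\{2\min(x, ay) - x - ay,\, \text{something}\}$; more usefully, $2\min(u,v) = u + v - |u-v|$, so $\Sigma_a(\v{x}) = -\sum_i |x_i - a\gamma_i/\gamma_d| = \sum_i\big(2\min(x_i, a\gamma_i/\gamma_d) - x_i - a\gamma_i/\gamma_d\big) = 2\sum_i \min(x_i, a\tfrac{\gamma_i}{\gamma_d}) - 1 - a/\gamma_d$. Hence, up to the additive/multiplicative constants that are independent of $\v{x}$, the family $\{\Sigma_a\}$ carries exactly the same information as the family of functionals $f_a(\v{x}) := \sum_{i=1}^d \min(x_i,\, a\gamma_i/\gamma_d)$, $a \in [0,1]$. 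Now I use the standard fact (e.g. from Ref.~\cite{horodecki2013fundamental, lostaglio2019introductory}) that the Lorenz curve of $\v{x}$ relative to $\v{\gamma}$, evaluated through its Legendre-type dual, is $L_x(\v{x}\|\v{\gamma}) = \min$ over the appropriate affine family — concretely, the piecewise-linear concave Lorenz curve $f \mapsto L_f(\v{x}\|\v{\gamma})$ has, at slope $s \ge 0$, the supporting line with intercept $\sum_i (x_i - s\gamma_i)_+$, so $\sum_i \min(x_i, s\gamma_i) = \sum_i x_i - \sum_i (x_i - s\gamma_i)_+$ recovers the lower concave envelope data. Thus $\v{x} \succ_{\v{\gamma}} \v{y}$ (Lorenz curve of $\v{x}$ never below that of $\v{y}$) is equivalent to $\sum_i \min(x_i, s\gamma_i) \geq \sum_i \min(y_i, s\gamma_i)$ for all $s \geq 0$. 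Finally, to restrict the range of the slope parameter to $a \in [0,1]$ with $s = a/\gamma_d$: for $s \geq 1/\gamma_d$, i.e. $a \ge 1$, one has $s\gamma_i \ge \gamma_i/\gamma_d \ge \gamma_i \cdot (\text{something} \ge 1)$ — actually since $\gamma_d = \min_i \gamma_i$ is not guaranteed by the labelling, I would instead note that $s\gamma_d \ge 1 \ge x_d$ forces $\min(x_d, s\gamma_d) = x_d$ and similarly for all $i$ once $s\gamma_i \ge x_i$; at $a=1$, $s\gamma_i = \gamma_i/\gamma_d$ and for $a \ge 1$ the inequality $f_a(\v{x}) \ge f_a(\v{y})$ becomes trivially tight or redundant once all coordinates are ``clipped'', so the constraints for $a > 1$ add nothing beyond $a = 1$. (This is the one place I would check the indexing convention for $\gamma_d$ carefully — the paper takes the $d$-th level as the highest-energy / smallest-$\gamma$ level in several places, which makes $\gamma_i/\gamma_d \ge 1$ and the clipping argument clean.) Likewise $a = 0$ gives the trivial $0 \ge 0$, consistent with including the endpoint.

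The main obstacle I anticipate is not any single hard estimate but getting the slope/clipping bookkeeping exactly right: showing that the parameter range $a\in[0,1]$ (equivalently slopes $s\in[0,1/\gamma_d]$) is both \emph{sufficient} (larger slopes are redundant) and that the endpoints cause no degeneracy, and confirming this matches the sign conventions used for $\gamma_d$ elsewhere in the paper. A secondary point requiring a line of care is the passage from the static equivalence to the dynamical statement: one must check that a \emph{continuous} thermomajorizing trajectory $\v{r}(t)$ yields each $t \mapsto \Sigma_a(\v{r}(t))$ continuous and monotone (immediate, since $\v{x}\mapsto \Sigma_a(\v{x})$ is continuous and the path is monotone in $\succ_{\v{\gamma}}$), and conversely that a path along which all $\Sigma_a$ are monotone non-decreasing is automatically a thermomajorizing trajectory (immediate from the static equivalence applied to each pair $\v{r}(t_1), \v{r}(t_2)$). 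With that in hand, Corollary \ref{thm:gen_entropy_prod} is just Theorem \ref{thm:main} rephrased, and the fact that $\alpha=1$ (or here, the family $\Sigma_a$) contains the standard entropy production as a special case follows by noting $-S(\v{p}\|\v{\gamma})$ is itself an $h$-divergence of the form \eqref{eq:h_divergence} and hence, by the monotonicity argument already proven in the excerpt, implied by the $\Sigma_a$ family.
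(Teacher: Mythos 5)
Your overall strategy is the same as the paper's: reduce everything to Theorem~\ref{thm:main}, then establish the static fact that, for a single pair of distributions, thermomajorization is equivalent to a one-parameter family of inequalities between the $\Sigma_a$'s with $a\in[0,1]$. The paper simply cites Proposition~14.B.4 of Marshall--Olkin for the characterization $\v{x}\succ_{\v{\gamma}}\v{y}\Leftrightarrow\sum_i|x_i-a\gamma_i|\geq\sum_i|y_i-a\gamma_i|$ for all $a\geq 0$, and restricts the range exactly as you do (for $a>1$ the functional is constant because $\gamma_i/\gamma_d\geq 1$ under the paper's labelling of level $d$ as the highest-energy one, so the extra constraints are vacuous). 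Your Legendre-duality derivation via $2\min(u,v)=u+v-|u-v|$ and the intercept function $\sum_i(x_i-s\gamma_i)_+$ is a legitimate self-contained proof of that cited fact, and your handling of the passage from the static equivalence to the continuous path is fine in both directions.

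However, the static equivalence is stated with the inequality reversed, in two places. You claim $\v{x}\succ_{\v{\gamma}}\v{y}\Leftrightarrow\Sigma_a(\v{x})\geq\Sigma_a(\v{y})$, and in the duality step that $\v{x}\succ_{\v{\gamma}}\v{y}$ is equivalent to $\sum_i\min(x_i,s\gamma_i)\geq\sum_i\min(y_i,s\gamma_i)$. Both are backwards: a higher Lorenz curve means larger intercepts $\sum_i(x_i-s\gamma_i)_+$, hence \emph{smaller} $\sum_i\min(x_i,s\gamma_i)=1-\sum_i(x_i-s\gamma_i)_+$, hence \emph{smaller} $\Sigma_a$. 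A quick sanity check: for $\v{\gamma}=(1/2,1/2)$ one has $\v{x}=(1,0)\succ_{\v{\gamma}}\v{y}=(1/2,1/2)$, yet $\Sigma_{1/2}(\v{x})=-1<0=\Sigma_{1/2}(\v{y})$. The correct statement is $\v{x}\succ_{\v{\gamma}}\v{y}\Leftrightarrow\Sigma_a(\v{x})\leq\Sigma_a(\v{y})$ for all $a\in[0,1]$, which is also the only version compatible with your own conclusion: as literally written, your equivalence would make each $\Sigma_a$ monotonically non-\emph{increasing} along a thermomajorizing trajectory, contradicting the corollary you are proving. This is a sign slip rather than a conceptual gap --- the clipping argument for $a\geq 1$, the (correctly flagged) caveat that it requires $\gamma_d=\min_i\gamma_i$, and the pairwise application along the continuous path are all sound once the direction is fixed.
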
 
	\begin{proof}
		First assume that the evolution of populations $\v{p}(t)$ is generated by a Markovian thermal process. Then, from Theorem~\ref{thm:main}, we know that for every $\epsilon>0$ we have $\v{p}(t) \succ_{\v{\gamma}} \v{p}(t+\epsilon)$. This is equivalent to (see Chap. 14, Proposition B.4~\cite{marshall2010inequalities}):
		\begin{equation}
			\sum_i |p_i(t) - a \gamma_i| \geq \sum_i |p_i(t+\epsilon) - a \gamma_i|, \quad \forall a \geq 0. 
		\end{equation}
		Since $\epsilon>0$ can be made arbitrarily small, this is the condition that the functionals 	$\sum_i |p_i(t) - a \gamma_i|$ are monotonically non-increasing in $t$ for every $a\geq 0$. These include in particular the monotonicity of the functionals $\Sigma_a(t)$.
		
		Conversely, suppose that the evolution of populations $\v{p}(t)$ is such that the functionals $\Sigma_a(t)$ are monotonically non-decreasing for all $a\in[0,1]$. Note that for $a>1$ one has $a \gamma_i/\gamma_d >1$ for every $i$, and hence $\sum_i |p_i(t) - a \frac{\gamma_i}{\gamma_d}| =\frac{a}{\gamma_d}-1$ independently of the value of $\v{p}(t)$. Thus, we can trivially extend the monotonicity property of $\Sigma_a(t)$ to all $a \geq 0$. In fact, by rescaling $a \mapsto a \gamma_d$, we get the equivalent property that $\sum_i |p_i(t) - a \gamma_i|$ is monotonically non-decreasing. As already mentioned above, this is equivalent to $\v{p}(t) \succ_{\v{\gamma}} \v{p}(t+\epsilon)$ for every $t, \epsilon \geq 0$. Recalling Definition~\ref{def:markov_majo}, this means that for every $t$ we have $\v{p}(0) \ggcurly_{\v{\gamma}} \v{p}(t)$. We conclude using Theorem~\ref{thm:main}. 
	\end{proof}
 Once again, we want to emphasize the ``if and only if'' in the statement: $d\Sigma_a(t)/dt \geq 0$ are generalized entropy production inequalities which imply all others. 


\section{Universal thermodynamic controls}
\label{sec:controls}

We now change the point of view and consider the equally important question of \emph{control}. In other words, we start from an initial state $\v{p}(0)$ and ask how to devise a thermalization process that drives the system to a final target state $\v{p}(t_f)$ at some later time $t_f$. From Theorem~\ref{thm:main} we know that every $\v{q}$ such that $\v{p}(0) \ggcurly_{\v{\gamma}} \v{q}$ can be realized by \emph{some} choice of controls in the class of Markovian master equations of a thermalization process. Such controls, however, may be arbitrarily complex and the control sequence is unknown. In this section we solve the first problem by presenting a set of elementary controls that are sufficient to perform arbitrary Markovian thermalizations; and in the next section we will solve the second problem by presenting an algorithm that returns the explicit sequence that is required. 

One can reasonably conjecture that a much more restricted subclass of physically relevant thermalization processes suffices to grant us the \emph{same} amount of control as the full set of Markovian thermal processes. The reader can be reminded of the notion of a \emph{universal gate set} in quantum computing, where one seeks a minimal set of unitary operations that allows one to approximate arbitrarily well the transformations achievable by arbitrary unitaries~\cite{nielsen2010quantum}. In the same fashion, we ask here about a set of \emph{universal thermalization controls}.

Following this intuition, Ref.~\cite{lostaglio2018elementary} asked whether every transformation achievable by thermal processes can be achieved by sequentially coupling only two energy levels of the system to the environment at once, dubbed an `elementary thermal operation'. Somewhat surprisingly, this question was answered in the negative~\cite{lostaglio2018elementary}. In fact, one needs to couple the environment simultaneously to all $d$ energy levels~\cite{mazurek2018decomposability}, or grant full control of the system's and an auxiliary thermal qubit's energy spectra~\cite{perry2018sufficient}. 

Remarkably, however, we are not aware that the same question was tackled in the standard setup of quantum thermodynamics, where the question is to find controls as powerful as the most general Markovian master equation of a thermalization process. In this context, a distinguished candidate for a universal set of thermal controls is given by \emph{two-level partial thermalizations}, which we will also simply call \emph{elementary thermalizations}. These are a set of thermalizations of both practical and formal interest. Each of them acts only on two energy levels $(i,j)$ and is represented by an extremely simple reset Markovian master equation
\begin{subequations}
	\begin{align*}
		\frac{dp_i}{dt}&=\frac{1}{\tau}\left(  \frac{\gamma_i}{\gamma_i+\gamma_j}(p_i+p_j)-p_i \right), \quad 	\frac{dp_j}{dt} = - 	\frac{dp_i}{dt}.
	\end{align*}
\end{subequations}
which describes an exponential relaxation to equilibrium:
\begin{equation}
	\v{p}^{i,j}(t) = e^{-t/\tau} \v{p}^{i,j}(0) + N_{ij}(0) (1-e^{-t/\tau}) \v{\gamma}^{i,j}.
\end{equation} 
Above, $\v{x}^{i,j}(t):=(x_i(t),x_j(t))$ and $N_{ij} = p_i(0) + p_j(0)$. Formally, this can be represented by a matrix equation
\begin{equation}
	\v{p}^{i,j}(t) = T^{i,j}(\lambda_t) \v{p}^{i,j}(0)
\end{equation}
with $\lambda_t = 1- e^{-t/\tau}$ and
\begin{equation}
	\label{eq:elementarythermalization}
		T^{i,j}(\lambda) =
		\begin{bmatrix}
			(1-\lambda) + \frac{  \lambda \gamma_i}{\gamma_i+\gamma_j} &	\lambda \frac{\gamma_i}{\gamma_i+\gamma_j} \\
			\lambda \frac{\gamma_j}{\gamma_i+\gamma_j} & 	(1-\lambda) +  \frac{\lambda \gamma_i}{\gamma_i+\gamma_j}
		\end{bmatrix}.
\end{equation}

These transformations stand out for their formal simplicity -- they are the stochastic processes with thermal fixed point on two states that can be realized by a Markovian master equation (as one can check directly using so-called embeddability conditions~\cite{davies2010embeddable}). But they also arise naturally in rather diverse approaches to quantum thermodynamics~\cite{davies1974markovian, roga2010davies, scarani2002thermalizing}, where they are often used as building blocks for more complex protocols~\cite{perry2018sufficient, baumer2019imperfect, miller2019work}. Here, we prove that elementary thermalizations are a universal set of thermalization controls:

\begin{thm}[Universality of elementary thermalizations]
	\label{thm:universality}
	\mbox{$\v{p}(0) \stackrel{ \textrm{MTP}}{\longmapsto} 	\v{p}(t_f)$} if and only if there exists a finite sequence of elementary thermalizations such that 
	\begin{equation}
		\label{eq:plttrajectory}
		\v{p}(t_f)=T^{i_f,j_f}(\lambda_f) \dots T^{i_1,j_1}(\lambda_1) \v{p}(0).
	\end{equation}
\end{thm}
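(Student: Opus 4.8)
The plan is to prove Theorem~\ref{thm:universality} by combining Theorem~\ref{thm:main} with a discretization/approximation argument that replaces a general thermomajorizing trajectory by a finite concatenation of elementary thermalizations. One direction is immediate: if $\v{p}(t_f)=T^{i_f,j_f}(\lambda_f)\cdots T^{i_1,j_1}(\lambda_1)\v{p}(0)$, then since each $T^{i,j}(\lambda)$ is itself generated by a Markovian master equation of the reset form given above (which manifestly satisfies \ref{p1}-\ref{p2}), and since a finite concatenation of Markovian thermal processes is again a Markovian thermal process (one simply switches which two-level submanifold is coupled to the bath at the appropriate times, letting each run to the required $\lambda$), we get $\v{p}(0)\stackrel{\textrm{MTP}}{\longmapsto}\v{p}(t_f)$.

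For the nontrivial direction, suppose $\v{p}(0)\stackrel{\textrm{MTP}}{\longmapsto}\v{p}(t_f)$. By Theorem~\ref{thm:main} there is a thermomajorizing trajectory $\v{r}(t)$, $t\in[0,t_f]$, with $\v{r}(0)=\v{p}(0)$, $\v{r}(t_f)=\v{p}(t_f)$, and $\v{r}(t_1)\succ_{\v{\gamma}}\v{r}(t_2)$ whenever $t_1\le t_2$. The strategy is to partition $[0,t_f]$ into finitely many subintervals $[s_{k-1},s_k]$ chosen so that on each subinterval the endpoints $\v{r}(s_{k-1})$ and $\v{r}(s_k)$ have the \emph{same} $\v{\gamma}$-ordering, so that within each block the transition $\v{r}(s_{k-1})\succ_{\v{\gamma}}\v{r}(s_k)$ is a thermomajorization between vectors of fixed ordering. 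This partition is possible because a thermomajorizing trajectory, being continuous, can cross between $\v{\gamma}$-orderings only finitely many times — up to the subtlety of the trajectory lingering on a boundary between orderings, which one handles by merging such segments or perturbing slightly, as in Corollary~\ref{corol:orderingscoincide}. The key combinatorial fact I then need is: \emph{if $\v{x}\succ_{\v{\gamma}}\v{y}$ and $\v{x},\v{y}$ have the same $\v{\gamma}$-ordering, then $\v{y}$ can be reached from $\v{x}$ by a finite sequence of elementary thermalizations.} This is the analogue, in the $\v{\gamma}$-setting, of the classical statement that majorization within a fixed sort order is generated by finitely many Robin-Hood / $T$-transform operations, and it should follow by an induction on the number of "active" Lorenz-curve segments: one identifies an adjacent pair $(i,j)$ in the common $\v{\gamma}$-ordering where the Lorenz curve of $\v{x}$ strictly exceeds that of $\v{y}$, applies $T^{i,j}(\lambda)$ with the largest $\lambda$ that keeps the curve of $T^{i,j}(\lambda)\v{x}$ above that of $\v{y}$, and checks that this either reduces the number of segments on which strict inequality holds or makes one Lorenz vertex coincide, so the induction terminates.

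Concatenating the finite sequences obtained on each block $[s_{k-1},s_k]$ yields the desired finite product in Eq.~\eqref{eq:plttrajectory}. The main obstacle I expect is precisely the fixed-ordering generation lemma: one must verify that elementary thermalizations acting on two levels that are \emph{adjacent in the $\v{\gamma}$-ordering} suffice to realize every within-ordering thermomajorization, that each step can be taken to make monotone progress toward the target (so termination is guaranteed rather than merely convergence), and that the $T^{i,j}(\lambda)$ used never overshoot in a way that would break the Lorenz-curve domination being maintained. A secondary technical point is controlling the boundary-crossing structure of $\v{r}(t)$: one should argue that without loss of generality the trajectory crosses each ordering boundary transversally and finitely often, for instance by approximating $\v{r}(t)$ by a nearby piecewise-linear thermomajorizing trajectory — monotonicity under $\succ_{\v{\gamma}}$ is preserved under taking Lorenz-curve lower envelopes, which gives the needed regularization. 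Once these two points are in hand, the theorem follows.
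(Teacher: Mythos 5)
Your strategy is the same as the paper's: the easy direction realizes each $T^{i,j}(\lambda)$ as a Markovian thermal process and concatenates (the paper does this by writing down the piecewise time-dependent Lindbladian explicitly), and the hard direction goes through Theorem~\ref{thm:main}, a decomposition of the thermomajorizing trajectory into finitely many blocks of fixed $\v{\gamma}$-ordering, and a within-ordering generation lemma. That lemma --- the step you flag as your ``main obstacle'' --- is exactly Theorem~12 in the Supplemental Material of Ref.~\cite{perry2018sufficient}, which the paper imports as Theorem~\ref{thm:perry} rather than reproving; your sketched induction on Lorenz-curve segments is the right idea and essentially reconstructs that proof, so nothing is conceptually missing there.

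The one step whose justification does not hold as written is the finiteness of the block decomposition. Continuity alone does not imply that a trajectory crosses between $\v{\gamma}$-orderings finitely many times, and when the trajectory lingers on (or oscillates about) a boundary where some $r_i/\gamma_i$ coincide, a ``crossing count'' is not even well defined; moreover your proposed regularizations (perturbing the trajectory, or replacing it by a piecewise-linear one via Lorenz-curve lower envelopes) are risky, since a nearby curve need not remain monotone under $\succ_{\v{\gamma}}$, and a lower envelope of Lorenz curves need not be realized by a continuous thermomajorizing path of probability vectors. The paper's resolution in Lemma~\ref{lem:plt} avoids all of this: set $\v{\pi}^{k+1}:=\v{\pi}(\v{r}(t_k^+))$ and define $t_{k+1}$ as the \emph{supremum} of all times at which $\v{r}(t)$ has ordering $\v{\pi}^{k+1}$. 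Then no ordering can recur in the sequence (if $\v{\pi}^{k+1}$ equalled an earlier $\v{\pi}^m$, the supremum defining $t_m$ would already exceed $t_k\geq t_m$), so there are at most $d!$ blocks, and Theorem~\ref{thm:perry} applies to each pair $\bigl(\v{r}(t_k^+),\v{r}(t_{k+1})\bigr)$ because both carry the ordering $\v{\pi}^{k+1}$. With that replacement your argument closes and coincides with the paper's.
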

For the proof, see Appendix~\ref{app:generaltheorems}. This is a remarkable simplification of the set of controls required to generate the transformations achievable by the most general Markovian thermal process.\footnote{The result also shows that Markovian thermal processes on incoherent states have the same power as Markovian thermal operations, since every elementary thermalization can be easily seen to be a thermal operation as defined in Ref.~\cite{brandao2011resource}.} We remark once more that this simplification does not hold for thermal processes or thermal operations~\cite{lostaglio2018elementary}, and  so constitutes an important difference between the standard and the resource theory frameworks.\footnote{Notable exceptions are given by infinite temperature limit and systems with trivial Hamiltonians, when the thermal state is a maximally mixed state and the thermomajorization relation is replaced by standard majorization. Then, it is known (see, e.g., Theorem~II.1.10 of Ref.~\cite{bhatia2013matrix}) that majorization between $\v{p}$ and $\v{q}$ is equivalent to the existence of a finite sequence of $T$-transforms mapping $\v{p}$ to $\v{q}$, where the $T$-transform is a bistochastic matrix acting non-trivially only on two levels of the system.} Our result proves that coupling at once more than two system energy levels to the environment is only required when we want to reproduce effects arising from strong interactions or small environments, but it is not necessary in the Markovian regime.  


\section{Second laws in the Markovian regime}
\label{sec:second_laws}

We are now ready to state the main results of this work. First, we will provide a finite set of necessary and sufficient conditions for a given probability distribution~$\v{p}$ to continuously thermomajorize another distribution~$\v{q}$ and so, via Theorem~\ref{thm:main}, for $\v{p} \stackrel{\textrm{MTP}}{\longmapsto} \v{q}$. Second, we will specify a constructive protocol realizing this transition through a sequence of elementary thermalizations. Therefore, our results satisfy desiderata~\ref{d1}~and~\ref{d2}.


\subsection{Finite set of conditions}
\label{sec:finite_set}

In order to state our main result, we will need the concept of a \emph{canonical sequence} of $\v{\gamma}$-orderings, defined as follows. 
\begin{defn}
	\label{def:canon_seq}
A sequence of $\v{\gamma}$-ordering vectors $\{\v{\pi}^k\}$ is canonical when
\begin{enumerate}
	\item $\v{\pi}^k$ and $\v{\pi}^{k+1}$ differ only by a transposition of adjacent elements.
	\item Each $\v{\gamma}$-ordering appears at most once in the sequence.
\end{enumerate}
\end{defn}
	
We then have the following result.
\begin{thm}[Finite second laws conditions]
	\label{thm:finite}
	Given $\v{p}$ and $\v{q}$, enumerate all canonical sequences $\{\v{\pi}^k\}_{k=1}^N$ with $\v{\pi}^1 = \v{\pi}(\v{p})$ and $\v{\pi}^N = \v{\pi}(\v{q})$. For each sequence, construct the state
	\begin{equation}
		\v{f} : = 	\prod_{k=1}^{N-1} T^{i_k,j_k}(1) \v{p},
	\end{equation}
	where $T^{i_k,j_k}(1)$ are full elementary thermalizations (Eq.~\eqref{eq:elementarythermalization} with $\lambda =1$) and the levels $i_k$, $j_k$ are the labels indicating which of the elements of $\v{\pi}^k$ and $\v{\pi}^{k+1}$ differ. Then $\v{p} \ggcurly_{\v{\gamma}} \v{q}$ if and only if for at least one $\v{f}$
	\begin{equation}
		\label{eq:finitesequence}
		\v{f} \succ_{\v{\gamma}} \v{q},
	\end{equation}

\end{thm}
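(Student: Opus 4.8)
The plan is to prove Theorem~\ref{thm:finite} by reducing continuous thermomajorization to a finite optimization over discrete "ordering itineraries," using Theorem~\ref{thm:universality} to replace the continuous thermomajorizing trajectory with a finite sequence of elementary thermalizations, and then arguing that within each fixed itinerary the extremal reachable point is obtained by applying \emph{full} thermalizations ($\lambda=1$) at each ordering boundary.

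\textbf{Step 1: Reduce to sequences of elementary thermalizations.} By Theorem~\ref{thm:main}, $\v{p}\ggcurly_{\v{\gamma}}\v{q}$ iff $\v{p}\stackrel{\textrm{MTP}}{\longmapsto}\v{q}$, and by Theorem~\ref{thm:universality} this holds iff there is a finite sequence $\v{q}=T^{i_m,j_m}(\lambda_m)\cdots T^{i_1,j_1}(\lambda_1)\v{p}$. So it suffices to characterize which $\v{q}$ are reachable by such products. I would first observe that each $T^{i,j}(\lambda)$ preserves the $\v{\gamma}$-ordering on all pairs \emph{except possibly the pair} $(i,j)$, and in fact can only move the pair $(i,j)$ toward being $\v{\gamma}$-ordered (never past equality and back within a single application is irrelevant since $T^{i,j}(\lambda)$ with $\lambda\le1$ is monotone on that pair). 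Hence along any such sequence the $\v{\gamma}$-ordering of the running state changes only by adjacent transpositions, one at a time, and this traces out a walk in the "permutohedron graph" of $\v{\gamma}$-orderings from $\v{\pi}(\v{p})$ to $\v{\pi}(\v{q})$.

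\textbf{Step 2: Collapse each walk to a canonical (self-avoiding, adjacent-transposition) path.} Given any walk of orderings realized by a sequence of elementary thermalizations, I would show it can be shortcut to a canonical sequence without losing reachability: using Corollary~\ref{corol:orderingscoincide} (thermomajorization and continuous thermomajorization coincide when endpoints share a $\v{\gamma}$-ordering), any excursion that returns to a previously visited ordering can be replaced by a single thermomajorization step within that ordering class, then absorbed into subsequent moves. This is the combinatorial heart of the argument and I expect it to be the main obstacle: one must carefully argue that "looping" never helps, i.e., that the set of reachable states is not enlarged by revisiting an ordering, which requires that within a fixed ordering the reachable set is a down-set under $\succ_{\v{\gamma}}$ and that crossing a boundary and coming back is dominated by staying put. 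The clean way is to induct on the number of boundary crossings and repeatedly invoke transitivity of $\succ_{\v{\gamma}}$ together with the fact that $T^{i,j}(\lambda)$ applied when $(i,j)$ is already $\v{\gamma}$-ordered only produces states thermomajorized by the input.

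\textbf{Step 3: Extremality of full thermalizations.} Fix a canonical sequence $\{\v{\pi}^k\}_{k=1}^N$. I claim the "most generous" endpoint reachable along this itinerary is $\v{f}=\prod_{k=1}^{N-1}T^{i_k,j_k}(1)\v{p}$, in the sense that any $\v{q}$ reachable via this itinerary satisfies $\v{f}\succ_{\v{\gamma}}\v{q}$, and conversely $\v{f}\succ_{\v{\gamma}}\v{q}$ implies $\v{q}$ is reachable (append one final thermomajorization step inside the class $\v{\pi}^N=\v{\pi}(\v{q})$, legitimate by Corollary~\ref{corol:orderingscoincide}). For the forward direction: when you must cross the boundary between $\v{\pi}^k$ and $\v{\pi}^{k+1}$ by acting on levels $(i_k,j_k)$, applying $T^{i_k,j_k}(\lambda)$ with $\lambda<1$ is a thermomajorization step (it mixes toward the thermal ratio on that pair), so $T^{i_k,j_k}(1)\v{x}\succ_{\v{\gamma}}T^{i_k,j_k}(\lambda)\v{x}$ for the relevant $\v{x}$ — indeed $T^{i,j}(1)\v{x}$ equalizes the ratio $p_i/\gamma_i=p_j/\gamma_j$ which is the "top" of the two-level Lorenz curve. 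Then monotonicity of thermomajorization under any subsequent $T^{i,j}(\cdot)$ (all of which are $\v{\gamma}$-stochastic) propagates this: $\v{f}\succ_{\v{\gamma}}(\text{any product with some }\lambda_k<1)\v{p}\succeq_{\v{\gamma}}\v{q}$. So every reachable $\v{q}$ along this itinerary is thermomajorized by the corresponding $\v{f}$.

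\textbf{Step 4: Assemble.} Combining: $\v{p}\ggcurly_{\v{\gamma}}\v{q}$ iff $\v{q}$ is reachable by some sequence of elementary thermalizations (Steps 1), iff it is reachable along some canonical itinerary (Step 2), iff $\v{f}\succ_{\v{\gamma}}\v{q}$ for the associated $\v{f}$ of at least one canonical sequence $\{\v{\pi}^k\}$ with $\v{\pi}^1=\v{\pi}(\v{p})$, $\v{\pi}^N=\v{\pi}(\v{q})$ (Step 3). Since the set of $\v{\gamma}$-orderings is finite, there are finitely many canonical sequences, each giving one vector $\v{f}$ and one thermomajorization check, so the condition is finitely verifiable. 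The one subtlety to flag is the degenerate-ratio case (ties in $p_i/\gamma_i$), where $\v{\pi}(\v{p})$ or $\v{\pi}(\v{q})$ is not unique; there one enumerates over all valid orderings of the endpoints, and the theorem statement's "enumerate all canonical sequences" should be read with that convention.

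The step I expect to be genuinely delicate is Step 2 — showing that self-intersecting ordering walks can always be pruned — because it is where the geometry of the permutohedron interacts with the order-theoretic structure of $\succ_{\v{\gamma}}$; everything else is either a direct appeal to the already-proven Theorems~\ref{thm:main} and~\ref{thm:universality} or a two-level linear-algebra computation on $T^{i,j}(\lambda)$.
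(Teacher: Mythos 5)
Your overall architecture (reduce to canonical ordering itineraries, then show the product of full thermalizations is extremal per itinerary) is the same as the paper's, and your Step~2 is essentially the paper's Lemma~\ref{lem:canonical_path}. However, Step~3 contains a genuine error that breaks the forward direction. You claim that since $T^{i_k,j_k}(\lambda)$ ``mixes toward the thermal ratio,'' one has $T^{i_k,j_k}(1)\v{x}\succ_{\v{\gamma}}T^{i_k,j_k}(\lambda)\v{x}$. The inequality goes the other way: because $T^{i,j}(1)=T^{i,j}(\mu)\,T^{i,j}(\lambda)$ for suitable $\mu$, and every $T^{i,j}(\mu)$ is Gibbs-stochastic, it is $T^{i,j}(\lambda)\v{x}$ that thermomajorizes $T^{i,j}(1)\v{x}$, not vice versa (the fully thermalized pair sits at the \emph{bottom} of the two-level Lorenz order, not the ``top''). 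So ``more thermalization dominates less thermalization'' is false, and your chain $\v{f}\succ_{\v{\gamma}}(\text{product with some }\lambda_k<1)\v{p}$ does not follow. The correct reason $\v{f}$ is extremal is different and is the content of the paper's Lemmas~\ref{lem:action} and~\ref{lem:bestcrossing}: to pass from ordering $\v{\pi}^k$ to $\v{\pi}^{k+1}$ the trajectory \emph{must} hit the boundary where the two relevant rescaled populations are equal, and among \emph{all} boundary states reachable from the current state (by arbitrary Markovian thermal processes, not just by a single $T^{i_k,j_k}(\lambda)$), the one obtained by the single full adjacent-pair thermalization is optimal, because it lowers exactly one elbow of the Lorenz curve by the minimal amount required to equalize the two adjacent slopes, leaving every other elbow untouched. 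Your argument never compares the canonical crossing against a general crossing, which is the actual heart of the proof.

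Two smaller points. In Step~1, the claim that $T^{i,j}(\lambda)$ ``preserves the $\v{\gamma}$-ordering on all pairs except possibly $(i,j)$'' is false: changing $p_i$ and $p_j$ can reorder $p_i/\gamma_i$ or $p_j/\gamma_j$ relative to any third level $k$. The correct statement (and the one the paper uses in Lemma~\ref{lem:canonical_path}) is that the ordering changes by adjacent transpositions because the trajectory is \emph{continuous} in $t$ (or in $\lambda$), not because only one pair is affected. Finally, your Step~1 detour through Theorem~\ref{thm:universality} is unnecessary overhead; the paper works directly with thermomajorizing trajectories and their coarse-grained descriptions, which also avoids the issue that a single $T^{i,j}(\lambda)$ can traverse several orderings.
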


\begin{figure}[t]
	\centering
	\includegraphics[width=\columnwidth]{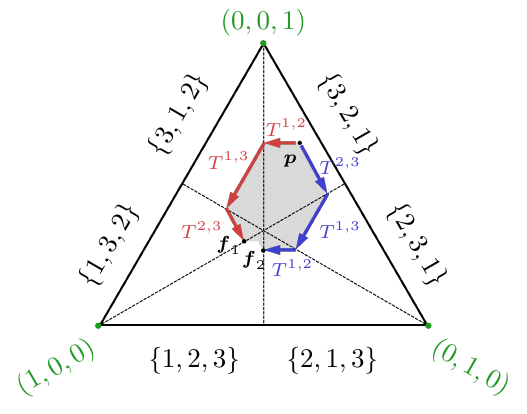}
	\caption{\label{fig:algorithm}\textbf{Verification of continuous thermomajorization for $d=3$.}
		Simplex representing the state space of all 3-dimensional probability distributions with regions of fixed $\v{\gamma}$-orderings indicated by $\{\cdot,\cdot,\cdot\}$. The optimal paths connecting a state $\v{p}(0)$ with $\v{\gamma}$-ordering $\{3,2,1\}$ to states $\v{f}_1$, $\v{f}_2$ of $\v{\gamma}$-ordering $\{1,2,3\}$ are realized by elementary thermalizations $T^{i,j}$ (indicated by red and blue arrows). The set of states with $\v{\gamma}$-ordering $\{1,2,3\}$ achievable from $\v{p}(0)$ by Markovian thermal processes is finally obtained as the union of the set of $\v{q}$ thermomajorized by $\v{f_1}$ and the set of $\v{q}$ thermomajorized by $\v{f}_2$ (for this last construction see, e.g., Appendix~E of Ref.~\cite{lostaglio2018elementary}). Note that the set of achievable states is non-convex. Here the thermal state was chosen to be $\v{\gamma}=[1/3,1/3,1/3]$, which corresponds to the infinite temperature limit.  
	}
\end{figure}
	
The proof of the above theorem can be found in Appendix~\ref{app:thm4}. Before we analyse the complexity of verifying the relation $\v{p} \ggcurly_{\v{\gamma}} \v{q}$ with the use of Theorem~\ref{thm:finite}, let us first illustrate it with an explicit example with $d=3$ and $\v{\gamma} = (1/3, 1/3, 1/3)$ (see Fig.~\ref{fig:algorithm}). In this case, the $\v{\gamma}$-ordering is simply the standard sorting in a non-increasing order and $\succ_{\v{\gamma}}$ coincides with standard majorization $\succ$. Let us choose $\v{p}(0)$ with ordering $\{3,2,1\}$ and $\v{p}(t_f)$ with ordering $\{1,2,3\}$. There are then only two canonical sequences from $\v{\pi}(\v{p}(0))$ to $\v{\pi}(\v{p}(t_f))$: 
	\begin{subequations}
	\begin{align}
		\{3,2,1\} \rightarrow \{2,3,1\} \rightarrow \{2,1,3\} \rightarrow \{1,2,3\}, \\
		\{3,2,1\} \rightarrow \{3,1,2\} \rightarrow \{1,3,2\} \rightarrow \{1,2,3\}.
	\end{align}
	\end{subequations}
	For each sequence we construct a final state:
	\begin{subequations}
	\begin{align}
		\v{f}_1 &= T^{1,2}(1) T^{1,3}(1) T^{2,3}(1) \v{p}(0),\\
		\v{f}_2 &= T^{2,3}(1) T^{1,3}(1) T^{1,2}(1) \v{p}(0).
	\end{align}
	\end{subequations}
	According to Theorem~\ref{thm:finite}, $\v{p} \ggcurly_{{\v{\gamma}}} \v{q}$ if and only if either $\v{f}_1 \succ_{\v{\gamma}} \v{p}(t_f)$ or $\v{f}_2 \succ_{\v{\gamma}} \v{p}(t_f)$, which is simply a set of \mbox{$2(d-1)=4$} inequalities. Employing Theorem~\ref{thm:main}, whenever these inequalities are satisfied there exists a Markovian thermal process mapping $\v{p}(0)$ to $\v{p}(t_f)$.

The above considerations yield a finite set of inequalities and can hence be checked algorithmically, which in turn means that desideratum~\ref{d1} is met. However, how does the approach fare in practice? Taken at face-value, the complexity of the procedure given in Theorem~\ref{thm:finite} grows \emph{extremely} quickly with $d$. To see this, consider a graph whose vertices are all permutations of $\{1,\dots,d\}$, and whose edges connect strings differing by a transposition of adjacent elements. Then, the number of canonical sequences is equal to the number of \emph{simple paths} connecting two points on the graph. In the worst-case scenario, an upper bound on this number scales asymptotically as $O(d^{d!})$. Numerically, we go from $2$ for $d=3$ to over $2000$ for $d=4$.
	
An explicit algorithm based on the above theorem will have to adopt a more clever strategy. The general intuition that we will implement in detail in the next section is the following. The theorem mandates to start from $\v{p}(0)$ and construct all states $\{\v{f}'\}$ corresponding to canonical sequences of length $1$; then, each of those is extended to generate all $\{\v{f}''\}$ states corresponding to canonical sequences of length $2$; and so on, up to the maximum length $d!-1$. Crucially, however, we will terminate all paths that are provably worse than those already constructed. If a canonical sequence leads to a state $\v{f}_1$ with a given $\v{\gamma}$-ordering, and another leads to a state $\v{f}_2\succ_{\v{\gamma}} \v{f}_1$ with the same $\v{\gamma}$-ordering, the former path will be terminated. This is justified by transitivity of thermomajorization: every state achievable by extending further the path containing $\v{f}_1$ can also be achieved by extending the path containing $\v{f}_2$, and can thus be removed from the set of conditions in Eq.~\eqref{eq:finitesequence}. Pictorially, we will construct an algorithm pushing forward all paths at once, while also constantly monitoring which ones can be terminated. 

Intuition suggests that very long paths (of length $O(d!)$) are unlikely to be required, since they involve many more thermalizations than the shortest paths connecting $\v{p}(0)$ to a $\v{f}$ with the same $\v{\gamma}$-ordering as the target $\v{q}$. Hence, we expect the algorithm will terminate sequences before they become too long, leading to large savings compared to the worst-case bound. This is confirmed by numerics: for random initial states and random thermal distributions $\v{\gamma}$ we find that, on average, the longest surviving path to generate every accessible final state has length $L \approx 6.4$ for $d=4$ and $L \approx 11.4$ for $d=5$. These should be compared with the longest possible paths for $d=4$ (23) and $d=5$ (120). In fact, these $L$'s are very close to the length of the shortest sequence connecting two arbitrary points in the above-mentioned graph. This minimal length is equal to $d(d-1)/2$, so it is $6$ for $d=4$ and $11$ for $d=5$. The discrepancy can be explained noticing that not all directions in the space of probabilities are ``equal'', since the thermal state introduces a thermodynamic asymmetry. Nevertheless, to a first approximation, what the numerics confirm is that Theorem~\ref{thm:finite} should hold by replacing ``enumerate all canonical sequences'' with ``enumerate all canonical sequences of (close to) minimal length''. This suggests an improved worst-case scaling of $O(d^{d^2})$.\footnote{It is however worth emphasising that our algorithm is not using any heuristics: it verifies conditions equivalent to Theorem~\ref{thm:finite}. Here, we are merely discussing how it does so much more quickly that one might naively expect.} Finally, numerics also suggest that the path termination procedure is much more efficient than $O(d^{d^2})$ scaling: the maximum number of paths $N$ surviving at any point in the algorithm (averaged over random initial conditions and thermal distributions) is $N = 9$ for $d=4$ and $N\approx 81.1$ for $d=5$.

These are the underlying reasons why our algorithm allows one to relatively quickly solve problems up to $d=7$, which appears prohibitively large if one looks at simple complexity upper bounds. Despite these strong improvements, pushing this dimension up even further will likely require the application of heuristic approaches to the search problem.


\subsection{Constructive protocol}
\label{sec:constructive}

Following Theorem~\ref{thm:finite}, suppose we find a sequence of elementary thermalizations mapping $\v{p}$ to a final state $\v{f}$ with the same $\v{\gamma}$-ordering as the target $\v{q}$ and satisfying $\v{f} \succ_{\v{\gamma}} \v{q}$. Then, one can explicitly construct a sequence of  of $M \leq d-1$ elementary thermalizations transforming $\v{f}$ into $\v{q}$ (see proof of Theorem~12 in Supplemental Material of Ref.~\cite{perry2018sufficient}). We thus conclude that a sequence of Markovian thermal processes achieving the transformation from $\v{p}(0)$ to $\v{p}(t_f)=\v{q}$ is
\begin{equation}
	\v{q}=\prod_{s=1}^{M}  T^{(f_s,f'_s)}(\lambda_s) 	\prod_{k=1}^{N} T^{(i_k,j_k)}(1) \v{p}(0),
\end{equation}
where the first $N$ elementary thermalizations are obtained from a sequence that satisfies Eq.~\eqref{eq:finitesequence}, and the construction of the remaining ones can be found in Ref.~\cite{perry2018sufficient}. This gives an explicit construction involving a sequence of $N+M \leq d! + d -2$ elementary thermalizations achieving any allowed transformation. In other words, Theorem~\ref{thm:universality} is strengthened to a result that also satisfies desideratum~\ref{d2} of Sec.~\ref{sec:approach}:
\begin{cor}[Strengthened universality of elementary thermalizations]
	\label{cor:strengthened}
	\mbox{$\v{p}(0) \stackrel{ \textrm{MTP}}{\longmapsto} 	\v{p}(t_f)$} if and only if there exists a finite sequence of elementary thermalizations such that 
	\begin{equation}
		\prod_{s=1}^{M}  T^{f_s,f'_s}(\lambda_s) 	\prod_{k=1}^{N} T^{i_k,j_k}(1) \v{p}(0) = \v{p}(t_f),
	\end{equation}
where $N \leq d!-1$, $M \leq d-1$ and the sequence can be algorithmically constructed.
\end{cor}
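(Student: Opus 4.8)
The plan is to derive Corollary~\ref{cor:strengthened} by stitching together the three main tools established earlier: Theorem~\ref{thm:main} (MTP $\Leftrightarrow$ continuous thermomajorization), Theorem~\ref{thm:finite} (finite reduction of $\ggcurly_{\v{\gamma}}$ to the existence of a canonical sequence with $\v{f}\succ_{\v{\gamma}}\v{q}$), and the known result of Ref.~\cite{perry2018sufficient} that thermomajorization between same-$\v{\gamma}$-ordering states is realized by at most $d-1$ elementary thermalizations. Since each $T^{i,j}(\lambda)$ is itself an MTP (it is generated by a reset master equation of the required form, as noted in Sec.~\ref{sec:controls}), and MTPs are closed under composition (concatenating integrations of master equations with properties~\ref{p1}--\ref{p2} again yields an MTP), any finite product of elementary thermalizations applied to $\v{p}(0)$ is an allowed final state; this gives the ``if'' direction immediately.

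For the ``only if'' direction, I would start from \mbox{$\v{p}(0)\stackrel{\textrm{MTP}}{\longmapsto}\v{p}(t_f)$} and invoke Theorem~\ref{thm:main} to get $\v{p}(0)\ggcurly_{\v{\gamma}}\v{p}(t_f)$. Setting $\v{q}=\v{p}(t_f)$, Theorem~\ref{thm:finite} then guarantees at least one canonical sequence $\{\v{\pi}^k\}_{k=1}^N$ with $\v{\pi}^1=\v{\pi}(\v{p}(0))$, $\v{\pi}^N=\v{\pi}(\v{q})$, and associated state $\v{f}=\prod_{k=1}^{N-1}T^{i_k,j_k}(1)\v{p}(0)$ satisfying $\v{f}\succ_{\v{\gamma}}\v{q}$. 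Since $\v{f}$ and $\v{q}$ have the same $\v{\gamma}$-ordering $\v{\pi}^N$, I would then apply the construction from the proof of Theorem~12 in the Supplemental Material of Ref.~\cite{perry2018sufficient} to obtain a sequence of $M\leq d-1$ elementary thermalizations $T^{f_s,f'_s}(\lambda_s)$ with $\prod_{s=1}^M T^{f_s,f'_s}(\lambda_s)\v{f}=\v{q}$. Concatenating the two sequences yields the claimed identity, and the bound $N\leq d!-1$ follows because a canonical sequence visits each $\v{\gamma}$-ordering at most once (Definition~\ref{def:canon_seq}) and there are $d!$ orderings in total, so its length is at most $d!$, meaning $N-1\leq d!-1$ transpositions; relabelling $N-1$ as $N$ (or absorbing the off-by-one) gives $N\leq d!-1$. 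The algorithmic constructability is inherited: enumerating canonical sequences is a finite search, and the $\v{f}\to\v{q}$ step of Ref.~\cite{perry2018sufficient} is itself explicit.

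The main subtlety — and the step I would treat most carefully — is the bookkeeping of the two length bounds and ensuring consistency between the indexing in Theorem~\ref{thm:finite} (where the product runs $k=1$ to $N-1$ over a sequence of $N$ orderings) and the statement of the corollary (where the product runs $k=1$ to $N$ with $N\leq d!-1$). This is purely combinatorial: a canonical sequence of distinct $\v{\gamma}$-orderings has length at most $d!$, hence at most $d!-1$ adjacent transpositions, hence at most $d!-1$ full elementary thermalizations $T^{i_k,j_k}(1)$; I would simply set $N$ in the corollary to be this number of transpositions. A second point worth a sentence is that the same-ordering reduction of Ref.~\cite{perry2018sufficient} assumes $\v{f}\succ_{\v{\gamma}}\v{q}$, which is exactly what Theorem~\ref{thm:finite} delivers, so there is no gap. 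Beyond this, the argument is a clean assembly of already-proven pieces, so I do not anticipate a hard technical obstacle — the content is in Theorems~\ref{thm:main}, \ref{thm:universality}, and~\ref{thm:finite}, and this corollary merely packages their combination into an explicit protocol meeting desideratum~\ref{d2}.
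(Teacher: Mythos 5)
Your proposal is correct and follows essentially the same route as the paper: Theorem~\ref{thm:main} plus Theorem~\ref{thm:finite} yield the canonical sequence of at most $d!-1$ full thermalizations producing $\v{f}\succ_{\v{\gamma}}\v{q}$ in the same $\v{\gamma}$-ordering, Theorem~12 of Ref.~\cite{perry2018sufficient} supplies the remaining $M\leq d-1$ partial thermalizations, and the converse follows from elementary thermalizations being Markovian thermal processes (Theorem~\ref{thm:universality}). Your explicit handling of the $N$ versus $N-1$ indexing is a detail the paper leaves implicit, but it does not change the substance.
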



\subsection{Comparison with previous works}
\label{sec:comparison}

We want to emphasize that the two desiderata considered here,~\ref{d1}-\ref{d2} in Sec.~\ref{sec:approach}, important as they are to systematically develop and optimize explicit protocols in nonequilibrium thermodynamics, are not typically met by general frameworks. The standard framework based on entropy production provides thermodynamic constraints~\cite{landi2021irreversible}, but as discussed in Sec.~\ref{sec:insufficiency} these are insufficient to characterize the future evolution. Furthermore, these entropic relations do not provide tools to construct and optimize nonequilibrium thermodynamic protocols, unless one focuses on special classes of dynamics, or restricted regimes such as close to equilibrium transformations and slow driving protocols~\cite{abiuso2020geometric}. Hence, neither of the two desiderata is satisfied. The same holds true for prominent results in the resource-theoretic approach to thermodynamics. The second laws constraints of Ref.~\cite{brandao2013second} are neither finitely checkable, nor they provide a way to construct explicit protocols. The same applies to the ``quantum majorization constraints'' for thermal processes, the main result derived in Ref.~\cite{gour2018quantum}. 

Several other settings satisfy desideratum~\ref{d1}, but not ~\ref{d2}. The framework based on correlating catalysis~\cite{lostaglio2015stochastic, mueller2018correlating} provides finitely checkable conditions, but not explicit protocols. Similarly, the thermomajorization constraints of Ref.~\cite{horodecki2013fundamental} are finitely checkable by linear programming, but the construction of explicit operations achieving the transformations is not known. More precisely, Ref.~\cite{Alhambra2019heatbathalgorithmic} did provide a set of thermal processes, called $\beta$-permutations, whose convex combination realizes the most general transformation. However, beyond the case $d=2$, it is an open question how these processes can be realized by explicit system-bath interactions. The same is true for the work of Ref.~\cite{shiraishi2021quantum}, which provides Gibbs-preserving channels realizing general transformations in the correlating catalyst setting, but not their explicit thermodynamic realization. All these are examples where the desideratum~\ref{d2} is not satisfied. Despite noticeable progress~\cite{lipka2021all, henao2021catalytic}, the difficulty of satisfying both desiderata has hindered the application of information thermodynamics frameworks.  

Ref.~\cite{perry2018sufficient} is an exception in this regard, as it provides constructive protocols to realize all the transformations allowed by thermal processes. However, it allows a much more extensive control of the system than we considered here, including full control over its energy spectrum and that of a qubit ancilla. These controls are very powerful: as the authors showed, they allow one to achieve every state transformation possible under thermal processes, including those requiring arbitrary non-Markovian dynamics. Here, instead, we focused on the characterization of thermalizations described by Markovian master equations and involving limited control over the energy spectrum. This paves the way to several applications, as discussed in the accompanying paper~\cite{korzekwa2022optimizing}.


\section{Explicit algorithmic verification}
\label{sec:algorithm}

Naturally, to verify the complete second laws conditions in higher dimensions, one would like to develop an explicit algorithm that exhaustively verifies the inequalities in Theorem~\ref{thm:finite}. Here we propose one such algorithm with two variants, for which we provide a corresponding \texttt{Mathematica} code~\cite{korzekwa2021continuous}. The fast version only verifies whether $\v{p}(0) \stackrel{ \textrm{MTP}}{\longmapsto} \v{p}(t_f)$ for a fixed initial and final state, while the slow version constructs the set of all final states achievable from a given initial state.

\begin{pabox}[label={algorithmcounter}]{Algorithm verifying $\v{p}\ggcurly_{\v{\gamma}}\v{q}$}
	\begin{enumerate}[leftmargin=0.2cm]

		\item \noindent \textbf{Initialize.}

		\begin{enumerate}[leftmargin=0.2cm]
			\item Create a set of states \texttt{Current} that initially contains only the initial state $\v{p}$.
			\item For each $\v{\gamma}$-ordering, indexed by $k$ from 1 to $d!$, create a set \texttt{Optimal[k]}. Initially all sets are empty except for the ones corresponding to $\v{\gamma}$-orderings of $\v{p}$, which contain only $\v{p}$. 

		\end{enumerate}

		\item \textbf{Generate optimal states.}

		\begin{enumerate}[leftmargin=0.2cm]
			
			\item Update \texttt{Current} to contain all states achievable from old \texttt{Current} via full thermalizations between 2 levels adjacent in the $\v{\gamma}$-ordering. 
			
			\item{} [Only for fast version] Remove those elements of \texttt{Current} that do not thermomajorize $\v{q}$.
			
			\item Denote by \texttt{Current[k]} all states from \texttt{Current} with $\v{\gamma}$-ordering $k$. For each $k$, remove from \texttt{Current} all those states of \texttt{Current[k]} that are thermomajorized by either another state from \texttt{Current[k]} or by any state from \texttt{Optimal[k]}.
			
			\item For each $k$, remove from \texttt{Optimal[k]} all states that are thermomajorized by any state from \texttt{Current[k]}, and then add all states from \texttt{Current[k]} to \texttt{Optimal[k]}.

			\item Repeat steps (a)-(d) until \texttt{Current} is empty.
		\end{enumerate}
		\item \textbf{Verify thermomajorization condition.}
		\begin{enumerate}[leftmargin=0.2cm]
			\item{} [Only for fast version] Verify whether any of the states from \texttt{Optimal[k]}, where $k$ corresponds to $\v{\gamma}$-ordering of $\v{q}$, thermomajorizes $\v{q}$. If yes, then $\v{p}\ggcurly_{{\v{\gamma}}} \v{q}$; otherwise the relation does not hold. 
			\item{} [Only for slow version] The sets \texttt{Optimal[k]} contain all the information about states continuously thermomajorized by $\v{p}$. More precisely,
			 the states with $\v{\gamma}$-ordering $k$ which are continuously thermomajorized by $\v{p}$ are those thermomajorized by \texttt{Optimal[k]}.
		\end{enumerate}
	\end{enumerate}
\end{pabox}

Let us make a few comments on the above algorithm. First, it is clear that it satisfies desideratum~\ref{d2} of finite verifiability, but it can be easily modified to also satisfy desideratum~\ref{d1} of constructability. One simply needs to keep track of the ``history'' of each state: in step~(2a), one should record which elementary thermalization led to a new state. This history should be kept when updating the optimal states with current states in step~(2d). As a result, at the end algorithm we will not only have the list of optimal states within each $\v{\gamma}$-ordering, but we will also know the sequence of elementary thermalizations that need to be applied to the initial state to obtain each of them.

Second, the possible number of canonical sequences grows very fast with the system's dimension $d$. Thus, one may try to develop heuristics to distinguish between better and worse choices of canonical sequences. Using the example from Fig.~\ref{fig:algorithm}, it is intuitively clear that in order to go from $\v{\gamma}$-ordering $\{1,2,3\}$ to $\v{\gamma}$-ordering $\{2,1,3\}$ one should do it directly rather than following the path $\{1,2,3\} \rightarrow \{1,3,2\} \rightarrow \{3,1,2\}\rightarrow  \{3,2,1\}\rightarrow \{2,3,1\}\rightarrow \{2,1,3\}$. Even without any heuristics, we were able to run (on a standard laptop computer) the slow version of the algorithm implemented in \texttt{Mathematica}~\cite{korzekwa2021continuous} to solve the $d=6$ case in minutes and the $d=7$ case in hours. 


\section{Conclusions and outlook}
\label{sec:conclusions}

In this paper we provided a hybrid framework overcoming current limitations of resource-theoretic and master equation approaches to quantum thermodynamics through the novel notion of continuous thermomajorization. Crucially, our approach includes explicit methods to fully solve the question of the existence of a Markovian thermal process mapping between two non-equilibrium states and returns a corresponding sequence of elementary controls when these exist. Exhaustive searches are feasible on a laptop machine up to $d = 7$ through the \texttt{Mathematica} code we provided~\cite{korzekwa2021continuous}. To achieve this, we employed an exhaustive algorithm (discussed in Sec.~\ref{sec:algorithm}) whose strategy appears to effectively cap the maximum sequence length to (close to) the minimum length, while also curbing the number of active sequences at any given time.

While here we tackled the question of describing an algorithm which is guaranteed to provide a definite answer to the interconversion problem, in many circumstances it is enough to find a method that is able to construct useful working protocols in most cases. A promising direction to probe higher-dimensional systems is then to relax the exhaustive search to a heuristic search. For example, if we are interested in states with a target final $\v{\gamma}$-order $\v{\pi}_f$, one could heuristically restricts to canonical sequences that at each step decrease the transposition distance between the current state's $\v{\gamma}$-order $\v{\pi}_c$ and $\v{\pi}_f$. Another promising direction is to focus on specific task of special relevance, such as cooling. A heuristic protocol is tasked, for example, with decreasing the average energy of the final state. Then, even if an exhaustive search is out of reach, we could run the protocol and terminate it after a given time has passed, or when a given energy target has been met. Coarse-grainings form another direction to be looked at, since we may be interested in the broad properties of the final energy distribution rather than in the exact population in each microscopic energy state. A combination of problem relaxation and heuristic techniques offer, in our opinion, the best way forward to tackling high-dimensional problems.
 
Pushing the achievable dimension up, and combining the current algorithm optimizing the thermalization stage with alternative methods to optimize unitary stages, will likely open up a range of applications, such as the optimization of quantum thermodynamic cycles of heat engines. In the accompanying paper~\cite{korzekwa2022optimizing} we already discuss ways of employing the framework developed here to construct provably optimal thermodynamic protocols. 

At the same time, our framework also offers a rigorous information-theoretical foundation to the dynamical viewpoint of quantum thermodynamics. This approach complements the master equation toolbox, as we have seen, for example, with the systematic construction of generalized entropy production inequalities. Another direction that should be further explored concerns the role of quantum coherence in these settings. We provide some initial remarks in Appendix~\ref{app:coherence}, while a solution to this problem satisfying both desiderata~\ref{d1}-\ref{d2} is still out of reach.    


\section*{Acknowledgements}

M.L. thanks A. Levy for useful discussions. K.K. acknowledges financial support by the Foundation for Polish Science through TEAM-NET project (contract no. POIR.04.04.00-00-17C1/18-00). M.L. acknowledges financial support from the the European Union's Marie Sk{\l}odowska-Curie individual Fellowships (H2020-MSCA-IF-2017, GA794842), Spanish MINECO (Severo Ochoa SEV-2015-0522 and project QIBEQI FIS2016-80773-P), Fundacio Cellex and Generalitat de Catalunya (CERCA Programme and SGR 875) and grant EQEC No. 682726.

\appendix


\section{Proof of Theorem~\ref{thm:main} and Theorem~\ref{thm:universality}}
\label{app:generaltheorems}

To build up towards our final results we will need several intermediate technical statements. We start by recalling an important result derived in Ref.~\cite{perry2018sufficient}.

\begin{thm}[Theorem~12, Supplemental Material of Ref.~\cite{perry2018sufficient}]
	\label{thm:perry}
	If $\v{p} \succ_{\v{\gamma}}\v{q}$ and $\v{\pi}(\v{p}) = \v{\pi}(\v{q})$, there exists a sequence of elementary thermalizations $\{T^{i_k,j_k}(\lambda_k)\}_{k=1}^f$ such that 
	\begin{equation}
	T^{i_f,j_f}(\lambda_f) \dots T^{i_1,j_1}(\lambda_1) \v{p} = \v{q}.
	\end{equation}
	Moreover, $f \leq d-1$.
\end{thm}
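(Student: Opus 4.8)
The plan is to prove this by reducing to the combinatorics of Lorenz curves and then matching the target distribution one coordinate at a time via an intermediate value argument on the parameter $\lambda$ of a single elementary thermalization. First I would fix notation: since relabelling energy levels commutes with everything in sight, assume without loss of generality that the common $\v{\gamma}$-ordering $\v{\pi}(\v{p})=\v{\pi}(\v{q})$ is the identity, i.e. $p_1/\gamma_1\geq\dots\geq p_d/\gamma_d$ and $q_1/\gamma_1\geq\dots\geq q_d/\gamma_d$. Write $P_j=\sum_{i=1}^j p_i$ and $Q_j=\sum_{i=1}^j q_i$. Because the orderings coincide, the Lorenz curves of $\v{p}$ and $\v{q}$ relative to $\v{\gamma}$ have their vertices at the \emph{same} abscissae $x_j=\sum_{i=1}^j\gamma_i$, so $\v{p}\succ_{\v{\gamma}}\v{q}$ is simply the statement $P_j\geq Q_j$ for all $j$, with $P_0=Q_0=0$ and $P_d=Q_d=1$. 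If $\v{p}=\v{q}$ there is nothing to prove, so assume some partial sums are strictly unequal.

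The key inductive step I would aim for is the following claim: one can find an elementary thermalization $T=T^{i,i+1}(\lambda)$ acting on a pair of levels \emph{adjacent in the $\v{\gamma}$-ordering}, with $\lambda\in[0,1]$, such that (a) $T\v{p}$ still has $\v{\gamma}$-ordering equal to the identity, (b) $T\v{p}\succ_{\v{\gamma}}\v{q}$, and (c) $T\v{p}$ agrees with $\v{q}$ in at least one coordinate in which $\v{p}$ did not. Granting this, iterating produces vectors agreeing with $\v{q}$ on an increasing set of coordinates; once $d-1$ coordinates agree, normalisation forces the last one, so at most $d-1$ elementary thermalizations are used, which is exactly the bound claimed.

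To build the step, let $j$ be the smallest index with $p_j\neq q_j$; then $p_i=q_i$ for $i<j$, and $P_j\geq Q_j$ together with $p_j=P_j-Q_{j-1}$ forces $p_j>q_j$, so level $j$ carries a ``surplus''. Applying $T^{j,j+1}(\lambda)$ moves mass from $j$ to $j+1$ (the allowed relaxation direction, since $p_j/\gamma_j\geq p_{j+1}/\gamma_{j+1}$), monotonically decreasing $p_j$ as $\lambda$ runs from $0$ to $1$ down to the equalised value $\tfrac{\gamma_j}{\gamma_j+\gamma_{j+1}}(p_j+p_{j+1})$. If this minimum is $\leq q_j$, the intermediate value theorem gives $\lambda^\ast\in[0,1]$ with $(T^{j,j+1}(\lambda^\ast)\v{p})_j=q_j$; one then checks directly that the resulting vector keeps the identity $\v{\gamma}$-ordering (the new $j$-th ratio equals $q_j/\gamma_j$, correctly sandwiched because $\lambda^\ast\leq 1$ means no overshoot, while the $(j{+}1)$-th mass only increased) and still satisfies $P'_i\geq Q_i$ for every $i$ (it is unchanged for $i\neq j$, and $P'_j=Q_j$), so coordinate $j$ is matched, all required properties hold, and the routine continuity/monotonicity checks close this sub-case.

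The main obstacle is the remaining sub-case, where even the full thermalization $T^{j,j+1}(1)$ leaves $p_j>q_j$: a single elementary step on an adjacent pair cannot drain the surplus, and naively iterating full thermalizations to push mass further down re-opens coordinate $j$ and threatens to exceed the budget of $d-1$ operations. Handling this requires exploiting the structure of the target's Lorenz curve -- in particular its flat segments, where $q_i/\gamma_i$ is constant -- to choose the pair of levels (possibly non-adjacent, together with an argument that the $\v{\gamma}$-ordering is nonetheless preserved, or else by temporarily coarse-graining a run of already-equalised levels into a single effective level) so that each operation either matches a coordinate or strictly decreases the number of levels not yet in their final block, keeping the total count at $d-1$. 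This bookkeeping is the technical heart of the construction of Ref.~\cite{perry2018sufficient}, and it is the part I expect to be delicate; everything else is elementary. An alternative route would be to invoke the classical ``sequential averaging'' characterisation of ordinary majorization by products of $T$-transforms and transport it to the $\v{\gamma}$-weighted setting via the substitution $p_i\mapsto p_i/\gamma_i$ with base measure $\gamma_i$, with the elementary thermalizations of Eq.~\eqref{eq:elementarythermalization} playing the role of weighted $T$-transforms.
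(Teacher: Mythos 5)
A preliminary remark: the paper does not prove this statement at all --- it is imported verbatim as Theorem~12 of the Supplemental Material of Ref.~\cite{perry2018sufficient}, so there is no in-paper proof to compare against. Your proposal must therefore be judged on its own merits, and on those terms it is incomplete.

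The setup is right (same $\v{\gamma}$-ordering $\Rightarrow$ common abscissae $\Rightarrow$ $\v{p}\succ_{\v{\gamma}}\v{q}$ reduces to $P_j\geq Q_j$), and your ``easy sub-case'' is verified correctly. But the sub-case you defer is not a corner case --- it is generic, and your primary strategy of matching the \emph{leftmost} mismatched coordinate via an \emph{adjacent} pair cannot be repaired without abandoning it. Concretely, take $\v{\gamma}=(1/3,1/3,1/3)$, $\v{p}=(0.5,0.3,0.2)$, $\v{q}=(0.34,0.33,0.33)$: here $T^{1,2}(1)\v{p}=(0.4,0.4,0.2)$ still has a surplus at level $1$, and any scheme that first fully equalizes levels $1,2$ and then pushes mass onward must touch level $1$ again, so the $d-1$ budget is already threatened at $d=3$. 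The gestures you make at a fix (coarse-graining equalized blocks, non-adjacent pairs) are not developed into an argument, and the ordering-preservation and operation-count claims for them are exactly the nontrivial content of the theorem. So the ``technical heart'' you flag is indeed missing, and the proof as written does not go through.

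The viable route is the one you relegate to a closing remark: the $\v{\gamma}$-weighted Hardy--Littlewood--P\'olya $T$-transform argument, phrased on Lorenz curves (each $T^{i,j}(\lambda)$ with $i<j$ in the ordering lowers the contiguous block of elbow heights $y_i,\dots,y_{j-1}$ by a uniform amount, cf.\ Lemma~\ref{lem:action}, so one peels off the deficits $P_k-Q_k$ with at most $d-1$ such moves, generally on \emph{non-adjacent} pairs). This is essentially what Ref.~\cite{perry2018sufficient} does. Even on that route, the one point you would still need to check --- and do not --- is that each required transfer is realizable with $\lambda\in[0,1]$, i.e.\ that it never overshoots the two-level Gibbs point. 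Elementary thermalizations are only the ``one-directional half'' of general $T$-transforms, and it is precisely the hypothesis $\v{\pi}(\v{p})=\v{\pi}(\v{q})$ that guarantees (via $p'_{i}\geq q_{i}\geq q_{j}\cdot\gamma_i/\gamma_j\geq p'_{j}\cdot\gamma_i/\gamma_j$ after each transfer) that the pair never crosses its equalization point, so $\lambda\leq1$ suffices. Without that verification the reduction to the classical majorization result is not complete.
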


We will also need the following known result characterizing thermomajorization
	
	\begin{thm}[See Refs.~\cite{horodecki2013fundamental, ruch1978mixing, lostaglio2019introductory}]
		\label{thm:thermomajorizationprevious} 
		There exists a stochastic matrix $T$ such that $T \v{p} = \v{q}$ and $T \v{\gamma} = \v{\gamma}$ if and only if  $\v{p} \succ_{\v{\gamma}} \v{q}$.
	\end{thm}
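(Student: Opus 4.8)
\textbf{Proof proposal for Theorem~\ref{thm:thermomajorizationprevious}.}

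The statement to establish is the classical equivalence: there exists a stochastic matrix $T$ with $T\v{p}=\v{q}$ and $T\v{\gamma}=\v{\gamma}$ if and only if $\v{p}\succ_{\v{\gamma}}\v{q}$. Since this is a known result, the plan is to give a clean self-contained argument rather than cite it blindly. The natural strategy is to reduce thermomajorization to ordinary majorization by an appropriate rescaling (the ``Gibbs-rescaling'' trick), apply the standard majorization theorem (existence of a doubly stochastic $D$ with $D\v{x}=\v{y}$ iff $\v{x}\succ\v{y}$, Birkhoff--von Neumann plus Hardy--Littlewood--P\'olya), and then translate the doubly stochastic map back into a $\v{\gamma}$-preserving stochastic map. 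The subtlety is that the rescaling only produces honest probability vectors when $\v{\gamma}$ is uniform, so for general $\v{\gamma}$ one works with the embedding into a larger-dimensional space where Gibbs states become (approximately) uniform, which is exactly where the one genuinely delicate point lies.

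Concretely, the forward direction ($T$ exists $\Rightarrow$ thermomajorization) is the easy half: given $T$ stochastic with $T\v{\gamma}=\v{\gamma}$, one checks directly that the Lorenz curve of $\v{q}=T\v{p}$ lies below that of $\v{p}$. The cleanest route is to use the characterization of the Lorenz curve height $L_x(\v{p}\|\v{\gamma})$ as a maximum of $\sum_i \lambda_i p_i$ over $0\le\lambda_i\le 1$ with $\sum_i\lambda_i\gamma_i = x$; applying $T$ and using $\sum_i T_{ij}=1$ together with $T\v{\gamma}=\v{\gamma}$ shows the feasible set for $\v{q}$ pulls back into the feasible set for $\v{p}$, giving $L_x(\v{q}\|\v{\gamma})\le L_x(\v{p}\|\v{\gamma})$ for all $x$, i.e.\ $\v{p}\succ_{\v{\gamma}}\v{q}$. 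For the converse, I would first treat the case where all $\gamma_i$ are rational, say $\gamma_i = g_i/G$ with $g_i,G$ positive integers, and embed $\R^d$ into $\R^G$ by splitting level $i$ into $g_i$ copies each carrying population $p_i/g_i$. Under this embedding $\v{\gamma}$ maps to the uniform distribution on $G$ points, and $\v{p}\succ_{\v{\gamma}}\v{q}$ becomes ordinary majorization of the embedded vectors (the Lorenz curves coincide after the obvious sorting identification). The classical majorization theorem then supplies a doubly stochastic matrix on $\R^G$; projecting/averaging it back down and exploiting its action on the uniform distribution yields a stochastic $T$ on $\R^d$ with $T\v{p}=\v{q}$ and $T\v{\gamma}=\v{\gamma}$. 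Finally, the case of irrational $\gamma_i$ follows by a continuity/compactness argument: approximate $\v{\gamma}$ by rationals $\v{\gamma}^{(n)}\to\v{\gamma}$, obtain matrices $T^{(n)}$ with $T^{(n)}\v{p}=\v{q}$ and $T^{(n)}\v{\gamma}^{(n)}=\v{\gamma}^{(n)}$, and extract a convergent subsequence (the set of stochastic matrices is compact), whose limit $T$ satisfies the required conditions — here one must be slightly careful that $\v{p}\succ_{\v{\gamma}^{(n)}}\v{q}$ can be arranged or that the defect is controlled, which may require perturbing $\v{p}$ slightly as well and taking a second limit.

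The main obstacle is precisely this last step: thermomajorization with respect to $\v{\gamma}^{(n)}$ need not hold just because it holds for $\v{\gamma}$, since the $\v{\gamma}$-ordering can jump under perturbation of $\v{\gamma}$. The cleanest fix is to avoid rationality entirely and argue directly that the set $\{\v{q}: \v{p}\succ_{\v{\gamma}}\v{q}\}$ is a polytope whose vertices are of the form $D\v{p}$ for ``$\v{\gamma}$-permutations'' (the $\beta$-permutations), each of which is visibly a $\v{\gamma}$-preserving stochastic matrix; then any $\v{q}$ in the polytope is a convex combination $\sum_k \mu_k D_k\v{p} = (\sum_k\mu_k D_k)\v{p}$, and $T:=\sum_k\mu_k D_k$ is stochastic and fixes $\v{\gamma}$ by linearity. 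Verifying that the extreme points of the thermomajorization polytope are exactly the $\beta$-permuted images of $\v{p}$ is the real content, but it is a finite linear-programming fact (the Lorenz-curve constraints are linear in $\v{q}$) and is standard; I would present this polytope argument as the primary proof and relegate the rescaling picture to a remark. Either way, the forward direction is a half-page and the converse is where essentially all the work sits.
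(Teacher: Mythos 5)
The paper does not prove this theorem at all: it is imported verbatim from the cited literature (Refs.~\cite{horodecki2013fundamental,ruch1978mixing,lostaglio2019introductory}), so there is no in-paper argument to compare against. Your proposal is a correct reconstruction of the standard proofs found in exactly those references. The forward direction via the variational characterization \mbox{$L_x(\v{p}\|\v{\gamma})=\max\{\sum_i\lambda_i p_i: 0\le\lambda_i\le 1,\ \sum_i\lambda_i\gamma_i=x\}$} is sound: with $\sum_i T_{ij}=1$ and $T\v{\gamma}=\v{\gamma}$ the pulled-back weights $\mu_j=\sum_i\lambda_iT_{ij}$ remain feasible at the same abscissa $x$, giving $L_x(\v{q}\|\v{\gamma})\le L_x(\v{p}\|\v{\gamma})$. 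For the converse, your two routes are precisely the two that appear in the literature: the rational-embedding argument is essentially that of Ref.~\cite{horodecki2013fundamental}, and you correctly identify its genuine weak point (thermomajorization is not stable under perturbing $\v{\gamma}$, since the $\v{\gamma}$-ordering can jump, so the limiting argument needs care); the polytope route via $\beta$-permutations is the one underlying Refs.~\cite{lostaglio2019introductory,Alhambra2019heatbathalgorithmic} and is the cleaner choice, since each extreme point is by construction $D_k\v{p}$ for an explicit Gibbs-stochastic $D_k$ and convex combinations of such matrices remain Gibbs-stochastic. The only substantive gap in self-containedness is the extreme-point characterization of $\{\v{q}:\v{p}\succ_{\v{\gamma}}\v{q}\}$, which you flag as "the real content" but do not prove; that lemma is indeed where all the work lives, and as written your proof defers it to the same literature the paper cites. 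As a reconstruction of the known argument the proposal is correct and appropriately honest about where the difficulties sit.
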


Next, we link continuous thermomajorization between two distributions with the existence of a sequence of elementary thermalizations bringing one distribution to another.

\begin{lem}[Continuous thermomajorization and elementary thermalizations]
	\label{lem:plt}
	$\v{p} \ggcurly_{\v{\gamma}} \v{q}$ if and only if there exists a finite sequence of elementary thermalizations $\{T^{i_k,j_k}(\lambda_k)\}_{k=1}^f$ such that 
	\begin{equation}
	\label{eq:sequenceplt}
	T^{i_f,j_f}(\lambda_f) \dots T^{i_1,j_1}(\lambda_1) \v{p} = \v{q}.
	\end{equation}
\end{lem}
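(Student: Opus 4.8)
\textbf{Proof plan for Lemma~\ref{lem:plt}.}

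The plan is to establish both implications separately, using the earlier structural results as black boxes. For the ``if'' direction, suppose $\v{q} = T^{i_f,j_f}(\lambda_f) \cdots T^{i_1,j_1}(\lambda_1) \v{p}$. I would build the thermomajorizing trajectory by concatenation: each factor $T^{i_k,j_k}(\lambda)$ is, by Eq.~\eqref{eq:elementarythermalization}, part of a one-parameter semigroup $\{T^{i_k,j_k}(\lambda_s)\}_{s\in[0,1]}$ with $\lambda_s = 1-e^{-s/\tau}$, and along this family the state moves continuously. The key observation is that applying $T^{i,j}(\lambda')$ for $\lambda' \geq \lambda$ produces a state thermomajorized by the one at $\lambda$ (this follows because $T^{i,j}(\lambda'') T^{i,j}(\lambda) = T^{i,j}(\lambda + \lambda'' - \lambda\lambda'')$ is again an elementary thermalization, hence Gibbs-stochastic, and by Theorem~\ref{thm:thermomajorizationprevious} pre-composition with a Gibbs-stochastic map can only move down the thermomajorization order). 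So each elementary thermalization, run continuously from $\lambda=0$ to its target $\lambda_k$, traces a monotone (with respect to $\succ_{\v{\gamma}}$) continuous arc; glueing the $f$ arcs end to end and reparametrizing time onto $[0,t_f)$ gives a trajectory $\v{r}(t)$ satisfying the three conditions of Definition~\ref{def:markov_majo}, hence $\v{p}\ggcurly_{\v{\gamma}}\v{q}$.

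For the ``only if'' direction, suppose $\v{p}\ggcurly_{\v{\gamma}}\v{q}$, witnessed by a thermomajorizing trajectory $\v{r}(t)$. The idea is to discretize the trajectory at the finitely many times at which the $\v{\gamma}$-ordering changes. Since $\v{r}$ is continuous and the simplex is partitioned into finitely many closed regions of constant $\v{\gamma}$-ordering, I would pick times $0 = t_0 < t_1 < \cdots < t_m = t_f$ such that on each subinterval the endpoints $\v{r}(t_\ell)$ and $\v{r}(t_{\ell+1})$ lie in a common $\v{\gamma}$-ordering region (consecutive regions share a boundary hyperplane, so one can always choose the split points on the boundaries). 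Monotonicity of the trajectory gives $\v{r}(t_\ell)\succ_{\v{\gamma}}\v{r}(t_{\ell+1})$ with $\v{\pi}(\v{r}(t_\ell)) = \v{\pi}(\v{r}(t_{\ell+1}))$, so Theorem~\ref{thm:perry} applies to each segment and yields a finite sequence of at most $d-1$ elementary thermalizations mapping $\v{r}(t_\ell)$ to $\v{r}(t_{\ell+1})$. Composing all $m$ such sequences gives the desired finite product of elementary thermalizations sending $\v{p}$ to $\v{q}$.

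The main obstacle is the bookkeeping in the ``only if'' direction: one must argue that the $\v{\gamma}$-ordering of a continuous path changes only ``mildly'', i.e. that one can always insert splitting times so that each consecutive pair of split points shares a $\v{\gamma}$-ordering, and that only finitely many splits are needed. The subtlety is that a continuous path could in principle oscillate across a boundary hyperplane infinitely often, or pass through lower-dimensional strata where several $p_i/\gamma_i$ coincide simultaneously; one needs to observe that because $\v{r}(t)$ is \emph{monotone} under $\succ_{\v{\gamma}}$, once it leaves a given ordering region in a way that strictly decreases some Lorenz-curve value it cannot return (the relevant partial-sum functionals are non-increasing in $t$), which caps the number of region changes by something finite, e.g. by the number of reachable orderings. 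Making this ``no return'' argument precise — possibly by working with the finitely many Lorenz corner-heights $L_{x}(\v{r}(t)\|\v{\gamma})$ as monotone functions of $t$ and using their monotonicity to bound the combinatorial complexity of the path — is the technical heart of the proof; the rest is assembling Theorems~\ref{thm:perry} and~\ref{thm:thermomajorizationprevious}.
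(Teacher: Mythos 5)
Your ``if'' direction is complete and coincides with the paper's: the semigroup identity \mbox{$T^{i,j}(\lambda'')T^{i,j}(\lambda)=T^{i,j}(\lambda+\lambda''-\lambda\lambda'')$} shows that any later point of the concatenated path is reached from any earlier point by a product of Gibbs-stochastic matrices, and Theorem~\ref{thm:thermomajorizationprevious} then gives $\v{r}(t')\succ_{\v{\gamma}}\v{r}(t'')$ for $t'\leq t''$. The skeleton of your ``only if'' direction (split the trajectory so that consecutive split points share a $\v{\gamma}$-ordering, apply Theorem~\ref{thm:perry} on each segment, compose) is also the paper's.

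The gap is exactly where you locate it, and your proposed repair does not close it. The ``no return'' claim is not established by monotonicity: the boundary between two adjacent $\v{\gamma}$-orderings is the locus $r_i(t)/\gamma_i=r_j(t)/\gamma_j$, where the two candidate Lorenz curves coincide, so a crossing need not strictly decrease any Lorenz corner height, and nothing in your argument excludes a trajectory that touches or crosses such a boundary infinitely often, or that revisits an ordering it has already left (note that the paper's Lemma~\ref{lem:canonical_path} explicitly has to deal with orderings appearing more than once in a coarse-grained description, which would be pointless if recurrence were impossible). Hence ``the finitely many times at which the ordering changes'' is not yet a well-defined finite set, and your induction has no termination guarantee. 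The paper sidesteps the issue by choosing the split times as suprema rather than at crossings: set $\v{\pi}^1=\v{\pi}(\v{r}(0))$ and $t_1=\sup\{t:\v{\pi}(\v{r}(t))=\v{\pi}^1\}$, then iteratively $\v{\pi}^{k+1}=\v{\pi}(\v{r}(t_k^+))$ and $t_{k+1}=\sup\{t:\v{\pi}(\v{r}(t))=\v{\pi}^{k+1}\}$. Since no ordering can recur after its supremum time, the $\v{\pi}^k$ are pairwise distinct and the procedure terminates after at most $d!$ steps; each pair $\v{r}(t_k)$, $\v{r}(t_{k+1})$ then shares the ordering $\v{\pi}^{k+1}$ and satisfies $\v{r}(t_k)\succ_{\v{\gamma}}\v{r}(t_{k+1})$, so Theorem~\ref{thm:perry} applies segmentwise exactly as you intend. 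Substituting this choice of split times for your crossing-based one completes the proof.
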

\begin{proof}
	First, assume $\v{p} \ggcurly_{\v{\gamma}} \v{q}$. Then, there exists a continuous trajectory $\v{r}(t)$ with $\v{r}(0) = \v{p}$, $\v{r}(t_f) = \v{q}$ (perhaps $t_f = +\infty$), and $\v{r}(t') \succ_{\v{\gamma}} \v{r}(t'')$ for all $t' \leq t''$. Define $t_0=0$, as well as a thermomajorization ordering $\v{\pi}^1$ and a time $t_1$ as follows: 
	\begin{subequations}
		\begin{align}
			\v{\pi}^1 :=& \v{\pi}(\v{r}(0)),\\
			t_1 :=& \sup\{t|\v{\pi}(\v{r}(t)) = \v{\pi}^1  \}.	
		\end{align}
	\end{subequations}
	Next, for integer $k> 1$, define iteratively
	\begin{subequations}
		\begin{align}
		\v{\pi}^{k+1} := &\v{\pi}(\v{r}(t^+_k)),\label{eq:crossing_a}\\
		t_{k+1} := &\sup\{t|\v{\pi}(\v{r}(t)) = \v{\pi}^{k+1}\}.\label{eq:crossing_b}
		\end{align}
	\end{subequations}
	Clearly $t_{k+1} > t_{k}$. Since there are only $d!$ distinct thermomajorization orderings, ultimately we reach the final $k=f\leq d!-1$, such that $\v{\pi}^{f} = \v{\pi}(\v{r}(t_f))$. We now employ Theorem~\ref{thm:perry}: for each pair, $\v{r}(t_k^+)$ and $\v{r}(t_{k+1})$, there exists a sequence of elementary thermalizations such that
	\begin{equation}
	\v{r}(t_{k+1}) = 	T^{i_{k_n},j_{k_n}}(\lambda_{k_n}) \dots T^{i_{k_1},j_{k_1}}(\lambda_{k_1}) \v{r}(t_k),
	\end{equation}
	with $n \leq d-1$. Thus, by sequentially applying the above to all $k\leq f$ we obtain Eq.~\eqref{eq:sequenceplt} with a finite sequence.
	
	Conversely, assume that Eq.~\eqref{eq:sequenceplt} holds. Define \mbox{$\v{r}(0) = \v{p}$} and
	\begin{equation*}
	\v{r}(t) = T^{i_k,j_k}(\delta) T^{i_{k-1},j_{k-1}}(\lambda_{k-1}) \dots T^{i_{1},j_{1}}(\lambda_1)\v{p}, 
	\end{equation*}      
	with $k$ and $\delta \in [0,1]$ satisfying \mbox{$t = \delta + \sum_{i=1}^{k-1} \lambda_i$}. This defines a continuous path starting at $\v{p}$ and terminating at~$\v{q}$.	Moreover, using the fact that for any $i,j$ and $\lambda'\geq\lambda$ we can write \mbox{$T^{i,j}(\lambda') =T^{i,j}(\mu) T^{i,j}(\lambda)$} with $\mu \in[0,1]$, we see that for any $t''\geq t'$ the distribution $\v{r}(t'')$ is obtained from $\v{r}(t')$ by a finite sequence of elementary thermalizations. Given that elementary thermalizations and their products are stochastic matrices with a fixed point $\v{\gamma}$, Theorem~\ref{thm:thermomajorizationprevious} implies $\v{r}(t') \succ_{\v{\gamma}} \v{r}(t'')$ for all $t''\geq t'$. We thus conclude that $\v{p} \ggcurly_{\v{\gamma}} \v{q}$.
\end{proof}

Note that from the proof above one can conclude that the number  of elementary thermalizations required for a state transformation is upper-bounded by \mbox{$d!(d-1)$}, but we will give a tighter bound later. Also, as a corollary of Lemma~\ref{lem:plt} we get that $\succ_{\v{\gamma}}$ (describing allowed transformations under general thermal processes, which employ with memory) and $\ggcurly_{\v{\gamma}}$ (describing allowed transformations under Markovian thermal processes) coincide within a fixed thermomajorization ordering.

\begin{cor}
	\label{corol:orderingscoincide}
	If $\v{\pi}(\v{p}) = \v{\pi}(\v{q}) $ and $\v{p} \succ_{\v{\gamma}} \v{q}$ then $\v{p} \ggcurly_{\v{\gamma}} \v{q}$. Moreover, the thermomajorizing trajectory $\v{r}(t)$ connecting $\v{p}$ to $\v{q}$ can be chosen such that for all $t\in[0,t_f]$ it belongs to the same $\v{\gamma}$-ordering.
\end{cor}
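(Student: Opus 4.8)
My plan is to prove both parts at once by exhibiting an \emph{explicit} thermomajorizing trajectory that stays inside a single $\v{\gamma}$-ordering, namely the line segment from $\v{p}$ to $\v{q}$; the bare existence statement $\v{p}\ggcurly_{\v{\gamma}}\v{q}$ can alternatively be read off by combining Theorem~\ref{thm:perry} with Lemma~\ref{lem:plt}, but that route does not obviously confine the trajectory to one ordering. First I would relabel the levels, without loss of generality, so that the common $\v{\gamma}$-ordering is $\v{\pi}(\v{p})=\v{\pi}(\v{q})=(1,2,\dots,d)$, i.e.\ $p_1/\gamma_1\ge\dots\ge p_d/\gamma_d$ and likewise for $\v{q}$, and introduce the polytope
\begin{equation*}
R := \Big\{\v{r}\in\mathbb{R}^d : r_i\ge 0,\ \textstyle\sum_{i=1}^d r_i=1,\ \gamma_{i+1}\,r_i\ge\gamma_i\,r_{i+1}\ \text{for }i=1,\dots,d-1\Big\}
\end{equation*}
of distributions carrying this $\v{\gamma}$-ordering. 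Its defining inequalities are linear in $\v{r}$, so $R$ is convex, and $\v{p},\v{q}\in R$ by hypothesis.

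Next I would take the candidate trajectory $\v{r}(t):=(1-t)\v{p}+t\v{q}$ for $t\in[0,1]$. It is a continuous path of probability distributions with $\v{r}(0)=\v{p}$ and $\v{r}(1)=\v{q}$, and convexity of $R$ gives $\v{r}(t)\in R$ for every $t$ --- which is exactly the ``moreover'' clause, once the remaining condition is verified. That condition is the thermomajorization-monotonicity in Definition~\ref{def:markov_majo}, for which I would use the following observation: once the $\v{\gamma}$-ordering is fixed to $(1,\dots,d)$, the Lorenz curve relative to $\v{\gamma}$ of any $\v{r}\in R$ is the concave piecewise-linear curve with vertices at the \emph{$\v{r}$-independent} abscissae $X_j:=\sum_{i=1}^j\gamma_i$ and ordinates $\sum_{i=1}^j r_i$. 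Hence for $\v{r},\v{r}'\in R$ one has $\v{r}\succ_{\v{\gamma}}\v{r}'$ if and only if $\sum_{i=1}^j r_i\ge\sum_{i=1}^j r'_i$ for every $j$, because two concave piecewise-linear curves sharing the same vertex abscissae and endpoints are ordered everywhere precisely when they are ordered at those abscissae.

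It then remains to note that, with $A_j:=\sum_{i\le j}p_i$ and $B_j:=\sum_{i\le j}q_i$, the $j$-th partial sum of $\v{r}(t)$ equals $(1-t)A_j+t B_j$, which is non-increasing in $t$ since $\v{p}\succ_{\v{\gamma}}\v{q}$ forces $A_j\ge B_j$. Therefore $t_1\le t_2\Rightarrow\v{r}(t_1)\succ_{\v{\gamma}}\v{r}(t_2)$, all three conditions of Definition~\ref{def:markov_majo} hold with $t_f=1$, and the trajectory stays in the ordering $\v{\pi}(\v{p})$ throughout. The one step I would flag as needing slight care --- the ``main obstacle'', modest as it is --- is the characterization of thermomajorization within $R$ by the $d$ partial-sum inequalities: it hinges on all Lorenz curves in sight being concave and piecewise linear with a \emph{common} vertex set $\{X_j\}$ (which in turn uses $\gamma_i>0$ for all $i$), so that nothing can go wrong between consecutive vertices.
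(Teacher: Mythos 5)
Your proof is correct, and it takes a genuinely different route from the paper's. The paper proves this corollary by invoking Theorem~\ref{thm:perry} (the result of Perry et al.) to obtain a finite sequence of elementary thermalizations mapping $\v{p}$ to $\v{q}$, then applies Lemma~\ref{lem:plt} to convert that sequence into a thermomajorizing trajectory; the ``moreover'' clause is inherited from the fact that the construction in the cited reference keeps all intermediate states in the ordering $\v{\pi}(\v{p})$. You instead exhibit the trajectory directly as the line segment $(1-t)\v{p}+t\v{q}$, using convexity of the polytope of distributions carrying a fixed $\v{\gamma}$-ordering, together with the observation that within such a polytope all Lorenz curves share the breakpoint abscissae $X_j=\sum_{i\le j}\gamma_i$, so that $\succ_{\v{\gamma}}$ reduces to the $d$ partial-sum inequalities, which are linear and hence monotone along the segment. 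Your key step is sound (concavity is not even needed --- two piecewise-linear curves with common breakpoints are ordered everywhere iff ordered at the breakpoints), and the argument is self-contained, avoiding the external reference entirely. What the paper's route buys in exchange is that its trajectory is explicitly realized by at most $d-1$ elementary thermalizations, which feeds directly into the constructive program (Theorem~\ref{thm:universality}, Corollary~\ref{cor:strengthened}); your segment establishes the corollary as stated but would require a separate appeal to Theorem~\ref{thm:perry} wherever an elementary-thermalization realization is needed downstream.
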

\begin{proof}
	Assuming that \mbox{$\v{p} \succ_{\v{\gamma}} \v{q}$} and $\v{\pi}(\v{p}) = \v{\pi}{(\v{q})}$, Theorem~\ref{thm:perry} tells us then that there exists a sequence of elementary thermalizations mapping $\v{p}$ into $\v{q}$. Using Lemma~\ref{lem:plt}, we conclude $\v{p} \ggcurly_{\v{\gamma}} \v{q}$. Moreover, the construction of elementary thermalizations presented in the proof of Theorem~\ref{thm:perry} in Ref.~\cite{perry2018sufficient} is such that every intermediate state along the trajectory, $\v{r}(t)$, has the same thermomajorization ordering $\v{\pi}(\v{p})$. 
\end{proof}

	We are now able to discuss Theorem~\ref{thm:main} and Theorem~\ref{thm:universality}, which we prove jointly as follows:
	\begin{thm}
		Let $\rho(t)$ by a quantum state with population vector $\v{p}(t)$. The following statements are equivalent:
		\begin{enumerate}
			\item There exists a Markovian thermal process transforming $\rho(0)$ with population $\v{p}(0)$ into a quantum state $\rho(t_f)$ with population $\v{p}(t_f)$. 
			\item $\v{p}(0) \ggcurly_{\v{\gamma}} \v{p}(t_f)$.
			\item There exists a finite sequence of elementary thermalizations such that 
			\begin{equation}
			\label{eq:sequenceelementarythermalizationsproof}
			T^{i_f,j_f}(\lambda_f) \dots T^{i_1,j_1}(\lambda_1) \v{p}(0) = \v{p}(t_f).
			\end{equation}
		\end{enumerate}
	\end{thm}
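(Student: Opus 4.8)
The plan is to prove the three-way equivalence by establishing the cycle $2 \Rightarrow 3 \Rightarrow 1 \Rightarrow 2$. The implication $2 \Leftrightarrow 3$ is already essentially Lemma~\ref{lem:plt}, so the real work is the "dynamical" implications connecting item~1 (existence of a genuine Markovian master equation) with the combinatorial/geometric conditions in items~2 and~3. The easiest entry point is $3 \Rightarrow 1$: each elementary thermalization $T^{i,j}(\lambda)$ is, by construction, the time-$t$ map of the explicit reset master equation written in Sec.~\ref{sec:controls}, which is of the form~\eqref{eq:master} with a Lindbladian satisfying~\ref{p1}-\ref{p2} (it fixes $\v{\gamma}$ and, acting only on populations of two energy levels, trivially commutes with $\H$). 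A finite concatenation of such processes is realized by a single piecewise-constant $\mathcal{L}_t$ that switches, at the prescribed times, which two-level reset is active; this is again of the form~\eqref{eq:master} with properties~\ref{p1}-\ref{p2} holding at every $t$. Integrating it from $0$ to $t_f = \sum_k \lambda_k$ (rescaling time by the $\tau$'s as needed) reproduces~\eqref{eq:sequenceelementarythermalizationsproof}, and for the full quantum statement one simply tensors in the identity on coherences, which is consistent because covariance forces coherences to evolve independently and they can be left untouched.

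\textbf{The hard direction.} The main obstacle is $1 \Rightarrow 2$: showing that \emph{any} Markovian thermal process, with arbitrary time-dependent Lindbladian, produces a population trajectory that is a thermomajorizing trajectory in the sense of Definition~\ref{def:markov_majo}. Here is how I would handle it. Fix a solution $\rho(t)$ of~\eqref{eq:master} with $\mathcal{L}_t$ satisfying~\ref{p1}-\ref{p2}. Covariance~\ref{p2} implies the populations $\v{p}(t)$ decouple from the coherences and evolve under a master equation $\dot{\v{p}}(t) = R_t \v{p}(t)$ where $R_t$ is a (time-dependent) rate matrix — off-diagonal entries nonnegative, columns summing to zero — and, crucially, property~\ref{p1} forces $R_t \v{\gamma} = 0$ for all $t$. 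The induced propagator $\v{p}(t) = V(t) \v{p}(0)$ is obtained from the time-ordered exponential of $R_t$; since the infinitesimal generators are rate matrices fixing $\v{\gamma}$, the propagator over any subinterval $[s,t]$ is a column-stochastic matrix with $V(s,t)\v{\gamma} = \v{\gamma}$. By Theorem~\ref{thm:thermomajorizationprevious}, existence of such a Gibbs-stochastic matrix sending $\v{p}(s)$ to $\v{p}(t)$ is exactly $\v{p}(s) \succ_{\v{\gamma}} \v{p}(t)$. Applying this to every pair $s \le t$ and noting $\v{p}(t)$ is continuous in $t$, the trajectory $\v{r}(t) := \v{p}(t)$ satisfies all three conditions of Definition~\ref{def:markov_majo}, giving $\v{p}(0) \ggcurly_{\v{\gamma}} \v{p}(t_f)$.

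\textbf{Technical points to watch.} The subtle step above is that a time-dependent rate matrix $R_t$ fixing $\v{\gamma}$ integrates to a \emph{Gibbs-stochastic} propagator. I would argue this by a standard product-integral / Euler approximation: $V(s,t) = \lim \prod_n e^{R_{t_n}\Delta}$, each factor $e^{R_{t_n}\Delta}$ being column-stochastic (rate matrices exponentiate to stochastic matrices) and fixing $\v{\gamma}$ (since $R_{t_n}\v{\gamma}=0$), with both properties closed under products and limits. One must also confirm that $\mathcal{L}_t$ satisfying~\ref{p1} indeed yields a population rate matrix annihilating $\v{\gamma}$: writing $\v{\gamma}$ as the diagonal of the Gibbs state, $\mathcal{L}_t(\gamma) = 0$ restricted to diagonal entries is precisely $R_t \v{\gamma} = 0$. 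Finally, to close the cycle cleanly, I would invoke Lemma~\ref{lem:plt} for $2 \Leftrightarrow 3$ rather than re-proving it, so the only genuinely new content is $1 \Rightarrow 2$ and the straightforward $3 \Rightarrow 1$; the quantum (block-diagonal) upgrade follows since covariance lets coherences be treated separately and the target is assumed block-diagonal.
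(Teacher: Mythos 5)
Your proposal is correct and follows essentially the same route as the paper: $2\Leftrightarrow 3$ via Lemma~\ref{lem:plt}, $3\Rightarrow 1$ by an explicit piecewise-constant Lindbladian implementing the reset maps, and $1\Rightarrow 2$ by noting that covariance decouples the populations and property~\ref{p1} makes every finite-time propagator Gibbs-stochastic, so Theorem~\ref{thm:thermomajorizationprevious} applies to each pair of times. The only cosmetic differences are that the paper's Lindbladian dephases coherences rather than leaving them untouched (which sidesteps the complete-positivity check your ``identity on coherences'' would require), and that you spell out the product-integral argument for infinite divisibility that the paper states without proof.
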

	\begin{proof}
		Lemma~\ref{lem:plt} proves the equivalence $2 \Leftrightarrow 3$. To conclude we then just prove $3 \Rightarrow 1$ and $1 \Rightarrow 2$.
		
		\bigskip
		
		[$3 \Rightarrow 1$]: Given the finite sequence of elementary thermalizations such that Eq.~\eqref{eq:sequenceelementarythermalizationsproof} holds, we will explicitly construct a time-dependent Lindbladian $\mathcal{L}_t$ generating a Markovian thermal process that maps a state with population $\v{p}(0)$ to the one with population $\v{p}(t_f)$. We define $\mathcal{L}_t$ through its action on the basis elements as
		\begin{equation}
			\bra{m}\mathcal{L}_t(\ketbra{n}{n'})\ket{m'}  = \delta_{nn'} \delta_{mm'} T(t)_{mn} - \delta_{mn} \delta_{m' n'},
		\end{equation}
		where $\ket{x}$ denote the eigenstates of $H$ and $T(t)$ is a $d\times d$ stochastic matrix. To see that the above $\mathcal{L}_t$ corresponds to a valid Lindbladian, note that it has the form \mbox{$\mathcal{E}_t - \mathcal{I}$}, where $\mathcal{I}$ is the identity channel and $\mathcal{E}_t$ is the channel that decoheres in the eigenbasis of $H$ and performs the stochastic map $T(t)$ on the diagonal. Since every channel can be written as $\mathcal{E}_t(\rho) = \sum_i L_i \rho L_i^\dagger$ with $\sum_i L_i^\dagger L_i = \mathbb{I}$, $\mathcal{L}_t$ has the required form from Eq.~\eqref{eq:lindbladian}.
			
		Next, let
		\begin{equation}
			t_k = - \log(1-\lambda_k),\quad t_0 := 0
		\end{equation}
		and introduce $\tau_k = \sum_{s=0}^k t_k$. Then, choose
		\begin{equation}
			T(t) = 
			\begin{bmatrix}
				\frac{\gamma_{i_k}}{\gamma_{i_k} + \gamma_{j_k}}  & \frac{\gamma_{i_k}}{\gamma_{i_k} + \gamma_{j_k}} \\ \\
				\frac{\gamma_{j_k}}{\gamma_{i_k} + \gamma_{j_k}} & \frac{\gamma_{j_k}}{\gamma_{i_k} + \gamma_{j_k}}
			\end{bmatrix} \oplus \v{0}_{\backslash (i_k, j_k)}
		\end{equation}
		for $t \in \left[\tau_{k-1}, \tau_{k}\right]$ with $k=1, \dots, f$, where $\v{0}_{\backslash (i_k, j_k)}$ denotes the $(d -2) \times (d-2)$ matrix of all zeros acting on the subspace of all energy levels except $i_k, j_k$. Now, the equation \mbox{$d\rho/dt = \mathcal{L}_t(\rho_t)$} can be easily solved, and one can verify that the resulting dynamics implements the sequence of elementary thermalizations from Eq.~\eqref{eq:sequenceelementarythermalizationsproof} on the population vector. We conclude that a Markovian thermal process mapping  $\v{p}(0)$ into $\v{p}(t_f)$ exists. 
		
		\bigskip
		
		$[1 \Rightarrow 2]:$ Given $\v{p}(0)$, $\v{p}(t_f)$ and a Markovian thermal process dynamically evolving the former into the latter, let $\v{p}(t)$ be the trajectory followed at the intermediate times. Employing the definition of an MTP, we have that for every \mbox{$0 \leq t' \leq t'' \leq t_f$} there exists a stochastic matrix $T(t',t'')$ such that 
		\begin{equation}
		T(t',t'') \v{p}(t') = \v{p}(t''), \quad T(t',t'') \v{\gamma} = \v{\gamma}.
		\end{equation}
		From Theorem~\ref{thm:thermomajorizationprevious}, it follows that $\v{p}(t') \succ_{\v{\gamma}} \v{p}(t'')$. We conclude that $\v{p} \ggcurly_{\v{\gamma}} \v{q}$. 
	\end{proof}


\section{Proof of Theorem~\ref{thm:finite}}
\label{app:thm4}

We start from introducing the concept of a \emph{coarse-grained description} of a thermomajorizing trajectory $\v{r}(t)$.

\begin{defn}[Coarse-graining]
	\label{def:coarsegraining}
	Let $\v{r}(t)$ be a thermomajorizing trajectory from $\v{p}$ to $\v{q}$ as in Definition~\ref{def:markov_majo}. Then, a sequence of $\v{\gamma}$-orderings $\{\v{\pi}^k\}_{k=1}^N$ is called a coarse-grained description of $\v{r}(t)$ if there exists an ordered set of times $\{t_k\}_{k=0}^N$,
	\begin{equation}
	t_0=0,\quad t_k\leq t_{k+1},\quad t_N=t_f,
	\end{equation}	
	such that the probability vector $\v{r}(t)$ belongs to the $\v{\gamma}$-ordering $\v{\pi}^k$ in the interval $t\in[t_{k-1},t_k]$ (note that at time $t_k$, $\v{r}(t_k)$ is associated to both $\v{\gamma}$-orders $\v{\pi}^k$ and $\v{\pi}^{k+1}$). Moreover, a given coarse-grained description will be called \emph{canonical} if the sequence $\{\v{\pi}^k\}$ is canonical according to Definition~\ref{def:canon_seq}. 
\end{defn}

Note that since a given state can simultaneously belong to more than one $\v{\gamma}$-ordering (this happens when $p_k/\gamma_k=p_l/\gamma_l$ for some $k$ and $l$), the trajectory $\v{r}(t)$ can have multiple (but finitely many) coarse-grained descriptions. However, as we now prove, we may limit our attention only to canonical coarse-grainings.

\begin{lem}
	\label{lem:canonical_path}
	$\v{p} \ggcurly_{\v{\gamma}} \v{q}$ if and only if there exists a thermomajorizing path $\v{r}(t)$ connecting $\v{p}$ and $\v{q}$ with a canonical coarse-grained description.
\end{lem}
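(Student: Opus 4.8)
The plan is to prove both directions of the equivalence, where the "if" direction is immediate and the real content is in the "only if" direction: showing that \emph{any} thermomajorizing trajectory can be replaced by one admitting a canonical coarse-grained description. The key tool will be Lemma~\ref{lem:plt}, which tells us that $\v{p} \ggcurly_{\v{\gamma}} \v{q}$ is equivalent to the existence of a finite sequence of elementary thermalizations $T^{i_f,j_f}(\lambda_f)\dots T^{i_1,j_1}(\lambda_1)$ mapping $\v{p}$ to $\v{q}$, together with the observation (used already in the proof of Lemma~\ref{lem:plt}) that such a sequence defines a concrete thermomajorizing path via concatenation and partial applications $T^{i,j}(\delta)$.

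For the forward direction, assume $\v{p}\ggcurly_{\v{\gamma}}\v{q}$ and take any thermomajorizing trajectory $\v{r}(t)$ from $\v{p}$ to $\v{q}$. First I would extract from $\v{r}(t)$ a (not necessarily canonical) finite sequence of $\v{\gamma}$-orderings visited along the way. Because the $\v{\gamma}$-ordering of $\v{r}(t)$ can only change when some ratio $r_k(t)/\gamma_k$ crosses another ratio $r_l(t)/\gamma_l$, and because $\v{r}(t)$ must reach $\v{q}$'s ordering eventually, one gets a finite list $\v{\pi}^1=\v{\pi}(\v{p}),\dots$ of orderings. The issue is twofold: (i) consecutive orderings in this raw list may differ by more than an adjacent transposition (if multiple ratios coincide simultaneously), and (ii) an ordering may recur. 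To fix (i), I would argue that one can always perturb/subdivide the trajectory near such a degenerate crossing so that the simultaneous swaps are performed one adjacent transposition at a time — this is where continuity of $\v{r}(t)$ and the fact that the set of degenerate configurations has codimension $\geq 1$ is used; alternatively, and more cleanly, I would bypass $\v{r}(t)$ altogether: from Lemma~\ref{lem:plt} get the elementary-thermalization sequence, note that each $T^{i_k,j_k}$ acts on two levels that are \emph{adjacent in the current $\v{\gamma}$-ordering} (one may always reorganize the sequence so this holds, since an elementary thermalization between non-adjacent levels can be seen to move the state only within a single ordering region at its endpoints, or can be replaced by adjacent ones — this needs the structure of $T^{i,j}(\lambda)$ and Theorem~\ref{thm:perry}), and read off the sequence of $\v{\gamma}$-orderings traversed, which now changes by adjacent transpositions by construction. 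To fix (ii), recurrence of an ordering, I would use a loop-removal / shortcut argument: if ordering $\v{\pi}$ appears at times $s<s'$ along the trajectory, then since $\v{r}(s)\succ_{\v{\gamma}}\v{r}(s')$ and both share the ordering $\v{\pi}$, Corollary~\ref{corol:orderingscoincide} (or Theorem~\ref{thm:perry} directly) gives a thermomajorizing sub-path from $\v{r}(s)$ to $\v{r}(s')$ staying entirely within the ordering $\v{\pi}$; splicing this in deletes the loop. Iterating removes all repetitions, leaving a canonical sequence.

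For the reverse direction, suppose there is a thermomajorizing path $\v{r}(t)$ from $\v{p}$ to $\v{q}$ with a canonical coarse-grained description; then in particular $\v{r}(t)$ is a thermomajorizing trajectory from $\v{p}$ to $\v{q}$, so $\v{p}\ggcurly_{\v{\gamma}}\v{q}$ holds directly from Definition~\ref{def:markov_majo}. This direction is essentially by unpacking definitions.

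The main obstacle I expect is the loop-removal step combined with the adjacency normalization — i.e., showing rigorously that an \emph{arbitrary} thermomajorizing trajectory can be modified into one whose ordering sequence both (a) changes only by adjacent transpositions and (b) never repeats an ordering, without destroying the monotone thermomajorization property at the splice points. The cleanest route is probably to avoid manipulating the continuous path and instead work entirely with the discrete sequence of elementary thermalizations from Lemma~\ref{lem:plt}: first normalize it so every $T^{i_k,j_k}$ acts on $\v{\gamma}$-adjacent levels (using that $T^{i,j}(1)$ can be written as a product of adjacent full thermalizations, or Theorem~\ref{thm:perry}), then collapse repeated orderings using Corollary~\ref{corol:orderingscoincide} to handle the within-ordering segments, and finally invoke the converse part of Lemma~\ref{lem:plt} to turn the resulting normalized sequence back into a thermomajorizing path, which by construction has a canonical coarse-grained description.
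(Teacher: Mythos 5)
Your proposal is correct and follows essentially the same route as the paper: continuity of the ratios $r_i(t)/\gamma_i$ forces ordering changes to occur by adjacent transpositions, repeated orderings are excised by splicing in the fixed-ordering path from Corollary~\ref{corol:orderingscoincide} between the first and last visit, and the converse is immediate from the definitions. The only difference is that where you propose to ``perturb/subdivide'' the trajectory at degenerate crossings (perturbation being risky for preserving monotonicity), the paper simply exploits the fact that the coarse-graining definition allows degenerate intervals $t_k=t_{k+1}$, so a simultaneous multi-swap is recorded as several adjacent transpositions assigned to the same instant --- no modification of $\v{r}(t)$ is needed at that step.
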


\begin{proof}
	Assume that $\v{p} \ggcurly_{\v{\gamma}} \v{q}$, meaning that there exists some thermomajorizing trajectory $\v{r}(t)$ from $\v{p}$ to $\v{q}$. Since $\v{r}(t)$ changes continuously in $t$, so do $\v{\gamma}$-rescaled entries $\{r_i(t)/\gamma_i\}_{i=1}^d$. This means that the $k$-th largest element among $\v{\gamma}$-rescaled entries $\{r_i(t)/\gamma_i\}_{i=1}^d$ becomes equal to the $(k-1)$-th or $(k+1)$-th largest one before (or at the same time) becoming equal to any other entry.  Hence, we can choose times $\{t_k\}_{k=0}^N$ (perhaps $t_{k} = t_{k+1}$ for some $k$) and assign $\v{\gamma}$-orderings $\{\v{\pi}^k\}_{k=1}^N$ such that $\v{\pi}^k$ and $\v{\pi}^{k+1}$ only differ by a transposition of adjacent elements, which proves the first property of the canonical coarse-graining. 
	
	To prove the second one, assume that some $\v{\gamma}$-ordering $\v{\pi}$ appears more than once in the coarse-grained description of $\v{r}(t)$ defined in the previous step. Let us denote the first time $\v{r}(t)$ has the ordering $\v{\pi}$ by $t_1$, and the last time it has this ordering by $t_2$. Clearly, $\v{r}(t_1) \ggcurly_{\v{\gamma}} \v{r}(t_2)$. Thus, one can introduce a new thermomajorizing path $\v{r}'(t)$ such that it is equal to $\v{r}(t)$ for $t\in[0,t_1)$ and $t\in(t_2,t_f]$, while for $t\in[t_1,t_2]$ it is the trajectory given by Corollary~\ref{corol:orderingscoincide}, connecting $\v{r}(t_1)$ to $\v{r}(t_2)$ and lying completely in $\v{\gamma}$-ordering $\v{\pi}$. As a result, we obtain a thermomajorizing trajectory $\v{r}'(t)$ connecting $\v{p}$ and $\v{q}$, and such that the $\v{\gamma}$-ordering $\v{\pi}$ appears only once in the coarse-grained description. By repeating this for every $\v{\gamma}$-ordering that appears more than once in the original coarse-grained description of $\v{r}(t)$, we end up with a trajectory whose coarse-grained description satisfies also the second property of canonical coarse-graining.
	
	Conversely, if there exists a thermomajorizing path $\v{r}(t)$ connecting $\v{p}$ and $\v{q}$ with whatever coarse-grained description, then by definition $\v{p} \ggcurly_{\v{\gamma}} \v{q}$. 
\end{proof}

The next lemma geometrically characterizes the action of elementary thermalizations on a Lorenz curve/thermomajorization curve. In words, it shows that the effect of $T^{i,j}$ is to decrease the slope of the $j^{\rm th}$ segment of the thermomajorization curve and increase that of the $i^{\rm th}$ segment till the two are equalized (see Fig.~\ref{fig:lemma10} and Appendix~B.1 of Ref.~\cite{perry2018sufficient}).
\begin{figure}[t]
	\centering
	\includegraphics[width=0.9\columnwidth]{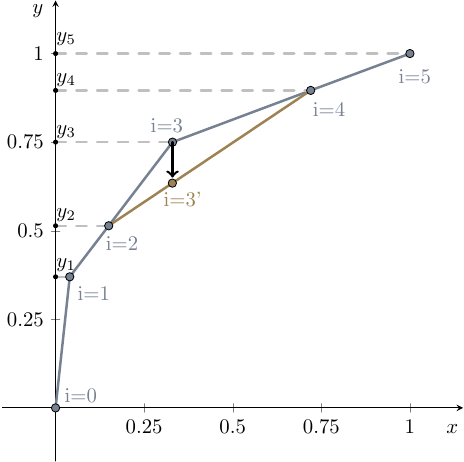}
	\caption{The defining points of the thermomajorization curve are labelled according to Lemma~\ref{lem:action}, to visualize the action of the two-level partial level thermalization $T^{3,4}(\lambda)$. This brings down $y_3$ until, at $\lambda=1$, the slopes of the $3^{rd}$ and $4^{th}$ segment are equalized (brown segment connecting $2$, $3'$ and $4$).}
	\label{fig:lemma10}
\end{figure}

\begin{lem}[Action of elementary thermalization on the thermomajorization curve]
	\label{lem:action}
	Consider a Lorenz curve of $\v{p}$ with elbow points denoted by $(x_m, y_m)$, see Fig.~\ref{fig:lemma10}. For any $j>i$, denote $\tilde{i}=\pi_i(\v{p})$ and $\tilde{j}=\pi_j(\v{p})$. Then, the elementary thermalization $T^{\tilde{i},\tilde{j}}(\lambda)$ shifts down by an equal amount the $y$-coordinates $(y_{i},\dots,y_{j-1})$. The extremal map, \mbox{$T^{\tilde{i},\tilde{j}}(1)$}, equalizes the slopes of the $i^{th}$ and the $j^{th}$ segments of the curve. Note that a final reordering may be needed if the thermomajorization ordering is changed.
\end{lem}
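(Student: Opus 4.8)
The plan is to write out the action of $T^{\tilde i,\tilde j}(\lambda)$ explicitly on the $\v{\gamma}$-rescaled entries of $\v{p}$ and then translate this into the geometry of the Lorenz curve. First I would set up notation: let $a=p_{\tilde i}$, $b=p_{\tilde j}$, $g_i=\gamma_{\tilde i}$, $g_j=\gamma_{\tilde j}$, and recall from Eq.~\eqref{eq:elementarythermalization} that $T^{\tilde i,\tilde j}(\lambda)$ maps $(a,b)$ to $(a',b')$ with
\begin{equation}
a'=(1-\lambda)a+\lambda\frac{g_i}{g_i+g_j}(a+b),\qquad b'=(1-\lambda)b+\lambda\frac{g_j}{g_i+g_j}(a+b),
\end{equation}
leaving all other entries untouched. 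I would then compute the change in the two relevant rescaled entries: $a'/g_i=(1-\lambda)a/g_i+\lambda(a+b)/(g_i+g_j)$ and similarly for $b'/g_j$. Since before the map $a/g_i\ge b/g_j$ (because $j>i$ in the $\v{\gamma}$-ordering of $\v{p}$), the quantity $(a+b)/(g_i+g_j)$ is a weighted mediant lying between $b/g_j$ and $a/g_i$, so $a'/g_i$ decreases and $b'/g_j$ increases monotonically in $\lambda$, and at $\lambda=1$ they become equal to the common value $(a+b)/(g_i+g_j)$. This is exactly the statement that the slopes of the $i^{\rm th}$ and $j^{\rm th}$ segments (which are precisely these rescaled entries, once the curve is drawn in $\v{\gamma}$-order) get equalized by $T^{\tilde i,\tilde j}(1)$.

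Next I would pin down the effect on the elbow $y$-coordinates. Using the ordering $\v{\pi}(\v{p})$, the elbow points are $y_m=\sum_{l=1}^{m}p_{\pi_l(\v{p})}$ and $x_m=\sum_{l=1}^{m}\gamma_{\pi_l(\v{p})}$. The map changes only $p_{\tilde i}$ and $p_{\tilde j}$, decreasing the former by $\delta_\lambda:=a-a'=\lambda\big(a-\frac{g_i}{g_i+g_j}(a+b)\big)\ge 0$ and increasing the latter by the same amount. Therefore $y_m$ is unchanged for $m<i$ and for $m\ge j$, while for $i\le m\le j-1$ the partial sum loses exactly $\delta_\lambda$, i.e.\ each of $y_i,\dots,y_{j-1}$ shifts down by the single common amount $\delta_\lambda$. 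The $x$-coordinates are not affected at all. This gives the ``shifts down by an equal amount'' claim directly, and monotonicity of $\delta_\lambda$ in $\lambda$ together with $\delta_1=a-\frac{g_i}{g_i+g_j}(a+b)$ gives the $\lambda=1$ extremal behaviour. Finally I would note, as in the statement, that since $a'/g_i$ may drop below $a''/g_{i+1}$ for the next segment (or $b'/g_j$ may exceed that of the preceding segment), a relabelling of the $\v{\gamma}$-ordering may be required to restore the non-increasing slope convention; this is purely bookkeeping and does not change the curve itself, only its parametrization, and one may invoke the same continuity-of-rescaled-entries observation used in Lemma~\ref{lem:canonical_path}.

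The computation is entirely elementary, so there is no deep obstacle; the one point that needs care is making sure the identification ``slope of the $m^{\rm th}$ segment $=$ $m^{\rm th}$ largest $\v{\gamma}$-rescaled entry'' is used consistently, i.e.\ that when we speak of ``the $i^{\rm th}$ and $j^{\rm th}$ segments'' we mean the segments in the $\v{\gamma}$-order of $\v{p}$ \emph{before} the map is applied, and that the claimed monotone descent of the intermediate $y_m$'s is stated for that fixed ordering rather than for a dynamically re-sorted one. I would therefore phrase the argument keeping the ordering $\v{\pi}(\v{p})$ frozen throughout the calculation and only invoke re-sorting at the very end, exactly matching the parenthetical caveat in the lemma statement. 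A reference to Fig.~\ref{fig:lemma10} and to Appendix~B.1 of Ref.~\cite{perry2018sufficient} can accompany the formal steps for the reader's intuition.
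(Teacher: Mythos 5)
Your proposal is correct and follows essentially the same route as the paper's proof: an explicit computation of the action of $T^{\tilde i,\tilde j}(\lambda)$ on the two affected populations, the observation that the common downward shift of the elbow coordinates $y_i,\dots,y_{j-1}$ equals $\lambda\,(p_{\tilde i}\gamma_{\tilde j}-p_{\tilde j}\gamma_{\tilde i})/(\gamma_{\tilde i}+\gamma_{\tilde j})$, the check that at $\lambda=1$ both relevant slopes become $(p_{\tilde i}+p_{\tilde j})/(\gamma_{\tilde i}+\gamma_{\tilde j})$, and the closing remark on re-sorting. Your added mediant/monotonicity-in-$\lambda$ observation is a harmless refinement of the same calculation.
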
 
\begin{proof}
	For $\v{p}' = T^{\tilde{i},\tilde{j}}(\lambda)\v{p}$ we have
	\begin{equation}
	\!\!\!	p'_m = \left\{\begin{array}{ll}
	p_m \quad& \textrm{for } m \notin\{\tilde{i},\tilde{j}\}, \\
	(1-\lambda) p_m +\lambda\frac{p_{\tilde{i}}+p_{\tilde{j}}}{\gamma_{\tilde{i}}+\gamma_{\tilde{j}}}\gamma_m  \quad&\textrm{for } m \in \{\tilde{i},\tilde{j}\}.
	\end{array}\right.
	\end{equation}
	Denote by $y_m$ and $y'_m$ the $y$-coordinates of the thermomajorization curves of $\v{p}$ and $\v{p}'$, respectively. Then
	\begin{equation}
	y'_m = \left\{\begin{array}{ll}
	y_m \quad &\textrm{for } m < i, \\
	y_m - \lambda \frac{( p_{\tilde{i}} \gamma_{\tilde{j}} -p_{\tilde{j}} \gamma_{\tilde{i}})}{\gamma_{\tilde{i}}+\gamma_{\tilde{j}}} \quad &\textrm{for } i\leq m  < j, \\
	y_m \quad& \textrm{for } m \geq j.
	\end{array}\right.
	\end{equation}
	This corresponds to shifting down the $y$-coordinate of each point of the thermomajorization curve, starting from the $i^{th}$ point to the $(j-1)^{th}$ point. Setting $\lambda =1$ and using \mbox{$y_j-y_{j-1}=p_{\tilde{j}}$}, one obtains that the slope of the $j^{th}$ segment is
	\begin{equation*}
	\frac{y'_j - y'_{j-1}}{\gamma_{\tilde{j}}} = \frac{p_{\tilde{i}} + p_{\tilde{j}}}{\gamma_{\tilde{i}}+ \gamma_{\tilde{j}}}. 
	\end{equation*}
	Similarly, the slope of the $i^{th}$ segment is
	\begin{equation*}
	\frac{y'_i - y'_{i-1}}{\gamma_{\tilde{i}}} = \frac{p_{\tilde{i}} + p_{\tilde{j}}}{\gamma_{\tilde{i}} + \gamma_{\tilde{j}}}. 
	\end{equation*}
	Hence, the two slopes are equalized for $\lambda =1$. Note that, if (and only if) $j \neq i+1$, then the thermomajorization ordering will change at some intermediate $\lambda$, so that a rearrangement of the segments is necessary to sort them according to non-increasing slopes.
\end{proof}

The next lemma identifies states obtained by thermalizing two adjacent levels in the $\v{\gamma}$-ordering as the optimal crossings between one ordering and another.
\begin{lem}
	\label{lem:bestcrossing}
	Given $\v{p}$, we have 
	\begin{equation}
		T^{\pi_i(\v{p}),\pi_{i+1}(\v{p})}(1)\v{p} \ggcurly_{\v{\gamma}} \v{q}
	\end{equation}
	for every $\v{q}$ satisfying $\v{p} \ggcurly_{\v{\gamma}} \v{q}$, $\v{\pi}(\v{q})= \v{\pi}(\v{p})$ and
	\begin{equation}
	\label{eq:equalizeslopes}
	\frac{q_{\pi_i(\v{p})}}{\gamma_{\pi_i(\v{p})}} = 	\frac{q_{\pi_{i+1}(\v{p})}}{\gamma_{\pi_{i+1}(\v{p})}}.
	\end{equation} 
	
\end{lem}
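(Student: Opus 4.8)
The plan is to turn the continuous‑thermomajorization conclusion into an ordinary thermomajorization inequality and then read that inequality off geometrically from Lemma~\ref{lem:action}. Write $\tilde i := \pi_i(\v{p})$, $\tilde j := \pi_{i+1}(\v{p})$ and $\v{f} := T^{\tilde i,\tilde j}(1)\v{p}$. The first step is to check that $\v{f}$ and $\v{q}$ admit the common $\v{\gamma}$-ordering $\v{\pi}(\v{p})$: for $\v{q}$ this is the hypothesis $\v{\pi}(\v{q})=\v{\pi}(\v{p})$ together with~\eqref{eq:equalizeslopes}, while for $\v{f}$ one observes that the new common rescaled value $(p_{\tilde i}+p_{\tilde j})/(\gamma_{\tilde i}+\gamma_{\tilde j})$ produced by $T^{\tilde i,\tilde j}(1)$ lies between the $(i-1)$-th and $(i+2)$-th largest of the numbers $\{p_k/\gamma_k\}$, so the sorted order of $\v{f}$'s rescaled entries is still the one described by $\v{\pi}(\v{p})$ (now with the $i$-th and $(i+1)$-th slots tied). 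In particular the Lorenz curves of $\v{f}$ and $\v{q}$ have the same elbow abscissae $0=x_0<x_1<\dots<x_d=1$.

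Next I would prove $\v{f}\succ_{\v{\gamma}}\v{q}$ elbow by elbow. By Lemma~\ref{lem:action}, applied with its index $j$ set equal to $i+1$, passing from $\v{p}$ to $\v{f}$ leaves every elbow ordinate of the Lorenz curve unchanged except $y_i$, which is lowered by exactly the amount that makes the $i$-th and $(i+1)$-th segments collinear; by~\eqref{eq:equalizeslopes} the triple of elbows $i-1,i,i+1$ is collinear for $\v{q}$ as well. Since $\v{p}\ggcurly_{\v{\gamma}}\v{q}$ gives in particular $\v{p}\succ_{\v{\gamma}}\v{q}$, we have $L_{x_m}(\v{p}\|\v{\gamma})\ge L_{x_m}(\v{q}\|\v{\gamma})$ for all $m$. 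For $m\neq i$ this yields $L_{x_m}(\v{f}\|\v{\gamma})=L_{x_m}(\v{p}\|\v{\gamma})\ge L_{x_m}(\v{q}\|\v{\gamma})$ at once. For $m=i$, both $L_{x_i}(\v{f}\|\v{\gamma})$ and $L_{x_i}(\v{q}\|\v{\gamma})$ equal the value at $x_i$ of the line through their own $(i-1)$-th and $(i+1)$-th elbows; since $L_{x_{i-1}}(\v{f}\|\v{\gamma})=L_{x_{i-1}}(\v{p}\|\v{\gamma})\ge L_{x_{i-1}}(\v{q}\|\v{\gamma})$ and likewise at $x_{i+1}$, linear interpolation preserves the inequality and gives $L_{x_i}(\v{f}\|\v{\gamma})\ge L_{x_i}(\v{q}\|\v{\gamma})$. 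Because $L(\v{q}\|\v{\gamma})$ is affine between consecutive elbows while $L(\v{f}\|\v{\gamma})$ is concave (hence above its chords), the elbow inequalities propagate to all $x\in[0,1]$, so $\v{f}\succ_{\v{\gamma}}\v{q}$.

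Finally, since $\v{f}$ and $\v{q}$ share the $\v{\gamma}$-ordering $\v{\pi}(\v{p})$, Corollary~\ref{corol:orderingscoincide} upgrades $\v{f}\succ_{\v{\gamma}}\v{q}$ to $\v{f}\ggcurly_{\v{\gamma}}\v{q}$, which is exactly $T^{\pi_i(\v{p}),\pi_{i+1}(\v{p})}(1)\v{p}\ggcurly_{\v{\gamma}}\v{q}$. The boundary cases $i=1$ and $i=d-1$ are trivial ($L_{x_0}\equiv 0$ and $L_{x_d}\equiv 1$ for all curves). The main obstacle, and the only genuinely delicate point, is the bookkeeping at the boundary between two $\v{\gamma}$-orderings: one must verify that $\v{f}$ and $\v{q}$ really do admit the common ordering $\v{\pi}(\v{p})$ (so that Corollary~\ref{corol:orderingscoincide} applies) and that it is legitimate to test thermomajorization of $\v{f}$ over $\v{q}$ only at the elbows of $\v{q}$; once these are in place the rest is routine Lorenz-curve geometry. (Note also that the argument uses only $\v{p}\succ_{\v{\gamma}}\v{q}$, not the full strength of $\v{p}\ggcurly_{\v{\gamma}}\v{q}$.)
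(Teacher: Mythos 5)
Your proof is correct, and its overall skeleton matches the paper's: establish that $\v{f}:=T^{\pi_i(\v{p}),\pi_{i+1}(\v{p})}(1)\v{p}$ and $\v{q}$ share the ordering $\v{\pi}(\v{p})$, deduce $\v{f}\succ_{\v{\gamma}}\v{q}$ by comparing Lorenz curves with identical elbow abscissae, and then upgrade to $\ggcurly_{\v{\gamma}}$ via Corollary~\ref{corol:orderingscoincide}. Where you differ is in how the key inequality $\v{f}\succ_{\v{\gamma}}\v{q}$ is obtained. The paper first invokes Theorem~\ref{thm:perry} to decompose the map $\v{p}\mapsto\v{q}$ into elementary thermalizations that never leave the ordering $\v{\pi}(\v{p})$, then uses Lemma~\ref{lem:action} to argue that $\v{q}$'s curve is obtained from $\v{p}$'s by lowering $y$-coordinates in a way that must at least match the \emph{minimal} lowering of $y_i$ performed by $T^{\pi_i(\v{p}),\pi_{i+1}(\v{p})}(1)$. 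You instead argue elbow by elbow directly: for $m\neq i$ the inequality is inherited verbatim from $\v{p}\succ_{\v{\gamma}}\v{q}$, and for $m=i$ the collinearity of elbows $i-1,i,i+1$ for both $\v{f}$ (by Lemma~\ref{lem:action} with $j=i+1$) and $\v{q}$ (by Eq.~\eqref{eq:equalizeslopes}) reduces the comparison to a linear interpolation of the already-established neighboring inequalities. Your route is more self-contained — it bypasses Theorem~\ref{thm:perry} entirely and makes explicit that only $\v{p}\succ_{\v{\gamma}}\v{q}$ (plus the ordering hypotheses) is used — while the paper's version leans on machinery it has already set up and on an appeal to minimality that is stated rather than computed. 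Your preliminary check that $\v{f}$ retains the ordering $\v{\pi}(\v{p})$ (via the mediant inequality placing $(p_{\tilde i}+p_{\tilde j})/(\gamma_{\tilde i}+\gamma_{\tilde j})$ between the neighboring rescaled entries) is a detail the paper leaves implicit, and it is needed for Corollary~\ref{corol:orderingscoincide} to apply; including it is a genuine improvement in rigor.
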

\begin{proof}
	Since $\v{p} \ggcurly_{\v{\gamma}} \v{q}$ implies $\v{p} \succ_{\v{\gamma}} \v{q}$, and furthermore we have $\v{\pi}(\v{q})= \v{\pi}(\v{p})$, we can use Theorem~\ref{thm:perry}. Its proof shows that there is a sequence of elementary thermalizations such that 
	\begin{equation}
	\v{q} = T^{i_k,j_k}(\lambda_k) \cdots T^{i_1,j_1}(\lambda_1)\v{p},
	\end{equation}
	and, furthermore, that all intermediate states have thermomajorization ordering equal to $\v{\pi}(\v{p})$. Then, we conclude from Lemma~\ref{lem:action} that the thermomajorization curve of $\v{q}$ can be obtained from that of $\v{p}$ by lowering a set of its $y$-coordinates while making the slopes of the $i$-th and $(i+1)$-th segments equal (no re-orderings are involved, so the $x$-coordinates do not change).
	
	Now let us focus on $\v{q}^{i,i+1}:=	T^{\pi_i(\v{p}),\pi_{i+1}(\v{p})}(1)\v{p}$. Using again Lemma~\ref{lem:action}, the thermomajorization curve of the state $\v{q}^{i,i+1}$ is obtained by lowering the coordinate $y_{i+1}$ till the slopes of the $i$-th and $(i+1)$-th segments are equal, while leaving all other $y$-coordinates untouched.  Crucially, note that this is the minimal $y$-coordinate lowering for any $\v{q}$ that ensures Eq.~\eqref{eq:equalizeslopes} is satisfied. From this property and the fact that $\v{\pi}(\v{q})= \v{\pi}(\v{q}^{i,i+1})$, it follows that $\v{q^{i,i+1}} \succ_{\v{\gamma}} \v{q}$. That is because, since $\v{\pi}(\v{q})= \v{\pi}(\v{q}^{i,i+1})$ implies that the $x$-coordinates of the thermomajorization curves of $\v{q}$ and $\v{q}^{i,i+1}$ coincide, $\succ_{\v{\gamma}}$ is equivalent to simply comparing the $y$-coordinates of the respective thermomajorization curves.  Using Corollary~\ref{corol:orderingscoincide} we conclude $\v{q}^{i,i+1} \ggcurly_{\v{\gamma}} \v{q}$. 
\end{proof}

We are now ready to prove the central lemma from which the proof of Theorem~\ref{thm:finite} follows almost directly.

\begin{lem}
	\label{lem:optimal_path}
	Given a canonical sequence $\{\v{\pi}^k\}_{k=1}^{N}$ with $\v{\pi}^1=\v{\pi}(\v{p})$ and $\v{\pi}^N=\v{\pi}(\v{q})$, the following two statements are equivalent:
	\begin{enumerate}
		\item There exists a thermomajorizing trajectory $\v{r}(t)$ from $\v{p}$ to $\v{q}$ with a canonical coarse-grained description $\{\v{\pi}^k\}_{k=1}^{N}$.
		\item The following relation holds:
		\begin{equation}
		\label{eq:plttrajectory2}
		\prod_{k=1}^{N-1} T^{i_k,j_k}(1) \v{p} \succ_{\v{\gamma}} \v{q},
		\end{equation}
		where $T^{i_k,j_k}(1)$ are full thermalizations between levels $i_k$ and $j_k$ specified by the adjacent elements that differ between $\v{\pi}^k$ and $\v{\pi}^{k+1}$.
	\end{enumerate}
\end{lem}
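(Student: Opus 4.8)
\textbf{Overall strategy.} The plan is to prove the two implications separately, using Lemma~\ref{lem:action} (the geometric action of $T^{i,j}$ on the Lorenz curve) and Lemma~\ref{lem:bestcrossing} (optimality of adjacent-level thermalizations as crossings) as the main engines, together with Corollary~\ref{corol:orderingscoincide} to stitch together trajectories that stay inside a single $\v{\gamma}$-ordering. The direction $(2)\Rightarrow(1)$ is the easy one: given Eq.~\eqref{eq:plttrajectory2}, one exhibits an explicit thermomajorizing trajectory. The direction $(1)\Rightarrow(2)$ is the crux, and it is where the optimality argument of Lemma~\ref{lem:bestcrossing} does the heavy lifting: one must show that among all ways of crossing from ordering $\v{\pi}^k$ to $\v{\pi}^{k+1}$, applying the full adjacent thermalization $T^{i_k,j_k}(1)$ is the ``most generous'' choice, i.e. it produces the state that continuously thermomajorizes the most.

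\textbf{Proof of $(2)\Rightarrow(1)$.} Suppose $\v{f}:=\prod_{k=1}^{N-1}T^{i_k,j_k}(1)\v{p}\succ_{\v{\gamma}}\v{q}$. First note that each partial product $\v{f}^{(m)}:=\prod_{k=1}^{m}T^{i_k,j_k}(1)\v{p}$ has $\v{\gamma}$-ordering exactly $\v{\pi}^{m+1}$ (or is a boundary state belonging to both $\v{\pi}^m$ and $\v{\pi}^{m+1}$): this is because $i_k,j_k$ are the two \emph{adjacent} positions in $\v{\pi}^k$ that differ from $\v{\pi}^{k+1}$, so by Lemma~\ref{lem:action} applying $T^{i_k,j_k}(1)$ merely equalizes the slopes of those two adjacent segments, swapping their order. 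Then, by Lemma~\ref{lem:plt} applied to each single elementary thermalization, $\v{f}^{(m-1)}\ggcurly_{\v{\gamma}}\v{f}^{(m)}$, and concatenating the short paths that realize $T^{i_k,j_k}(\lambda)$ as $\lambda$ runs from $0$ to $1$ gives a thermomajorizing trajectory from $\v{p}$ to $\v{f}$ whose coarse-grained description is precisely $\{\v{\pi}^k\}_{k=1}^{N}$ (each segment parametrized by $\lambda\in[0,1]$ lives in one ordering, with a boundary crossing at each $\lambda=1$). Finally, since $\v{f}\succ_{\v{\gamma}}\v{q}$ and $\v{\pi}(\v{f})=\v{\pi}(\v{q})=\v{\pi}^N$, Corollary~\ref{corol:orderingscoincide} gives $\v{f}\ggcurly_{\v{\gamma}}\v{q}$ via a trajectory lying entirely in $\v{\pi}^N$; appending it produces the desired canonical thermomajorizing trajectory from $\v{p}$ to $\v{q}$.

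\textbf{Proof of $(1)\Rightarrow(2)$.} Let $\v{r}(t)$ be a thermomajorizing trajectory from $\v{p}$ to $\v{q}$ with canonical coarse-grained description $\{\v{\pi}^k\}_{k=1}^N$ and crossing times $\{t_k\}_{k=0}^N$. The plan is an induction on $k$ showing that $\v{f}^{(k)}:=\prod_{s=1}^{k}T^{i_s,j_s}(1)\v{p}\ggcurly_{\v{\gamma}}\v{r}(t_k)$ and $\v{\pi}(\v{f}^{(k)})=\v{\pi}^{k+1}$. The base case $k=0$ is trivial. For the inductive step, by hypothesis $\v{f}^{(k-1)}\ggcurly_{\v{\gamma}}\v{r}(t_{k-1})\succ_{\v{\gamma}}\v{r}(t_k)$, and both $\v{f}^{(k-1)}$ and $\v{r}(t_k)$ have the crossing state $\v{r}(t_k)$ associated with the adjacent transposition from $\v{\pi}^k$ to $\v{\pi}^{k+1}$; in particular $\v{r}(t_k)$ satisfies the slope-equalization condition Eq.~\eqref{eq:equalizeslopes} with respect to the $\v{\gamma}$-ordering $\v{\pi}^k$ and the levels $i_k,j_k$. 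The key point is then to apply Lemma~\ref{lem:bestcrossing}: it tells us $T^{i_k,j_k}(1)\,\v{r}'\ggcurly_{\v{\gamma}}\v{r}(t_k)$ for any $\v{r}'$ with $\v{\pi}(\v{r}')=\v{\pi}^k$ and $\v{r}'\ggcurly_{\v{\gamma}}\v{r}(t_k)$ (with the slope condition). We want to take $\v{r}'=\v{f}^{(k-1)}$, but $\v{f}^{(k-1)}$ need not itself satisfy Eq.~\eqref{eq:equalizeslopes} — rather, it thermomajorizes a state that does. One resolves this by first noting that within ordering $\v{\pi}^k$ one can flow $\v{f}^{(k-1)}$ down to some $\tilde{\v{r}}$ satisfying the slope condition with $\tilde{\v{r}}\succ_{\v{\gamma}}\v{r}(t_k)$ (this is the subtle bookkeeping: one checks using Lemma~\ref{lem:action} that lowering precisely the $y$-coordinate between segments $i_k$ and $i_k{+}1$ to equalize slopes is the minimal such lowering, so the resulting $\tilde{\v{r}}=T^{i_k,j_k}(1)\v{f}^{(k-1)}=\v{f}^{(k)}$ still thermomajorizes anything satisfying that slope condition that was thermomajorized by $\v{f}^{(k-1)}$, in particular $\v{r}(t_k)$). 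Hence $\v{f}^{(k)}\succ_{\v{\gamma}}\v{r}(t_k)$ with matching ordering $\v{\pi}^{k+1}$, and by Corollary~\ref{corol:orderingscoincide} $\v{f}^{(k)}\ggcurly_{\v{\gamma}}\v{r}(t_k)$, closing the induction. Taking $k=N-1$ and recalling $\v{\pi}^N=\v{\pi}(\v{q})$, $\v{r}(t_N)=\v{q}$, we get $\v{f}^{(N-1)}=\prod_{k=1}^{N-1}T^{i_k,j_k}(1)\v{p}\succ_{\v{\gamma}}\v{q}$, which is Eq.~\eqref{eq:plttrajectory2}.

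\textbf{Main obstacle.} The delicate step is the inductive step of $(1)\Rightarrow(2)$: making rigorous the claim that $\v{f}^{(k)}=T^{i_k,j_k}(1)\v{f}^{(k-1)}$ is an \emph{optimal} crossing, i.e. that it thermomajorizes $\v{r}(t_k)$ even though $\v{f}^{(k-1)}$ is only known to thermomajorize (not equal) the actual crossing state of the trajectory. This requires carefully combining the monotonicity ``$\v{f}^{(k-1)}\succ_{\v{\gamma}}\v{r}(t_{k-1})\succ_{\v{\gamma}}\v{r}(t_k)$'' with the extremal/minimality property of the adjacent full thermalization from Lemma~\ref{lem:action} and Lemma~\ref{lem:bestcrossing}, keeping track of the fact that all relevant states share the same $\v{\gamma}$-ordering $\v{\pi}^k$ on one side of the crossing and $\v{\pi}^{k+1}$ on the other, so that thermomajorization reduces to pointwise comparison of $y$-coordinates of the Lorenz curves with fixed $x$-coordinates.
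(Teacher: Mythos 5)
Your proof is correct and follows essentially the same route as the paper's: both directions rest on Lemma~\ref{lem:bestcrossing} for the optimal crossings, Lemma~\ref{lem:action} for the Lorenz-curve geometry, and Corollary~\ref{corol:orderingscoincide} together with Lemma~\ref{lem:plt} for the final within-ordering leg, with your induction on the partial products $\v{f}^{(k)}$ being just a cleaner bookkeeping of the paper's chained $\llcurly_{\v{\gamma}}$ relations. One small remark: the detour in your inductive step is unnecessary, because the slope-equalization condition Eq.~\eqref{eq:equalizeslopes} in Lemma~\ref{lem:bestcrossing} is imposed on the \emph{target} $\v{q}=\v{r}(t_k)$ (where it holds automatically at a crossing state), not on the source $\v{f}^{(k-1)}$, so the lemma applies directly once you have $\v{f}^{(k-1)}\ggcurly_{\v{\gamma}}\v{r}(t_k)$ and matching $\v{\gamma}$-orderings.
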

\begin{proof}
	We will prove that the implication holds both ways.
	
	\bigskip
	
	$[1\Rightarrow 2]$ Let $t_k$ for $k\in\{0,\dots,N\}$ be the times from Definition~\ref{def:coarsegraining}, i.e., $t_0=0$, $t_N=t_f$ and the remaining ones describing times for which $\v{r}(t)$ changes $\v{\gamma}$-ordering. By definition, there exist indices $s_k$ for $k\in\{1,\dots,N-1\}$ such that $\v{\pi}^k$ and $\v{\pi}^{k+1}$ differ only by a transposition of the two adjacent entries, $s_k$ and $s_{k}+1$. Let us denote the corresponding energy levels that change order in the $\v{\gamma}$-ordering by:
	\begin{equation}
		i_{k}:= \pi_{s_k}(\v{r}(t_{k})), \quad j_{k}:= \pi_{s_k+1}(\v{r}(t_{k})).
	\end{equation}  
	
	By assumption, for $k\in\{0,\dots,N-1\}$ we have $\v{r}(t_{k})  \ggcurly_{\v{\gamma}} \v{r}(t_{k+1})$ and
	\begin{equation}
		\v{\pi}(\v{r}(t_{k}))= \v{\pi}^{k+1}, \quad \v{\pi}(\v{r}(t_{k+1}))=\v{\pi}^{k+1}.	
	\end{equation}
	Furthermore, for $k\in\{0,\dots,N-2\}$
	\begin{equation}
		\frac{r(t_{k+1})_{i_{k+1}}}{\gamma_{i_{k+1}}} = 	\frac{r(t_{k+1})_{j_{k+1}}}{\gamma_{j_{k+1}}}.
	\end{equation} 
	We are then in the conditions to apply Lemma~\ref{lem:bestcrossing} to conclude that for $k\in\{0,\dots,N-2\}$ we have
	\begin{equation}
		T^{i_{k+1},j_{k+1}}(1)\v{r}(t_k) \ggcurly_{\v{\gamma}} \v{r}(t_{k+1}).
	\end{equation}
	Applying the above sequentially we arrive at
	\begin{align}
		\v{r}(t_{N-1}) & \llcurly_{\v{\gamma}}   T^{i_{N-1},j_{N-1}}(1)\v{r}(t_{N-2}) \nonumber\\
		& \llcurly_{\v{\gamma}}  T^{i_{N-1},j_{N-1}}(1) T^{i_{N-2},j_{N-2}}(1)\v{r}(t_{N-3}) \nonumber\\
		& \llcurly_{\v{\gamma}} \dots \nonumber\\
		& \llcurly_{\v{\gamma}} \prod_{k=1}^{N-1} T^{i_k,j_k}(1) \v{r}(t_0=0).
	\end{align}
	Now, recall that $\v{r}(t_{N-1})	\ggcurly_{\v{\gamma}} \v{r}(t_{N}) = \v{q}$ and that \mbox{$\v{r}(0) = \v{p}$}. Finally, using transitivity and the fact that relation $\ggcurly_{\v{\gamma}}$ implies the relation $\succ_{\v{\gamma}}$, we conclude that
	\begin{equation}
		\prod_{k=1}^{N-1} T^{i_k,j_k}(1) \v{p} \succ_{\v{\gamma}} \v{q}.
	\end{equation}

		\bigskip
		
	$[2\Rightarrow 1]$ For $t \in [0, N-1]$ define the trajectory 
	\begin{equation}
		\v{r}(t) = T^{i_k,j_k}(\delta) T^{i_{k-1},j_{k-1}}(\lambda_{k-1}) \dots T^{i_{1},j_{1}}(\lambda_1)\v{p}, 
	\end{equation}      
	with $k$ and $\delta \in [0,1]$ satisfying \mbox{$t = \delta + \sum_{i=1}^{k-1} \lambda_i$}. Note that the resulting trajectory $\v{r}(t)$ has a coarse-grained description given by $\{\v{\pi}^k\}_{k=1}^{N}$ and that
	\begin{equation}
		\v{p} \ggcurly_{{\v{\gamma}}} \v{r}(N-1).
	\end{equation} 
	Now, from the assumption we have that \mbox{$\v{r}(N-1) \succ_{\v{\gamma}} \v{q}$} and \mbox{$\v{\pi}(\v{r}(N-1)) = \v{\pi}(\v{q})$}. By Corollary~\ref{corol:orderingscoincide} we thus have that $\v{r}(N-1)	\ggcurly_{\v{\gamma}}  \v{q}$, and that the trajectory connecting these two states has a fixed $\v{\gamma}$-ordering. Hence, $\v{p} \ggcurly_{{\v{\gamma}}} \v{r}(N-1) \ggcurly_{{\v{\gamma}}} \v{q}$ and, by transitivity, $\v{p}  \ggcurly_{{\v{\gamma}}}\v{q}$. Therefore, we conclude that there exists a thermomajorizing trajectory from $\v{p}$ to $\v{q}$ with a canonical coarse-grained description given by $\{\v{\pi}^k\}_{k=1}^{N}$.
\end{proof}

We are now ready to present the proof of Theorem~\ref{thm:finite}.

\begin{proof}
	First, assume that $\v{p} \ggcurly_{\v{\gamma}} \v{q}$. Then, from Lemma~\ref{lem:canonical_path}, we know that there exists a thermomajorizing path $\v{r}(t)$ connecting $\v{p}$ and $\v{q}$ with some canonical coarse-grained description. Moreover, from Lemma~\ref{lem:optimal_path} we know that such a path with a given coarse-grained canonical description exists if and only if Eq.~\eqref{eq:plttrajectory2} is satisfied. Thus, if we list all possible canonical coarse-grained paths connecting $\v{p}$ and $\v{q}$, then for at least one of them Eq.~\eqref{eq:plttrajectory2} must be satisfied.
	
	Conversely, if Eq.~\eqref{eq:plttrajectory2} is satisfied for some canonical coarse-grained path, then from Lemma~\ref{lem:optimal_path} we have \mbox{$\v{p} \ggcurly_{\v{\gamma}} \v{q}$}.
\end{proof}
	

\section{Remarks on fundamental constraints on coherence}
\label{app:coherence}

In this paper we focused on necessary and sufficient conditions and explicit protocols to generate a given population dynamics. What about the characterization of the evolution of coherences, i.e., the off-diagonal elements of the density matrix in the energy basis? Here we provide some general remarks on this notoriously complex issue. 

It has been recognized that the thermodynamic evolution of quantum coherence is restricted by symmetry considerations~\cite{lostaglio2015description}. These involve an extension of Noether's theorem to open quantum system dynamics~\cite{marvian2014extending}. The properties~\ref{p1}-\ref{p2} then allow to construct monotonically non-increasing or non-decreasing functionals (\emph{monotones}) which quantify the deterioration of athermality and coherence during a thermalization process through entropy production-like inequalities. The standard entropy production can be seen as one monotone, to be accompanied by many more which often have independent operational or information-theoretical meaning. Let us briefly discuss some notable examples.

First, we have the $\alpha$-R\'{e}nyi entropy production relations:
\begin{equation}
	\forall \alpha\in[0,\infty):\quad \frac{d\Sigma_\alpha(t)}{dt} =  -\frac{dS_{\alpha}(\rho(t)\| \gamma)}{dt} \geq 0,
\end{equation}
where $\gamma$ is the thermal state from Eq.~\eqref{eq:thermalstate} and $S_\alpha$ is the quantum $\alpha$-R\'{e}nyi divergence defined by
\begin{equation}
\!\!	S_\alpha(\rho\|\sigma) = \def\arraystretch{3}
	\left\{
	\begin{array}{ll}
		\dfrac{\log \tr{\rho^\alpha \sigma^{1-\alpha}}}{\alpha-1}  & \text{~if} \; \alpha \in [0,1),\\
		\dfrac{\log \tr{ \sigma^{\frac{1-\alpha}{2\alpha}}\rho \sigma^{\frac{1-\alpha}{2\alpha}}}}{\alpha-1}  & \text{~if} \; \alpha >1,
	\end{array}    
	\right.   
\end{equation}
whose origin in quantum information lies in quantum hypothesis testing~\cite{mosonyi2015quantum}. The usual entropy production relation of Eq.~\eqref{eq:entropy_prod} is recovered in the limit $\alpha \rightarrow 1$. Next, we have the $\alpha$-relative entropy of asymmetry $A_\alpha(t)$~\cite{marvianthesis, lostaglio2015description},
\begin{equation}
\!\!\forall \alpha\in[0,\infty):\quad\!\!\!-\frac{d A_\alpha(t)}{dt}:= -\frac{d S_\alpha(\rho(t)\| \mathcal{D}(\rho))}{dt} \geq 0, 
	\label{eq:asymmetry}
\end{equation}
with $\mathcal{D}$ denoting the dephasing in the energy basis. Asymmetry is a measure of coherence of the state in the basis of $H$. For $\alpha =1$ the usual entropy production relation can be decomposed as
\begin{equation}
	\frac{d\Sigma(t)}{dt} = \frac{d\Sigma_d(t)}{dt} - \frac{d A_1(t)}{dt},
\end{equation}
where $\Sigma_d=\sum_i p_i\log(p_i/\gamma_i)$ is the diagonal relative entropy. Hence, $- d A_1(t)/dt\geq 0$ can be seen as one of two monotonically non-decreasing additive terms in the standard entropy production equation, measuring the entropic contribution due to loss of quantum coherence~\cite{lostaglio2015description}. This constitutes a refinement of the usual entropy production inequality, since both $\Sigma_d(t)$ and $-A_1(t)$ do not decrease in time, not just their sum~\cite{santos2019role}. The quantity $A_1(t)$ operationally quantifies the coherent contribution in an average work extraction protocol~\cite{baumer2018fluctuating}, that is the loss to the maximal extractable work due to dephasing of the state. Another example is given by the Fisher information $\mathcal{Q}(\rho(t))$ of the unitary orbit $\{ e^{-i H \delta} \rho(t) e^{i H \delta}\}_{\delta \in \mathbb{R}}$~\cite{janzing2003quasi, kwon2018clock},
\begin{equation}
	-\frac{d\mathcal{Q}(t)}{dt} \geq 0,
\end{equation} 
where 
\begin{equation}
	\mathcal{Q}(t) = 2-2\lim_{\delta \rightarrow 0} F^2[\rho(t), e^{-i H \delta} \rho(t) e^{i H \delta})]
\end{equation}
and $F(\rho,\sigma) := \mathrm{Tr}(\sqrt{\rho^{1/2} \sigma \rho^{1/2}})$ is the quantum fidelity. The quantity $\mathcal{Q}(t)$ measures the information that $\rho(t)$ encodes about the phase along the unitary orbit generated by $H$, i.e. the metrological value of $\rho(t)$ for phase estimation. Finally, we have the Wigner-Yanase-Dyson skew information~\cite{wigner1963information, marvianthesis},
\begin{equation}
	-\frac{d I_s(t)}{dt} \geq 0, 
\end{equation}
with
\begin{equation}
	I_s(t) = -\frac{1}{2} \tr{[\rho^{s}(t),H][\rho^{1-s}(t),H]},
\end{equation}
where $s \in (0,1)$. The quantity $I_s(t)$ was introduced as a measure of the information contained in measurements that do not commute with $H$~\cite{wigner1963information} (for $s=1/2$ one recovers the usual Wigner-Yanase skew information). A systematic framework to deal with all these constructions was developed in Ref.~\cite{gour2018quantum}. Leveraging the results obtained there, one can formally define a complete set of monotones. The drawback is that these form an infinite number of extremely involved conditions, and it is not yet clear how these can be simplified. 

An alternative approach is to obtain specific computable constraints. For example, exploiting properties~\ref{p1}-\ref{p2} and the framework of Ref.~\cite{styliaris2020symmetries} (see in particular Eq.~(22) therein), one can obtain the following one parameter family of monotones:
\begin{equation}
	\frac{d\mathcal{C}_\lambda(t)}{dt} \geq 0, \quad \mathcal{C}_\lambda(t) = \frac{-Z (E_j - E_i)^2}{\lambda e^{-\beta E_i} + e^{-\beta E_j}} |C_{ij}(t)|^2,
\end{equation}
where $C_{ij}(t)$ is the coherence matrix defined in Eq.~\eqref{eq:population_coherence2} and $\lambda \geq 0$. Note how each coherence element $|C_{ij}(t)|^2$ is weighted by a Gibbs factor and the energy difference squared, neatly combining energetic and coherent considerations. 

The general problem of Eq.~\eqref{eq:mtp} can be explicitly solved for a single qubit system using the minimal decoherence theory of Ref.~\cite{lostaglio2017markovian} (see Sec.~E3 therein). However, for higher dimensional systems currently there are no tools to obtain a solution to this problem (perhaps up to some approximation) that satisfies both our fundamental desiderata. We leave this extremely challenging question to future work.


\bibliography{Bibliography}

\end{document}